
\documentclass[sigconf,10pt]{acmart}

\settopmatter{printacmref=true}


\usepackage{balance}

\def\BibTeX{{\rm B\kern-.05em{\sc i\kern-.025em b}\kern-.08emT\kern-.1667em\lower.7ex\hbox{E}\kern-.125emX}}
    
%

\usepackage{amsmath,amssymb,amsfonts,dsfont,multirow,multicol,epsfig,url,array,makecell,balance,color,epstopdf}
\usepackage[ruled, vlined, linesnumbered]{algorithm2e}
\usepackage{hhline}
\usepackage{array}
\usepackage{enumerate}
\usepackage{enumitem}
\usepackage{subfigure}
\usepackage{booktabs}

\def\header{\vspace{2mm} \noindent}

\def\tblcapup{\vspace{0mm}}
\def\tblcapdown{\vspace{2mm}}

\newcommand{\pushright}[1]{\ifmeasuring@#1\else\omit\hfill$\displaystyle#1$\fi\ignorespaces}
\newcommand{\pushleft}[1]{\ifmeasuring@#1\else\omit$\displaystyle#1$\hfill\fi\ignorespaces}

\def\done{\hspace*{\fill} {$\square$}}

\def\e{\varepsilon}
\def\d{\bar{d}}
\def\s{\hat{s}}

\def\E{\mathrm{E}}
\def\Var{\mathrm{Var}}

\def\scw{\sqrt{c}}

\def\prsim{PRSim\xspace}

\def\inN{\mathcal{I}}
\def\outN{\mathcal{O}}
\def\din{d_{in}}

\def\W{\mathcal{W}}

\def\epi{\hat{\pi}}

\def\pib{\psi}

\def\rb{r}

\def\heta{\widehat{\eta\pi}}

\def\brmax{r_{max}}

%
%
%
%
%
%
%
%
%
%
%
%
%


\begin{document}
\copyrightyear{2019} 
\acmYear{2019} 
\setcopyright{acmcopyright}
\acmConference[SIGMOD '19]{2019 International Conference on Management of Data}{June 30-July 5, 2019}{Amsterdam, Netherlands}
\acmBooktitle{2019 International Conference on Management of Data (SIGMOD '19), June 30-July 5, 2019, Amsterdam, Netherlands}
\acmPrice{15.00}
\acmDOI{10.1145/3299869.3319873}
\acmISBN{978-1-4503-5643-5/19/06}
\begin{CCSXML}
<ccs2012>
<concept>
<concept_id>10002950.10003624.10003633.10010917</concept_id>
<concept_desc>Mathematics of computing~Graph algorithms</concept_desc>
<concept_significance>500</concept_significance>
</concept>
<concept>
<concept_id>10002951.10003227.10003351</concept_id>
<concept_desc>Information systems~Data mining</concept_desc>
<concept_significance>500</concept_significance>
</concept>
</ccs2012>
\end{CCSXML}

\ccsdesc[500]{Mathematics of computing~Graph algorithms}
\ccsdesc[500]{Information systems~Data mining}

\keywords{SimRank; Power-Law Graphs; Personalized PageRank}

\fancyhead{}
\title{\prsim: Sublinear Time SimRank Computation on Large Power-Law Graphs}
\subtitle{[Technical Report]}
 
\author{Zhewei Wei}
\email{zhewei@ruc.edu.cn}
\authornote{Work partly done at Beijing Key Laboratory of Big
    Data Management and Analysis Method, Renmin University of
    China.}
  \affiliation{%
  \institution{School of Information, DEKE MOE, Renmin University of China}
}

\author{Xiaodong He}
\email{hexiaodong\_1993@ruc.edu.cn}
\affiliation{%
  \institution{4Paradigm Inc.}
   \city{Beijing}
 \country{China}
}

\author{Xiaokui Xiao}
\email{xkxiao@nus.edu.sg}
\affiliation{%
  \institution{School of Computing,
National University of Singapore}
}

\author{Sibo Wang}
\email{swang@se.cuhk.edu.hk}
\affiliation{%
  \institution{The Chinese University of Hong Kong}
}

\author{Yu Liu}
\email{dokiliu@pku.edu.cn}
\affiliation{%
  \institution{Peking University}
}

\author{Xiaoyong Du}
\author{Ji-Rong Wen}
\authornote{Ji-Rong Wen is the corresponding author.}
\email{{duyong,jrwen}@ruc.edu.cn}
\affiliation{%
  \institution{Renmin University of China}
}

\begin{abstract}

{\it SimRank} is a classic measure of the similarities of nodes in a graph. Given a node $u$ in graph $G =(V, E)$, a {\em single-source SimRank query} returns the SimRank similarities $s(u, v)$ between node $u$ and each node $v \in V$. This type of queries has numerous applications in web search and social networks analysis, such as link prediction, web mining, and spam detection. Existing methods for single-source SimRank queries, however, incur query cost at least linear to the number of nodes $n$, which renders them inapplicable for real-time and interactive analysis.

{ This paper proposes \prsim, an algorithm that exploits the structure of graphs to efficiently answer single-source SimRank queries. \prsim uses an index of size $O(m)$, where $m$ is the number of edges in the graph,  and guarantees a query time that depends on the {\em reverse PageRank} distribution of the input graph. In particular, we prove that \prsim runs in sub-linear time  if the degree distribution of the input graph follows the power-law distribution, a property possessed by many real-world graphs. Based on the theoretical analysis, we show that the empirical query time of all existing SimRank algorithms also depends on the reverse PageRank distribution of the graph.} Finally, we present the first experimental study that evaluates the absolute errors of various SimRank algorithms on large graphs, and we show that \prsim outperforms the state of the art in terms of query time, accuracy, index size, and scalability.
\end{abstract}


\maketitle

\section{Introduction} \label{sec:intro}
Measuring similarities and proximities of nodes in the graph is a classic task in graph analytics.  Several link-based similarity measures have been proposed, including Personalized PageRank~ \cite{page1999pagerank}, Simfusion~\cite{xi2005simfusion},  P-rank~\cite{zhao2009p} and Panther~\cite{zhang2015panther}. Among them, {\em SimRank} \cite{JW02}, proposed by Jeh and Widom, is regarded as one of the most influential similarity measures, and  has been adopted in numerous applications such as web mining \cite{Jin11}, social network analysis \cite{NK07}, and spam detection \cite{SH11}.
Given a graph $G=(V,E)$, the SimRank similarity of nodes  $u$ and $v$, denoted as $s(u, v)$, is defined as
\begin{equation} \label{eqn:intro-simrank}
s(u, v) =
\begin{cases}
1, & \text{if $u = v$}\\
{\displaystyle \frac{c}{|\inN(u)| \cdot |\inN(v)|} \hspace{-1mm}\sum_{u' \in \inN(u)}{\sum_{v' \in \inN(v)}{\hspace{-2mm} s(u', v')}} }, & \text{otherwise}
\end{cases}
\end{equation}
where $\inN(u)$ denotes the set of in-neighbors of $u$, and $c \in (0,1)$ is a decay factor typically set to 0.6 or 0.8 \cite{JW02,LVGT10}. This formulation is based on two intuitive statements: (1) two objects are similar if they are referenced by similar objects, and (2) an object is most similar to itself.
Due to its recursive nature, SimRank computation is a non-trivial problem and has been extensively studied for more than a decade. Existing work mostly considers three types of SimRank queries: (1) {\em Single-pair} queries, which ask for the SimRank similarity between two given nodes  $u $ and  $v$; (2) {\em All-pair} queries, which ask for the SimRank similarity between any pair of nodes  $u $ and  $v$; (3) {\em Single-source} queries, which ask for the SimRank similarity between every node and $u$. All-pair queries require storing $O(n^2)$ node pairs, and thus is infeasible for large graphs. Meanwhile, single-source queries has become the focus of recent research~\cite{KMK14,MKK14,TX16,FRCS05,LeeLY12,LiFL15,SLX15,YuM15b,LiFL15,jiang2017reads,liu2017probesim}, due to its connections to recommendation applications. In this paper, we aim to answer {\em approximate} single-source SimRank queries, defined as follows:
\begin{definition}[Approximate Single-Source Queries] \label{def:prelim-single-source}
Given a node $u$ in a directed graph $G$ and an absolute error threshold $\e$, an approximate single-source SimRank query returns an estimated value $\s(u, v)$ for each node $v$ in $G$, such that
$$\left\lvert \s(u, v) - s(u, v) \right\rvert \le \e$$
holds for any $v$ with at least { $1 - \delta$} probability. \done
\end{definition}


\begin{table*} [!t]
\centering
\renewcommand{\arraystretch}{1.5}
\begin{small}
  \vspace{-1mm}
\caption{Comparison of single-source SimRank algorithms with $\boldsymbol{\e}$ additive error and $\boldsymbol{1 -  \delta}$ success probability.}\label{tbl:intro-compare}
\vspace{-3mm}
 \begin{tabular} {|c|c|cc|c|c|} \hline
   Algorithm&  Query Time &\multicolumn{2}{c|}{Query Time (Power-Law Graphs)} & Space Overhead& Preprocessing Time \\ \hline
          \multirow{3}{*}{\prsim}  &  \multirow{3}{*}{ $O\left({n\log {n\over \delta}  \over \e^2} \cdot
\sum_{w \in V} \pi(w)^2 \right)$} & \multicolumn{1}{l}{$O\left({ \log {n\over \delta} / \e^2}
             \right)$} & \multicolumn{1}{l|}{ for $ \gamma > 2$}  &  \multirow{3}{*}{$O\left(\min\{n/\e, m\}\right)$}
                                         & \multirow{3}{*}{$O\left({m/  \e}
                                           \right)$}  \\ 
      & & \multicolumn{1}{l}{$O\left({ \log {n \over \delta} \cdot \log n / \e^2}
             \right)$} & \multicolumn{1}{l|}{ for $\gamma  = 2$}&
                                         &  \\ 
     &  &\multicolumn{1}{l}{$O\left( \min \left\{ {n^{1\over \gamma}/ \e^{2-{1\over \gamma}}}
          ,  {n^{{2\over \gamma} -1}/ \e^2}\right\}
             \right)$} & \multicolumn{1}{l|}{ for $1 < \gamma < 2$}&
                                         &  \\ \hline
   TSF~\cite{SLX15} & \multicolumn{3}{c|}{$O\left({n\log {n\over \delta} / \e^2} \right)$} &  $O\left({n\log {n\over \delta} /
                                              \e^2} \right)$
                                         &$O\left({n\log {n\over \delta} / \e^2}
                                           \right)$ \\ \hline
 READS~\cite{jiang2017reads} & \multicolumn{3}{c|}{$O\left({n\log {n\over \delta} / \e^2} \right)$} &  $O\left({n\log {n\over \delta} /
                                              \e^2} \right)$
                                         &$O\left({n\log {n\over \delta} / \e^2}
                                           \right)$ \\ \hline
   ProbeSim~\cite{liu2017probesim} & \multicolumn{3}{c|}{$O\left({n\log {n\over \delta} / \e^2} \right)$} &  0 & 0 \\ \hline
   SLING~\cite{TX16} & \multicolumn{3}{c|}{$O\left({n/ \e} \right)$} &  $O\left({n/ \e} \right)$
                                         & $O\left({m/  \e}
                                           +{n\log {n\over \delta} / \e^2}
                                           \right)$  \\ \hline
 \end{tabular}
\end{small}
\vspace{-3mm}
\end{table*}

{ \vspace{-2mm}
\header{\bf Power-law graphs.}  It was experimentally observed that most real-world networks are scale-free and follow
power-law degree distribution. In particular, let $P_o(k)$ and $P_i(k)$
denote the fraction of nodes in the graph having out-degree and in-degree at least $k$, respectively. Then,  on a {\em power-law graph}, $P_o(k)$ and $P_i(k)$ satisfy that
$P_o(k) \sim k^{-\gamma}$ and $P_i(k) \sim k^{-\gamma'}$~\cite{BollobasBCR03},
where $\gamma$ and $\gamma'$ are the (cumulative) power-law exponents that
usually take values from $1$ to $2$. Recent work has demonstrated that by exploiting this fact, 
we can improve the asymptotic bounds for various graph algorithms such as triangle counting~\cite{brach2016algorithmic}, transitive closure~\cite{brach2016algorithmic}, perfect matching~\cite{brach2016algorithmic}, PageRank computation~\cite{lofgren2015personalized,wei2018topppr} and maximum independent set~\cite{liu2015towards}.
}

\vspace{-1mm}
\header{\bf Motivations.}
{ Since many graph algorithms can benefit from the structure of real-world graphs, a natural question is: Can we do the same for SimRank algorithms? On one hand, we are interested in designing a more efficient SimRank algorithm by exploiting the structure of the graphs, since existing work for SimRank computation \cite{KMK14,MKK14,TX16,FRCS05,LeeLY12,LiFL15,SLX15,YuM15b,LiFL15,jiang2017reads,liu2017probesim} has  missed this opportunity for optimization.}
{
  On the other hand,  we are also interested in analyzing how the graph structure affects the performance of existing SimRank algorithms. More precisely,
  it has been observed in previous work~\cite{zhangexperimental} that the performance of existing SimRank algorithms
  may vary dramatically on graphs with similar numbers of nodes and edges. A typical example is the {\em Twitter (TW)} and {\em IT-2004 (IT)} data sets, both of which have around 40 million nodes and 1 billion edges. However, as shown in~\cite{zhangexperimental}  and in our experiments, the query times of most SimRank algorithms are significantly smaller on {\em IT-2004} than on {\em Twitter}.  Based on this phenomenon, ~\cite{zhangexperimental} suggests that {\em Twitter (TW)}  is ``locally dense''  and {\em IT-2004 (IT)} is ``locally sparse''. However, it is still desirable to obtain a quantifiable measure that describes the hardness of each graph in terms of SimRank computation.} Finally, since obtaining ground truth for single-source SimRank queries requires $n^2$ space, which is infeasible for large graphs, most existing work only evaluate the accuracy of the algorithms on small graphs. The only exception is recent work~\cite{liu2017probesim}, which evaluates precision for approximate top-$k$ queries on graphs with billion edges using the idea of {\em pooling}. However, there is no prior experimental study that evaluates absolute error for single-source queries on large graphs.

\vspace{-1mm}\header{\bf Our contributions.}
This paper studies the approximate single-source SimRank queries, and makes the following contributions.

{
  (1) We propose \prsim, an algorithm that leverages the graph structure to efficiently answer approximate single-source SimRank queries.  The query time complexity of \prsim is related to the {\em reverse PageRank} of the input graph $G$, which is defined as the PageRank of the graph $G'$ constructed by reversing the direction of each edge in $G$. Let $\pi(w)$ denote reverse PageRank of node $w$, and $\sum_{w\in V} \pi(w)^2 $ denote the second moment of the reverse PageRanks.
The average expected query cost for \prsim on worst-case graphs is bounded by
$O\left({n\log {n\over \delta}  \over \e^2}\cdot \sum_{w\in V} \pi(w)^2 \right)$. By the fact that $\sum_{w\in V} \pi(w)^2  \le \left(\sum_{w\in V} \pi(w)\right)^2 = 1$,
  \prsim provides at least the same complexity as the random walk based algorithms (ProbeSim, TSF, and READS) do on worst-case graphs.  Furthermore, \prsim uses an index of size $O(m)$, which significantly improves the scalability of the algorithm.  See Table~\ref{tbl:intro-compare}  for the theoretical comparison between our algorithm and the state of the art.

  On the other hand, we show that on power-law graphs, the second moment $\sum_{w\in V} \pi(w)^2 $ is an asymptotic variable that is close to $0$, which means \prsim actually  achieves sub-linear query cost on real-world graphs. 
More precisely, 
  Let $\gamma$ denote the cumulative power-law exponent of the out-degree distribution.  We show that
  the average  expected query cost for \prsim on power-law graphs is bounded by:
  \begin{equation}
\label{eqn:query}
\hspace{-2mm}\E[Cost]= \left\{ \hspace{-2mm}
\begin{array}{ll}
O({1\over \e^2} \log {n\over \delta}), &\textrm{for } \gamma > 2; \\
  O({1\over \e^2} \log {n\over \delta} \cdot \log n) , & \textrm{for }\gamma = 2; \\
O\left(\min\left\{{n^{1\over \gamma}\over \e^{2-{1\over \gamma}}}, {n^{{2\over \gamma}-1}\over \e^2}
  \right\} \right) , & \textrm{for } 1 < \gamma < 2,
\end{array}\right.
\end{equation}
for  $ {1\over n^{\Omega(1)}}< \e < 1 $ and $\delta > {1\over n^{\Omega(1)}}$. 
   To understand this complexity, we first note that when $\gamma  \ge 2$, our bounds depend only on $\log n$, which is significantly better than the corresponding bound of any previous SimRank algorithms.  For $1< \gamma < 2$, since $\e > {1\over n}$, we have ${n^{1\over \gamma} \over \e^{2-{1\over \gamma}}} \le {n \over \e}$. This implies that \prsim also outperforms SLING on power-law graphs. To the best of our knowledge, this is the first sublinear algorithm for single-source SimRank queries on power-law graphs.

   (2)  To achieve the desired query cost in Table~\ref{tbl:intro-compare}, we design several novel techniques for computing SimRank and  {\em Personalized PageRank (PPR)} . First,  we propose an algorithm that estimates the {\em last meeting probabilities}~\cite{TX16} (see Section for definition) for ALL nodes in $O(\log {n\over \delta} / \e^2)$ time. This improves the $O(n\log {n\over \delta} / \e^2)$ bounds in~\cite{TX16} by an order of $O(n)$ and is the key to achieve sub-linearity. Second, we propose an index scheme which performs the {\em  backward search}~\cite{lofgren2015personalized} algorithm only on a number $j_0$ of {\em hub nodes}. The parameter $j_0$ enables us to manipulate the tradeoffs between index size and query time, which improves the scalability of our algorithm. 
Finally, we design {\em Variance Bounded Backward Walk},  an algorithm that estimates the  Personalized PageRank values to a given target node $w$  with additive error $\e$ in $O(n\pi(w) \log {n\over \delta}/\e^2)$ time, where $\pi(w)$ is the reverse PageRank of node $w$. Since the average value of $\pi(w)$ is $1/n$, this significantly improves the $O(n\log {n\over \delta}/\e^2)$ time complexity of the {\em Randomized Probe} algorithm~\cite{liu2017probesim}, and is the key to the relation between the time complexity and the reverse PageRank distribution. We also note that the Variance Bounded Backward Walk  algorithm actually improves the time complexity of state-of-the-art PPR algorithms to target nodes for dense graphs~\cite{wang2018efficient}, and may be of independent interest.

(3) Based on the time complexity of \prsim, we conduct experiments to confirm that the hardness of SimRank queries is indeed reversely related to the out-degree power-law exponent $\gamma$ of the graph. This observation provides a quantifiable measure for the concept of locally dense and locally sparse networks introduced in~\cite{zhangexperimental}. In particular, the out-degree distribution of {\em IT-2004}  is significantly more skewed than that of  {\em Twitter} (see Figure~\ref{fig:degree}), 
which explains the performance discrepancy of existing SimRank algorithms on these two datasets. }
 We also conduct a large set of experiments that evaluate \prsim  against the state of the art on benchmark data sets. In particular, our experiments include the first empirical study on the tradeoffs between absolute error and query cost for single-source SimRank algorithms on graphs with billions of edges. Our empirical study shows that \prsim outperforms the state of the art in terms of query time, accuracy, index size, and scalability.

\begin{figure}[!t]
\begin{small}
 \centering
 \vspace{0mm}
  \includegraphics[height=34mm]{./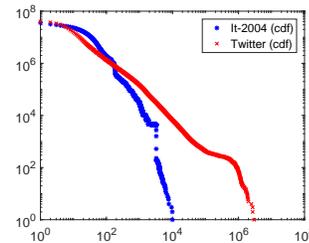}
\vspace{-3mm}
 \caption{Out-degree distributions of IT and TW.} \label{fig:degree}
\vspace{-4mm}
\end{small}
\end{figure}





\begin{table} [t]
\centering
\renewcommand{\arraystretch}{1.3}
\begin{small}
  \tblcapup
  \vspace{0mm}
\caption{Table of notations.}\label{tbl:def-notation}
\vspace{-2mm}
 \begin{tabular} {|l|p{2.2in}|} \hline
   {\bf Notation}       &   {\bf Description}                                       \\ \hline
   $n, m$      &   the numbers of nodes and edges in $G$                            \\ \hline
   $\inN(v), \outN(v)$       &   the set of in-neighbors and
                               out-neighbors of a node $v$
   \\ \hline
   $d_{out}(v)$, $d_{in}(v)$  & the out-degree and in-degree of node $v$ \\
     \hline
   $s(u, v)$    &   the SimRank similarity of nodes $u$ and $v$    \\ \hline
   $\s(u, v)$    & an estimation of $s(u, v)$                                    \\ \hline
 $c$    &   the decay factor of SimRank          \\ \hline
   $\e$         &   the maximum absolute error allowed in SimRank computation          \\ \hline
    $\pi(w)$   & the reverse PageRank of node $w$\\
    \hline
    $\pi(u, w), \pi_\ell(u, w)$   & the RPPR and $\ell$-hop RPPR values of $w$ with respect to $u$\\
   \hline
      $\epi(u, w), \epi_\ell(u, w)$   & estimators of $\pi(u, w)$ and $\pi_\ell(u, w)$\\
    \hline
   $\rb_\ell(v,w)$, $\pib_\ell(v,w)$ & the {residue and reserve} of $v$
                                        at level $\ell$ from $w$ in the backward search \\
    \hline
 \end{tabular}
 \vspace{-4mm}
\end{small}
\end{table}

\vspace{0mm}
\section{Preliminaries} \label{sec:prelim}
Table~\ref{tbl:def-notation} shows the notations that are frequently
used in the remainder of the paper.

\vspace{-1mm}\header{\bf $\boldsymbol{\scw}$-walk and Reverse PageRank.}
We
unify the definition of SimRank and reverse PageRank under the
notation of $\scw$-walk. Let $G= (V, E)$ be a directed graph with $n$
nodes and $m$ edges. 
{  Given a source node $u \in V$ and a decay factor $c$, a {\em reverse
$\scw$-discounted random walk (or $\scw$-walk in short)} 
from $u$ is a traversal of $G$ that starts from $u$ and}, at each step,
either (i) terminates at the current node with $1-\scw$ probability,
or (ii) proceeds to a randomly selected in-neighbor of the current
node with $\scw$ probability. { We define
the {\em reverse PageRank} $\pi(w)$ of a node $w$ to be the
probability that an
$\scw$-walk from a uniformly chosen source node
terminates at $w$. It is easy to see that the reverse PageRank of a
node $w$ in the original graph $G$ equals to the PageRank of $w$ in
the reverse graph $G'$ constructed by reversing the direction of each edge in $G$.}

Given a source node $u$ and a target node $w$,  we further
define the
{\em reverse Personalized PageRank (RPPR)} $\pi(u,w)$ of $w$ with respect to $u$
to be the probability that an $\scw$-walk from $u$
terminates at $w$.  Again, the reverse Personalized PageRank on the
original graph $G$ equals to
the Personalized PageRank on the reverse graph $G'$. Since the RPPR values from a given source node $u$ form a probability
distribution,  we have $ \sum_{w \in V} \pi(u, w) = 1.$
Meanwhile, since the 
reverse PageRank $\pi(w)$ is equal to the probability that an
$\scw$-walk from a random source node
terminates at $w$, we have $\sum_{u \in V} \pi(u, w) = n\pi(w).$





\vspace{-1mm}\header{ \bf $\boldsymbol{\ell}$-Hop RPPR.} In this paper, we will
mainly use a variant of Personalized PageRank called {\em $\ell$-hop
  Reverse Personalized PageRank ($\ell$-hop RPPR)}. Given a source
node $u$, the $\ell$-hop RPPR $\pi_\ell(u, w)$ of node $w$ respected
to $u$ is the probability that a reverse $\scw$-walk from $u$ terminates at node $w$ with exactly $\ell$ steps. By the definition of $\ell$-hop RPPR, we have
\begin{equation}
  \label{eqn:l+1ppr}
\pi_{\ell+1}(y, w) =\sum_{x \in \inN(y)}{\sqrt{c}
  \over \din(y)} \pi_\ell(x, w).
\end{equation}
On the other hand, it is easy to see that RPPR $\pi(u, w)$ can be expressed as the
sum of  $\ell$-hop RPPR, that is, $\sum_{\ell=0}^\infty\pi_\ell(u, w) = \pi(u, w).$
Thus, we have $  \sum_{\ell = 0}^\infty\sum_{w \in V} \pi_\ell (u, w) = 1,$
and
\begin{equation}
  \label{eqn:l-PageRank}
  \sum_{\ell = 0}^\infty \sum_{u \in V} \pi_\ell (u, w) = n\pi(w).
\end{equation}


\vspace{-1mm}
\vspace{-1mm}\header{\bf SimRank, $\boldsymbol{\scw}$-walk, and
  hitting probability.}
It is
shown in~\cite{TX16} that the SimRank similarity $s(u,v)$ between two
different nodes $u$ and $v$
can also be formulated using $\scw$-walks.
Given two distinct nodes $u$ and $v$, we start a $\scw$-walk
from each node. If the two $\scw$-walks visit the same node after
exactly $i$ steps, we say the two $\scw$-walks meet at step $i$.
 \cite{TX16}  shows that $s(u,v)$
is equal to the probability that the two $\scw$-walks meet.  

Moreover, \cite{TX16} proposes SLING, an algorithm that uses the
following formula to estimate SimRank values:
\begin{equation}
  \label{eqn:sling}
s(u,v)=\sum_{\ell=0}^{\infty}\sum_{w\in V}
h_\ell(u,w)h_\ell(v, w) \eta(w).
\end{equation}
Here $h_\ell(u, w)$ denote the {\em hitting probability} that an
$\scw$-walk from node $u$ {\em visits} $w$ in its $\ell$-step, and
$\eta(w)$ is a parameter that characterizes the last-meeting probability:

\begin{definition}[Last-meeting probability]
The last-meeting probability $\eta(w)$ for node $w$ is the probability
that two $\scw$-walk from $w$ do not meet at $i$ step for any $i\ge
1$. 
\end{definition}
SLING precomputes $h_\ell(u,w) $ and $ \eta(w)$ with an
additive error up to $\e$, and stores them in the index. Given a query
node $u$, it retrieves all levels $\ell$ and nodes $w$ such that
$h_\ell(u,w) > \e$. For each $(\ell, w)$ pair, SLING retrieves all
nodes $v$ with $h_\ell(v,w) > \e$ and $\eta(w)$, and estimates $s(u,v)$
with Equation~\eqref{eqn:sling}.

There are two major issues with SLING. First, storing all
$h_\ell(u,w) $ with additive error up to $\e$ takes $O(n/\e)$ space,
which can be significantly larger than the graph size for reasonable
choices of $\e$. Second, approximating $\eta(w)$ for each $w \in V$
requires sampling a large number of random walks from each node in the
graph, which makes the preprocessing time infeasible on very large graphs. Our algorithm
overcomes these two drawbacks by (1) providing an index size that is at
most the size of the graph, and (2) designing an algorithm that
estimates $\eta(w)$ on-the-fly, using only $O(\log n / \e^2)$ time.




\vspace{-2mm}
\section{\prsim algorithm}
In this section, we present \prsim, { an index-based
  algorithm that exploits the graph structure to efficiently
answer approximate single-source SimRank queries.}  We first provide the estimating formula that relates
SimRank and $\ell$-hop RPPR.

\vspace{-2mm}\subsection{SimRank and $\ell$-hop RPPR} \label{sec:simrank_ppr}
The relation between SimRank and reverse Personalized PageRank can be
directly derived from equation~\eqref{eqn:sling}. Observe the
fact that $\ell$-hop RPPR $\pi_\ell(u, w)$ equals to  the hitting
probability $h_\ell(u,w)$ multiplied the the termination probability
$\alpha = 1-\sqrt{c}$, and we have
\begin{equation}
    \label{eqn:formula}
s(u,v)={1\over (1-\sqrt{c})^2} \sum_{\ell=0}^{\infty}\sum_{w\in V}
\pi_\ell(u,w)\pi_\ell(v, w) \eta(w).
\end{equation}
There are two reasons for using $\ell$-hop RPPR over hitting
probability. Firstly, we have $\sum_{\ell=0}^\infty \sum_{w\in V} \pi_\ell
(u,w) =1$. As we will show later, this is critical for
estimating $\eta(w)$ in { $O(\log {n\over \delta} /\e^2)$
  time. } Secondly, we have $\sum_{\ell=0}^\infty \sum_{u\in V} \pi_\ell
(u, w) = n \pi(w)$. This property relates SimRank with the reverse
PageRank, and thus is essential for achieving sublinear query time.


{
Recall that given a source node $u$, our goal is to estimate SimRank values
$s(u,v)$ with additive error $\e$ for any node $v\in V$. By
Equation~\eqref{eqn:formula}, 
we can decompose the query process into three
subroutines: 1) Given a source node $u$,
compute the $\ell$-hop RPPR values $\pi_\ell(u,w)$ for any nodes $w\in
V$; 2) Compute last meeting probabilities $\eta(w)$ for each
$w \in V$; 3) For any node $v\in V$, compute  $\ell$-hop RPPR values $\pi_\ell(v,w)$ to any target
node $w$.  For the first task, we can employ a simple Monte Carlo
algorithm which generates  a number $n_r = O(\log {n \over
\delta} /\e^2)$ of  $\scw$-walks from $u$ and uses the proportion of
$\scw$-walks that terminate at $w$ with exact $\ell $ steps to
approximate $\pi_\ell(u, w)$. This algorithm runs
in $O(\log {n \over
\delta} /\e^2)$  time, so we will focus on the remaining two tasks.  }


\vspace{-2mm}\subsection{ Computing Last Meeting Probability}
The first challenge is how to estimate $\eta(w)$ for each $w \in V$
efficiently. SLING~\cite{TX16} generates $n_r = \Theta\left({\log
    {n\over \delta} \over
    \e^2}\right)$ pair of $\scw$-walks for each $w\in V$, and
obtains an approximation to $\eta(w)$ with error $\e$ for each $w\in
V$. However, this solution leads to a preprocessing time of $O\left(
  {n\log {n\over \delta} \over \e^2}\right)$, and thus, { is not feasible if we need
small error $\e$ on large
graphs. }

{ Our first key insight is that, instead of estimating the
$\ell$-hop PPR
$\pi_\ell(u,w) $ and last meeting probability $\eta(w)$ separately, we can
estimate their product $\eta(w) \pi_\ell(u,w) $ in
the query phase, using only  $n_r = \Theta\left({\log {n \over \delta} \over
    \e^2}\right)$ samples.} More precisely, we observe that $\eta(w)
\pi_\ell(u,w) $ 
is the probability that an $\scw$-walk from  $u$ terminates at $w$
with $\ell$ steps, and then, two independent $\scw$-walks from $w$
do not meet.  Therefore, we can generate an $\scw$-walk $\W(u)$ from
$u$, and then two $\scw$-walks $\W_1(w)$ and $\W_2(w)$ from the node
$w$ where $\W(u)$ terminates. If $\W_1(w)$ and $\W_2(w)$ do not meet,
we set the estimator $\heta_\ell(u, w) =1$. This way we obtain an
unbiased estimator for each $\eta(w) \pi_\ell(u,w) $, $w\in V$ and $\ell =
0, \ldots, \infty$. {  We also note that the summation $\sum_{w\in V}
\sum_{\ell=0}^\infty \eta(w) \pi_\ell(u,w)  \le \sum_{w\in V}
\sum_{\ell=0}^\infty \pi_\ell(u,w) =1$, which means  
we can use Chernoff bound~\ref{lmm:chernoff} to estimates
$\eta(w) \pi_\ell(u,w)$  with additive error $\e$ for any $w \in V,
\ell \ge 0$ with only $n_r = \Theta\left({\log {n \over \delta}
    \over\e^2}\right)$ samples. }

\vspace{-2mm}
\subsection{ Precomputing RPPR to Hub Nodes}
{ Given a target node $w$, computing $\ell$-hop RPPR
$\pi_\ell(v, w)$ for any node $v \in V$ is time-consuming, especially
when $w$ is a hub node with many
out-neighbors. Therefore, we will use index to help
reduce the cost. SLING~\cite{TX16}  proposes the following approach: for each (source) node
$v$, we precompute
$\pi_\ell(v, w)$ for any $w \in V$ and put $\pi_\ell(v, w)$ into
an inverted list, so we can efficiently track $\pi_\ell(v, w), v\in V$
for a given target node $w$. This approach, however, essentially
builds an index for every target node $w \in V$ and results in an
index of size $O\left({n \over\e}\right)$, which is usually significantly
larger than the graph size $m$ for reasonably small $\e$. }

\vspace{-2mm}
\begin{algorithm}[h]
\begin{small}
\caption{Preprocessing Algorithm \label{alg:indexing}}
\BlankLine
\KwIn{Graph $G$,  decay factor $c$, error parameter $\e$ }
\KwOut{Lists $L_\ell(w)$ consisting of tuples $(v, \pib_\ell(v, w))$ for each $w$ with top-$j_0$ reverse PageRank values and
  $\ell = 0, \ldots, \infty$}
  { Construct a tuple $(x,y,\din(y))$ for each edge
    $(x,y)\in E$\;}
   { Use counting sort to sort the $(x,y, \din(y))$ tuples according the ascending order of
     $\din(y)$.\;}
 \For{  {   each $(x,y, \din(y))$} }
   {
       {  Append $y$ to the end of $x$'s out-adjacency list\;}
   }

  Calculate reverse PageRank $\pi(w)$ for $w \in V$\;
\For{each node $w$ with top-$j_0$ reverse PageRank values}
{

  $\rb_\ell(v, w), \pib_\ell(v,w) \gets 0$ for $\ell =0,
\ldots, \infty, v \in V$\;
  $\rb_0(w, w) \gets 1$,   $c_1 \gets {12 \over (1-\scw)^2}$, $\brmax \gets {\e \over c_1}$\;
  \For{$\ell$ from $0$ to $\infty$}
  {
    \For{each $v \in V$ with $\rb_\ell(v,w)> \brmax$}
    {
      \For {each $z \in \outN(v)$}
      {
        $\rb_{\ell+1}(z, w) \gets \rb_{\ell+1}(z, w) + \scw\cdot \frac{ \rb_\ell(v, w)}{\din(z)}$
      }
      $\pib_\ell(v,w) \gets \pib_\ell(v,w) + (1-\sqrt{c})  \cdot
      \rb_\ell(v, w)$\;
      $\rb_\ell(v, w) \gets 0$\;
    }
    \For { each $v$ with { reserve $\pib_\ell(v, w)  > \brmax$}}{
      Append tuple $(v, \pib_\ell(v, w) )$  to $L_\ell(w)$\;
     }
}
}
\end{small}
\end{algorithm}

\vspace{-2mm}
{ To reduce the index size, we propose to build index
only for {\em hub nodes}. In particular, we identify  $j_0$ nodes
with the largest reverse PageRanks as hub nodes, where $j_0$  is a
user-specified parameter. We then perform the {\em backward
  search}~\cite{lofgren2015personalized} algorithm on each hub
node $w$ to precompute $\pi_\ell(v, w)$ for any $v \in V$ and any $\ell
>0$. The definition of hub nodes is based on two
intuitions. }First, recall that the reverse PageRank of node $w$ is the
probability that an $\scw$-walk from a random node $u$ terminates at
$w$. Therefore, a hub node $w$ is more likely
to be visited in a single-source SimRank query on $u$. Second, since
$\sum_{\ell=0}^\infty \sum_{v \in V} \pi_\ell(v, w) = n\pi(w)$, a hub  node
will also have more $(\ell,
w)$-tuples with $\pi_\ell(v, w) > \e$, which makes it more difficult
to compute $\pi_\ell(v, w)$ on the fly. Therefore, pre-computing
$\pi_\ell(v,w)$ for nodes $w$ with highest reverse PageRank reduces
the query cost most efficiently. We also note that we can choose the
value of $j_0$ to balance the query time, index size and preprocessing
time. For ease of presentation, we select $j_0$ such
that the index size is bounded by $O(m)$ in this section.


Algorithm~\ref{alg:indexing} illustrates the pseudocode for the
preprocessing algorithm. {  For reasons we shall see later, for each node $u$ with
out-neighbor set
$\outN(x) =\{y_1, \ldots, y_{d}\}$, we store the adjacency list of $x$
in a way such that $\din(y_1) \le \ldots \le \din(y_d)$. To sort the
adjacency list of each node in total $O(m)$ time,  we first construct a
tuple $(x,y, \din(y))$ for each edge $(x,y) \in E$. Then we employ
the {\em counting sort} algorithm to sort the
$m$ tuples $(x,y, \din(y))$ according to the ascending order of
$\din(y)$. Since $\din(y)$ is an integer in range $[0,n]$,
the counting sort algorithm runs in time $O(m+n)$. We then scan the $m$ sorted
tuples and, for each tuple $(x,y, \din(y))$, we append $y$ to the end
of $x$'s out-adjacency list. This algorithm sorts the out-adjacency
list of each node in $O(m+n)$ time. (Lines 1-4). 
We then
calculate the reverse PageRanks for each node $w\in V$, and retrieve
the $j_0$ nodes with the largest reverse PageRank as the hub nodes (line 5). For each hub
node $w$,
we use backward search~\cite{lofgren2015personalized} to compute an estimator $\pib_\ell(v, w)$ for
the $l$-hop RPPR $\pi_\ell(v, w)$, for each $\ell = 0, \ldots,
\infty$ and $v\in V$.} More precisely, we first set {\em residue}
$\rb_\ell(v, w)$ and a {\em reserve} $\pib_\ell(v, w) = 0$ to each
node $v$ and $\ell=0,\ldots, \infty$. Then, we set $\rb_0(w, w) = 1$
and the residue threshold $\brmax =  {(1-\scw)^2\e \over 12}$ (Lines
6-8). {Note that we choose
the constant 
$(1-\scw)^2$  to compensate the denominator $(1-\scw)^2$ in
equation~\eqref{eqn:formula}, and the constant $12$ so that we can sum
various errors up to at most  $\e$.} Starting from level $0$, we traverse from $w$, following the
out-going edges of each node (Line 9). On visiting a node $v$ at level $\ell$,
we check if $v$'s residue $\rb_\ell(v, t)$ is larger than the
threshold $\brmax$. If so, for each out-neighbor $z$ of $v$,  we
increase the residue $\rb_{\ell+ 1} (z, w)$ of $z$ at level
$\ell + 1$ by $\scw \cdot \frac{\rb_\ell(v,   w)}{d_{in}(z)}$ (Lines
10-12). Next,  we increase $\pib_\ell(v, w)$, $v$'s backward reserve at
level $\ell$ by $\scw \rb_\ell(v, w)$ (line 13). After that, we reset $v$'s
backward residue $\rb_\ell(v, w)$ to $0$ (line 14). After all nodes $v$ with
residue
$\rb_\ell(v, w)> \brmax$ are processed,  we append tuples $(v, \pib_\ell(v, w))$ to
a list $L_\ell(w)$ for each $v$ with reserve
$\pib_\ell(v, w)> \brmax$ (line 15-17). Note that for each a node $w$ and a level $\ell$
with at least one $\pib_\ell(v, w) > \brmax$, we store all tuples $(v,
\pib_\ell(v, w))$ with $\pib_\ell(v, w) > \e$ in a list $L_\ell(w)$,
so we can quickly retrieve them given $w$ and $\ell$ in the
query phase.  { The following lemma can be directly derived from~\cite{lofgren2015personalized}

\begin{lemma}[\cite{lofgren2015personalized}]
  \label{lem:backward_search}
For any hub node $w$, any $v \in V$ and $\ell \ge 0$, Algorithm\ref{alg:indexing} ensures $|\pib_\ell(v, w) - \pi_\ell(v, w)| <
\brmax = {(1-\scw)^2\e \over 12}$.
\end{lemma}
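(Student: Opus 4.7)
The plan is to prove Lemma~\ref{lem:backward_search} by maintaining a loop invariant throughout Algorithm~\ref{alg:indexing} and then bounding the residual at termination. For a fixed hub node $w$, the invariant I will carry is
\[
\pi_\ell(v,w) \;=\; \pib_\ell(v,w) \;+\; \sum_{k=0}^{\ell}\sum_{x\in V} \rb_k(x,w)\,\pi_{\ell-k}(v,x),
\]
for every source $v$ and level $\ell$. At initialization $\pib \equiv 0$, $\rb_0(w,w)=1$, and all other residues vanish, so the identity collapses to $\pi_\ell(v,w)=\pi_\ell(v,w)$.

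To check the inductive step, I first need a \emph{target-side} recurrence for $\ell$-hop RPPR matching the push direction of the algorithm. Equation~(\ref{eqn:l+1ppr}) conditions on the first step of the $\scw$-walk and therefore propagates along in-edges of the \emph{source}, whereas Algorithm~\ref{alg:indexing} propagates residue along out-edges of the current frontier, which amounts to conditioning on the last move before termination. Writing $\pi_\ell(v,w)=(1-\sqrt{c})(\sqrt{c})^\ell\,\mu_\ell(v,w)$ where $\mu_\ell$ is the $\ell$-step reverse random walk distribution from $v$, and decomposing $\mu_{\ell+1}(v,w)$ over the node occupied at step $\ell$, I obtain
\[
\pi_{\ell+1}(v,w) \;=\; \sum_{z\in \outN(w)} \frac{\sqrt{c}}{\din(z)}\,\pi_\ell(v,z).
\]
This is precisely the rule that sends a $\sqrt{c}/\din(z)$ fraction of $\rb_{\ell^*}(v^*,w)$ to $\rb_{\ell^*+1}(z,w)$ for every $z\in\outN(v^*)$ in the push step of the algorithm.

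Given this recurrence, the inductive step is bookkeeping. A push at $(v^*,\ell^*)$ with residue $r$ produces three changes to the right-hand side of the invariant at a fixed $(v,\ell)$: the $\pib$ term gains $(1-\sqrt{c})\,r$ exactly when $(v,\ell)=(v^*,\ell^*)$; the contribution $r\,\pi_{\ell-\ell^*}(v,v^*)$ is removed when $\rb_{\ell^*}(v^*,w)$ is zeroed; and the new residues add $\sum_{z\in\outN(v^*)}\tfrac{\sqrt{c}\,r}{\din(z)}\,\pi_{\ell-\ell^*-1}(v,z)$. For $\ell=\ell^*$ these pieces cancel by the base value $\pi_0(v^*,v^*)=1-\sqrt{c}$, and for $\ell>\ell^*$ they cancel by the target-side recurrence applied at offset $\ell-\ell^*$. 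Once the algorithm terminates, every remaining residue satisfies $\rb_k(x,w)\le \brmax$, and the invariant yields
\[
\bigl|\pi_\ell(v,w)-\pib_\ell(v,w)\bigr| \;\le\; \brmax\sum_{k=0}^{\ell}\sum_{x\in V}\pi_{\ell-k}(v,x) \;\le\; \brmax\sum_{j=0}^{\infty}\sum_{x\in V}\pi_j(v,x) \;=\; \brmax,
\]
using that the total termination probability of an $\scw$-walk from $v$ equals $1$. The main obstacle is getting the target-side recurrence in the correct direction and verifying that the algorithm's push rule matches it exactly; once that identity is in hand, the invariant preservation and the final telescoping sum are both routine.
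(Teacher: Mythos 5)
Your proof is correct. Note that the paper itself supplies no argument for this lemma---it is stated as a direct consequence of the cited backward-search work---so there is no in-paper proof to compare against; what you have written is essentially the standard loop-invariant argument from that line of work, correctly adapted to the per-level ($\ell$-hop) setting. The one genuinely delicate point, which you identify and handle properly, is that the push rule of Algorithm~\ref{alg:indexing} does not match the source-side recurrence in Equation~\eqref{eqn:l+1ppr} but rather the target-side one, $\pi_{j}(v,v^*)=\sum_{z\in\outN(v^*)}\tfrac{\sqrt{c}}{\din(z)}\pi_{j-1}(v,z)$, obtained by conditioning on the walk's position one step before termination; your derivation of it via $\pi_\ell=(1-\sqrt{c})(\sqrt{c})^\ell\mu_\ell$ is sound, and the cancellation in the inductive step (the $\ell=\ell^*$ case via $\pi_0(v^*,v^*)=1-\sqrt{c}$, the $\ell>\ell^*$ case via the recurrence) checks out. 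Two cosmetic remarks: your telescoping gives $|\pib_\ell(v,w)-\pi_\ell(v,w)|\le \brmax$ rather than the strict inequality in the lemma statement (the paper is not careful about this distinction either, and elsewhere even uses a $2\brmax$ bound), and since residues and reserves are nonnegative the invariant in fact yields the one-sided bound $0\le \pi_\ell(v,w)-\pib_\ell(v,w)$, which is slightly stronger than the absolute-value form and is implicitly used later in the paper (e.g., $\pib_\ell(v,w)\le\pi_\ell(v,w)$ in the proof of Lemma~\ref{lem:error_I}).
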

}
We have the following lemma that bounds the
space usage and running time of Algorithm~\ref{alg:indexing}
{ on
worst-case graphs}.

\vspace{-1mm}\begin{lemma}
  \label{lem:index_size}
The size of the index generated by Algorithm~\ref{alg:indexing}  is bounded by $O\left( {n \over \e}\sum_{j=1}^{j_0} \pi(w_j)
  \right)$. The preprocessing time is bounded by $O\left(
  m \over \e\right)$.
\end{lemma}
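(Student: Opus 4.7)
The plan is to handle the two claims independently; for both, the per-hub analysis of backward search reduces to a sum that is controlled by the identity $\sum_{\ell \ge 0}\sum_{v \in V} \pi_\ell(v,w) = n\pi(w)$ from equation~\eqref{eqn:l-PageRank}.

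For the index size, I would count the tuples $(v,\pib_\ell(v,w))$ written to any list $L_\ell(w)$ for a fixed hub $w$. A tuple is stored only when $\pib_\ell(v,w) > \brmax$, and the standard invariant of backward push (which also underlies Lemma~\ref{lem:backward_search}) gives $0 \le \pib_\ell(v,w) \le \pi_\ell(v,w)$. Hence the number of tuples stored for hub $w$ is at most $\frac{1}{\brmax}\sum_{v,\ell}\pi_\ell(v,w) = \frac{n\pi(w)}{\brmax}$. Summing over the top-$j_0$ hubs and using $\brmax = \Theta(\e)$ yields the claimed $O\!\left(\frac{n}{\e}\sum_{j=1}^{j_0}\pi(w_j)\right)$ bound.

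For the preprocessing time, I would first observe that lines 1--4 cost $O(m+n)$ via counting sort and line 5 costs at most $O(m/\e)$ via standard PageRank iteration, leaving the backward search as the dominant term. Because residues only propagate from level $\ell$ to level $\ell+1$, each pair $(v,\ell)$ is pushed at most once for a given hub $w$; a push costs $O(\dout(v))$, and since the residue exceeds $\brmax$ at the moment of pushing, the reserve $\pib_\ell(v,w)$ gains at least $(1-\scw)\brmax$. Therefore the number of levels at which $v$ is ever pushed toward hub $w$ is at most $\pi(v,w)/\bigl((1-\scw)\brmax\bigr)$, so the backward search from $w$ costs $O\!\left(\frac{1}{\e}\sum_v \dout(v)\,\pi(v,w)\right)$.

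The main obstacle, which I expect to require the most thought, is summing this per-hub cost tightly across hubs. The naive bound $\sum_v\dout(v)\pi(v,w)\le d_{\max}\cdot n\pi(w)$ loses a factor equal to the maximum out-degree and is therefore too weak. Instead, I would extend the sum from the $j_0$ hubs to all nodes, swap the order of summation, and exploit $\sum_{w\in V}\pi(v,w)=1$:
\begin{equation*}
\sum_{j=1}^{j_0}\sum_{v}\dout(v)\,\pi(v,w_j) \;\le\; \sum_{v}\dout(v)\sum_{w\in V}\pi(v,w) \;=\; \sum_{v}\dout(v) \;=\; m.
\end{equation*}
Dividing by $(1-\scw)\brmax = \Theta(\e)$ delivers the $O(m/\e)$ bound on the total backward-search work, which dominates the sorting and PageRank costs and completes the proof.
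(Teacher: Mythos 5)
Your proof is correct. The index-size half takes essentially the same route as the paper: the paper bounds the number of stored pairs $(x,\ell)$ by $I(\pi_\ell(x,w)\ge\e)\le \pi_\ell(x,w)/\e$ and sums via $\sum_{\ell}\sum_{x}\pi_\ell(x,w)=n\pi(w)$; your version, which counts tuples with $\pib_\ell(v,w)>\brmax$ and uses the push invariant $\pib_\ell\le\pi_\ell$, is the same argument with the threshold applied to the reserve rather than the exact value. The preprocessing half is where you add genuine value: the paper's appendix proof of this lemma covers only the index size and simply asserts the $O(m/\e)$ preprocessing bound. Your accounting — each pair $(v,\ell)$ is pushed at most once since residue flows only forward in level, each push costs $O(\dout(v))$ and deposits at least $(1-\scw)\brmax$ of reserve, and the per-hub cost $\frac{1}{\e}\sum_v \dout(v)\,\pi(v,w)$ is then summed over hubs by extending to all $w\in V$, swapping the order of summation, and invoking $\sum_{w\in V}\pi(v,w)=1$ to collapse everything to $\sum_v\dout(v)=m$ — is exactly the right way to avoid the lossy $d_{\max}\cdot n\pi(w)$ bound you correctly identify as too weak, and it supplies a proof the paper omits. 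The only cosmetic caveat is that the one-sided invariant $\pib_\ell(v,w)\le\pi_\ell(v,w)$ is the standard backward-push guarantee from the cited literature rather than a consequence of the paper's Lemma~\ref{lem:backward_search}, which only states the two-sided $\brmax$ error; it is worth citing it as such.
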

{  We set $j_0$ so that $O\left( {n \over \e}\sum_{j=1}^{j_0} \pi(w_j)
  \right) =O(m)$ in the theoretical analysis of \prsim, for ease of presentation. Note that if the largest reverse PageRank $\pi(w_1)$
  satisfies $\pi(w_1) > \e m/n$, we need to set $j_0 = 0$,  in which
  case \prsim becomes an index-free
  algorithm.  However, in practice, we can manipulate $j_0$ to get a
  tradeoff between the index size and query cost.}

\vspace{-2mm}\subsection{ Sampling RPPR to  Non-Hub Nodes}


{ The  third key  component of our method is a sampling-based algorithm that efficiently
computes $\ell$-Hop PPR values to  non-hub target nodes (i.e., nodes
with small reverse PPR values and thus are not in the index). 
Given a node $w$, the goal is to provide an unbiased estimator $\epi_\ell(v, w)$ for $\pi_\ell(v,w)$ for each $v
\in V$ and any $\ell \ge 0$. Once we obtain such a sampler, we can
estimate  each $\pi_\ell(v, w)$ with additive error $\e$ using $\log
{n \over \delta} /\e^2$ samples. 
~\cite{liu2017probesim} provides such a sampler by employing a   {\em
  Randomized Probe} algorithm, which runs in $O(n)$
time for a single sample. This time complexity, however, is unacceptable if we want
sub-linear query time.}

In this section, we propose an algorithm that
achieves the following goals: 1) Given a node $w$, the algorithm provides an unbiased estimator $\epi_\ell(v, w)$ for $\pi_\ell(v,w)$, for each $v
\in V$ and any $\ell \ge 0$; 2)  the algorithm runs in $O(n\pi(w))$ expected time. Note that
$n\pi(w) = \sum_{i=0}^\infty \sum_{v\in   V}\pi_i(v,w)$ is the
expected output size and consequently the minimum cost for generating
unbiased estimators $\epi_i(v, w)$ for $i=0, \ldots, \infty$, $v\in
V$. (3) The variance of $\epi_i(v, w)$ is bounded, so we can use Chebyshev's inequality to bound the error, and the Median
Trick to boost the success probability. 

\vspace{-1mm}
\begin{algorithm}[h]
\caption{Backward Walk\label{alg:bw}}
\KwIn{Directed graph $G=(V,E)$; node $w \in V$; level $\ell$\\}
\KwOut{$\epi_\ell(v, w)$ for each $v\in V$\\}
  $\epi_\ell(v, w) \gets 0$ for $\ell=0,\ldots, \infty$, $x\in V$\;
  $\epi_0(w,w) \gets 1- \scw$\;
  \For{$i=0$ to $\ell-1$}
  {
    \For{each $x \in V$ with non-zero $\epi_i(x,w)$ }
    {

        $r \gets rand(0,1)$\;
        \For{each $y \in \outN(x)$ and $\din(y)\le {\scw \over r}$}
        {
          $\epi_{i+1}(y, w) \gets \epi_{i+1}(y,w) + {\epi_i(x, w)}$;
        }

    }
  }
  \Return all non-zero $\epi_\ell(v, w)$\;
\end{algorithm}

\vspace{-2mm}
\header{\bf  Simple Backward Walk with Unbounded Variance.}
For ease of exposition, we first present a simple {\em Backward Walk} that achieves the first two goals. The pseudocode is
illustrated by Algorithm~\ref{alg:bw}. Given a node $w$ and a level
$\ell$, this algorithm also gives an unbiased estimator $\epi_\ell(v,
w)$ for each $v\in V$. We first initialize $\epi_0(w,w) =1 - \sqrt{c}$
and  $\epi_\ell(x, w) =0$ for other $\ell$ or $x\in V$ (Lines
1-2). Then, we iterate $i$ from $0$ to $\ell -1$ (Line 3). At level
$i$, for each $x \in V$ with non-zero $\epi_i(x,w)$, we generate a
random number $r$ from $(0, 1)$ (Line 4-5), and scan the out-neighbors
of $x$ until we encounter the first node $y$ with $\din(y) > {\scw
  \over r}$. Recall that in the preprocessing phase, we sort the out
adjacency list of $x$ so that nodes in $\outN(x)$ are ordered according
to their in-degrees (see Algorithm~\ref{alg:indexing}). Therefore, we
only have to visit the nodes with  $\din(y) \le {\scw \over r}$, which
is a subset of $\outN(x)$. For each out-neighbor $y$ of $x$ with $\din(y) \le {\scw
  \over r}$, we add $\epi_i(x,w)$ to $\epi_{i+1}(y,w)$ (Lines
6-7). Finally, after level $\ell-1$ is processed, we return each
non-zero $\epi_\ell(v, w)$ as the estimator for $\pi_\ell(v, w)$ (Line
8).

We can use a simple induction to prove the unbiasedness of
Algorithm~\ref{alg:bw}. For the base case, we have  $\E[\epi_0(w,w)]
=1-\scw= \pi_0(w, w)$. Assume that $\E[\epi_{i}(x,w)]  = \pi_i(x, w)$
for any $x\in V$. For a node $y$ at level $i+1$, each $\epi_{i}(x,w),
x\in \inN(y)$ is added to $\epi_{i+1}(y,w)$ with probability ${\scw
  \over  \din(y)}$, and thus $\E[\epi_{i+1}(y,w)] = \sum_{x \in \inN(y)}{\scw \over
                      \din(y)}\E[\epi_{i}(x,w)] $. Therefore, we have
 $\E[\epi_{i+1}(y,w)] = \sum_{x \in \inN(y)}{\scw \over
  \din(y)}\pi_{i}(x,w) = \pi_{i+1}(y, w).$
To analyze the running time, note that the cost for computing
$\epi_{i}(x,w) $ is bounded by the number of times that $\epi_{i}(x,w)
$ is incremented. Since each increment adds at least $(1-\scw)$ to
$\epi_{i}(x,w) $, this cost is bounded  by ${\epi_{i}(x,w) \over
  1-\scw }$. Summing over $i=0, \ldots, \infty$ and $x\in V$, and
using equation~\eqref{eqn:l-PageRank}, the total cost is at most $O(n\pi(w))$.

Unfortunately, the estimator $\epi_\ell(v, w)$ returned by
Algorithm~\ref{alg:bw} can be unbounded, since we may sum up all
estimators from level $i$ to form an estimator of level $i+1$. To make
thing worse, it is even unclear if $\epi_\ell(v, w)$ has bounded variance. This
means that $\epi_\ell(v, w)$ may not be sub-gaussian or
sub-exponential, and thus we are unable to apply concentration
inequality to bound the error.

\vspace{-2mm}
\begin{algorithm}[h]
\caption{Variance Bounded Backward Walk\label{alg:vbbw}}
\KwIn{Directed graph $G=(V,E)$; node $w \in V$; target level $\ell$\\}
\KwOut{$\epi_\ell(v, w)$ for each $v\in V$\\}
$\epi_\ell(v, w) \gets 0$ for $\ell=0,\ldots, \infty$, $x\in V$\;
$\epi_0(w,w) \gets 1-\scw$\;
\For{$i=0$ to $\ell-1$}
{
\For{each $x \in V$ with non-zero $\epi_i(x,w)$ }
{
\If{$r_0 \gets rand() < \scw$}{
\For{each $y \in \outN(x)$ and $\din(y) \le {\epi_i(x, w) \over 1 - \scw}$}
{
$\epi_{i+1}(y, w) \gets \epi_{i+1}(y,w) + {\epi_i(x, w) \over \din(y)}$;
}
$r \gets rand(0,1)$\;
\For{each $y \in \outN(x)$ and $ {\epi_i(x, w) \over 1 - \scw} < \din(y)\le  {\epi_i(x, w) \over r(1 - \scw)}$}
{
$\epi_{i+1}(y, w) \gets \epi_{i+1}(y,w) + 1-\scw$;
}
}
}
}

\Return all non-zero $\epi_\ell(v, w)$\;
\end{algorithm}

\vspace{0mm}
\header{\bf   Variance Bounded Backward Walk.}
To overcome the drawback of simple Backward Walk, we propose the {\em Variance Bounded
  Backward Walk} algorithm, which achieves bounded variance without
sacrificing the $O(n \pi(w))$ query bound or the unbiasedness
guarantee. Algorithm~\ref{alg:vbbw} illustrates the pseudocode of the Variance Bounded
Backward Walk algorithm. We set $\epi_0(w,w) =1 -
\sqrt{c}$ and  $\epi_\ell(x, w) =0$ for other $\ell$ or $x\in V$
(Lines 1-2). Then
we iterate $i$ from $0$ to $\ell -1$ (Line 3). At level $i$, for  each $x \in
V$ with non-zero $\epi_i(x,w)$, we first generate a random number
$r_0$ so that we can stop the process at $x$ with probability
$1-\scw$ (Lines 4-5). With probability $\scw$,  we first scan through the
out-neighbors of $x$ until we encounter the first node $y$ with
$\din(y) > {\epi_i(x, w) \over 1-\scw}$. For each out-neighbor $y$
with $\din(y) \le  {\epi_i(x, w) \over 1-\scw}$ we increase $\epi_i(y,
w)$ by ${\epi_i(x, w) \over \din(y)}$ (Lines 6-7). Then, we choose a random number
$r$ from $(0, 1)$ (Line 8), and continue to scan the out-neighbors of $x$ until
we encounter the first node $y$ with $\din(y) > {\epi_i(x, w) \over
  r(1-\scw)}$. Again, we only visit a subset of $\outN(x)$, as the
nodes in $\outN(x)$ are ordered according to their in-degrees.
For each out-neighbor $y$ of $x$ with $\din(y) \le
{\epi_i(x, w) \over   r(1-\scw)}$, we increment $\epi_{i+1}(y,w)$ by
$1-\scw$ (Lines 9-10). After $\ell$ levels are processed, we return all non-zero
$\epi_\ell(v, w)$ as estimators for $\pi_\ell(v, w)$ (Line 11).


\vspace{-1mm}
\header{\bf { Analysis}. }
We prove three properties of the Variance Bounded Backward Walk
algorithm. First,  the algorithm gives an unbiased estimator
$\epi_\ell(v, w)$ for $\pi_i(v,w)$ for each $v \in V$ and $i \le
\ell$. In particular, we have the following lemma.
\vspace{-1mm}\begin{lemma}
  \label{lem:vbbw_ubias}
 Consider a node $v$ on a target level $\ell$, and let $\epi_\ell(v,
 w)$ be an estimator provided by Algorithm~\ref{alg:vbbw}. We have $\E[\epi_\ell(v, w)]=\pi_\ell(v,w)$.
 \end{lemma}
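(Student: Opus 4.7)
The plan is to prove the claim by induction on the target level $\ell$, with the inductive step reducing to showing that the expected one-step increment at each neighbor $y$ of each active node $x$ matches the deterministic update prescribed by Equation~\eqref{eqn:l+1ppr}.

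For the base case $\ell=0$, Line~2 deterministically sets $\epi_0(w,w)=1-\scw=\pi_0(w,w)$ and $\epi_0(x,w)=0=\pi_0(x,w)$ for $x\neq w$, so the claim holds. For the inductive step, assume $\E[\epi_i(x,w)]=\pi_i(x,w)$ for every $x\in V$. Fix a node $y$ at level $i+1$ and a parent $x\in\inN(y)$. I will compute the conditional expectation of the increment to $\epi_{i+1}(y,w)$ contributed by $x$, given the value of $\epi_i(x,w)$. The algorithm contributes to $\epi_{i+1}(y,w)$ only when $r_0<\scw$ (Line~5), which happens with probability $\scw$. Conditioning further on this event, there are two mutually exclusive cases depending on $\din(y)$ relative to the threshold $\epi_i(x,w)/(1-\scw)$:

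\emph{Case 1:} $\din(y)\le \epi_i(x,w)/(1-\scw)$. Then Lines~6--7 always add $\epi_i(x,w)/\din(y)$, and Lines~9--10 never fire for this $y$. The conditional expected contribution is therefore $\scw\cdot\epi_i(x,w)/\din(y)$.

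\emph{Case 2:} $\din(y)>\epi_i(x,w)/(1-\scw)$. Then Lines~6--7 contribute nothing, and Lines~9--10 add $1-\scw$ precisely when the random $r\in(0,1)$ satisfies $r\le \epi_i(x,w)/\bigl(\din(y)(1-\scw)\bigr)$. By the case assumption this upper bound lies in $(0,1)$, so the conditional probability of the contribution equals $\epi_i(x,w)/\bigl(\din(y)(1-\scw)\bigr)$, and the expected contribution is $\scw\cdot\tfrac{\epi_i(x,w)}{\din(y)(1-\scw)}\cdot(1-\scw)=\scw\cdot\epi_i(x,w)/\din(y)$.

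Both cases yield the same expected contribution $\scw\cdot\epi_i(x,w)/\din(y)$ per parent $x$. Since the random choices made at distinct parents $x\in\inN(y)$ are independent, summing over $x\in\inN(y)$, taking the outer expectation, and applying the inductive hypothesis gives
\[
\E[\epi_{i+1}(y,w)]=\sum_{x\in\inN(y)}\frac{\scw}{\din(y)}\E[\epi_i(x,w)]=\sum_{x\in\inN(y)}\frac{\scw}{\din(y)}\pi_i(x,w),
\]
which equals $\pi_{i+1}(y,w)$ by Equation~\eqref{eqn:l+1ppr}. Iterating up to $i=\ell-1$ concludes the induction.

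The main subtlety is verifying that the two disjoint contribution mechanisms in Lines~6--7 and Lines~9--10 were designed to yield identical conditional expectations $\scw\cdot\epi_i(x,w)/\din(y)$; this is precisely the algebraic reason the algorithm caps the per-edge increment at $1-\scw$ in the large-degree regime (Case~2) while still paying the exact expected amount, which is what later enables the variance bound. All other steps are standard applications of the tower property and linearity of expectation.
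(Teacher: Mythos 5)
Your proof is correct and follows essentially the same route as the paper's: induction on the level, conditioning on the level-$i$ estimators, and verifying that both the deterministic branch (Lines 6--7) and the randomized branch (Lines 9--10) contribute the same conditional expectation $\scw\,\epi_i(x,w)/\din(y)$ per in-neighbor, which the paper formalizes with the indicator variables $R_i(x)$ and $Z_i(x,y)$ over the sets $A$ and $B$. The only cosmetic difference is that you phrase the split as a case analysis on $\din(y)$ rather than as a single decomposition formula, and you invoke independence across parents where linearity of expectation already suffices.
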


Next, we show  that the running time of Algorithm~\ref{alg:vbbw} on
node $w$ is proportional to its reverse PageRank $\pi(w)$. In particular,
we have the following lemma.
\vspace{-1mm}\begin{lemma}
  \label{lem:vbbw_query} The complexity of Algorithm~\ref{alg:vbbw}
  on node $w$, regardless of the target level $\ell$,  is bounded by $O(n\pi(w))$.
\end{lemma}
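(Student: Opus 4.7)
\textbf{Proof proposal for Lemma~\ref{lem:vbbw_query}.}
The plan is to charge the work done at each node-level pair $(x,i)$ to the expected ``mass'' that the algorithm either currently holds at $x$ or propagates from $x$ to level $i+1$, and then to use the identity $\sum_{i \ge 0}\sum_{y \in V}\pi_i(y,w)=n\pi(w)$ from Equation~\eqref{eqn:l-PageRank} together with Lemma~\ref{lem:vbbw_ubias} to convert the total expected mass into $O(n\pi(w))$.

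The first step is a structural invariant: whenever $\epi_i(x,w)>0$ we have $\epi_i(x,w)\ge 1-\scw$. I would prove this by induction on $i$. The base case holds since $\epi_0(w,w)=1-\scw$. For the inductive step, any increment to $\epi_{i+1}(y,w)$ comes from one of two places in Algorithm~\ref{alg:vbbw}: phase~1 adds $\epi_i(x,w)/\din(y)$ under the condition $\din(y)\le \epi_i(x,w)/(1-\scw)$, so the increment is at least $1-\scw$; phase~2 adds exactly $1-\scw$. Either way every increment is $\ge 1-\scw$, establishing the invariant. Consequently the number of ``live'' pairs $(x,i)$ processed is at most $\sum_{i,x}\epi_i(x,w)/(1-\scw)$.

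The second step bounds, at a single live pair $(x,i)$, the cost of the two scans by the expected mass that $x$ contributes to level $i+1$. Since the out-adjacency list of $x$ was sorted by in-degree in the preprocessing (Algorithm~\ref{alg:indexing}), each scan can be terminated at the first violating in-degree, so the cost of phase~1 is $O(1+k_1)$ where $k_1=|\{y\in\outN(x):\din(y)\le\epi_i(x,w)/(1-\scw)\}|$ and the cost of phase~2 is $O(1+k_2(r))$ for the random truncation $k_2(r)$ used in the code. For phase~1, every one of the $k_1$ increments contributes at least $1-\scw$ to the mass at level $i+1$, so $\scw k_1\le (\text{phase-1 mass from }x)/(1-\scw)$, where the leading $\scw$ accounts for the $r_0<\scw$ gate. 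For phase~2, each increment contributes exactly $1-\scw$, so $\scw\,\E[k_2(r)]=\E[\text{phase-2 mass from }x]/(1-\scw)$. Adding the two phases and the $O(1)$ overhead, the expected cost at $(x,i)$ is at most a constant times $1+\E[\text{mass propagated from }x\text{ to level }i+1\mid \epi_i(x,w)]/(1-\scw)$.

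The third step sums these two contributions over all levels $i=0,\dots,\ell-1$ and all nodes $x$. The total ``overhead'' contribution is bounded by the number of live $(x,i)$, which the invariant bounds by $(1-\scw)^{-1}\sum_{i,x}\epi_i(x,w)$; taking expectations and using Lemma~\ref{lem:vbbw_ubias} this equals $(1-\scw)^{-1}\sum_{i,x}\pi_i(x,w)=(1-\scw)^{-1}\cdot n\pi(w)$ by Equation~\eqref{eqn:l-PageRank}. The total ``mass'' contribution is bounded by $(1-\scw)^{-1}\sum_{i\ge 1}\sum_{y}\E[\epi_i(y,w)]=(1-\scw)^{-1}\sum_{i\ge 1}\sum_y \pi_i(y,w)\le (1-\scw)^{-1}n\pi(w)$. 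Adding, the total expected running time is $O(n\pi(w))$ with constants depending on $c$ only, independent of the target level $\ell$.

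The main obstacle I anticipate is the phase-2 cost analysis: the scan length $k_2(r)$ depends on a uniform random variable $r$ and the threshold involves $\epi_i(x,w)/(r(1-\scw))$, so one has to compute $\E[k_2(r)]$ via the identity $\Pr[\din(y)\le \epi_i(x,w)/(r(1-\scw))]=\min\{1,\epi_i(x,w)/(\din(y)(1-\scw))\}$ for each $y$ with $\din(y)>\epi_i(x,w)/(1-\scw)$ and recognize that the resulting sum is exactly the expected mass phase~2 pushes to level $i+1$ divided by $1-\scw$. Everything else is essentially amortization via the invariant and the $\ell$-hop RPPR identity.
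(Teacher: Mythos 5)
Your proposal is correct and follows essentially the same route as the paper's proof: the key amortization is that every increment to some $\epi_{i+1}(y,w)$ adds at least $1-\scw$, so the total work is at most $(1-\scw)^{-1}\sum_{i}\sum_{x}\epi_i(x,w)$ in expectation, which by unbiasedness and Equation~\eqref{eqn:l-PageRank} equals $O(n\pi(w))$. Your write-up is somewhat more careful than the paper's (explicitly accounting for the $O(1)$ per-live-pair overhead and computing the expected phase-2 scan length over the random truncation $r$), but the underlying argument is identical.
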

Note that $n\pi(w) = \sum_{i=0}^\infty \sum_{v\in V}\pi_i(v,w)$,
which implies that the minimum number of operations to return a
unbiased estimator  $\epi_i(v, w)$ for each $\pi_i(v,w)$ is
$\Omega(n\pi(w))$. This essentially means that
Algorithm~\ref{alg:vbbw} achieves optimal sampling complexity for this
task.

Finally,
we note that although the estimator $\epi_\ell(v, w)$  is unbiased, it
may be unbounded on certain graphs. To see this, consider a graph that
has $n+2$ nodes $w, v, x_1, \ldots, x_n$. For each $i=1,\ldots, n$,
there is an edge from  $w$ to $x_i$ and an edge from $x_i$ to
$v$. Suppose we run  Algorithm~\ref{alg:bw} on node $w$ with target
level $\ell=2$. The algorithm first sets $\epi_0(w, w) = 1-\scw$. For
each $i = 1, \ldots, n$, the algorithm sets $\epi_1(x_i, w) = 1-\scw$
with probability $\scw$. This means there are approximately $\scw$
fraction of $x_i$'s with $\epi_1(x_i, w) = 1-\scw$. Finally, for each
$i=1,\ldots, n$ and $\epi_1(x_i, w) = 1-\scw$, the algorithm
increments $\epi_2(v, w)$ by $1-\scw$ with probability ${1\over
  n}$. This implies that in the worst-case, all $\epi_1(x_i, w) =
1-\scw$ for $i=1,\ldots, n$, and $\epi_2(v, w)$ can be as large as
$(1-\scw)n$. 

Fortunately, we can bound the variance of
Algorithm~\ref{alg:vbbw}, which enables us to use the Median 
Trick to boost accuracy. The following lemma states that the variance
of $\epi_\ell(v, w)$ is bounded by $\pi_\ell(v, w)$, the
actual value of the $\ell$-hop RPPR.

\vspace{-1mm}\begin{lemma}
\label{lem:variance}
For any  level $\ell \ge 0$ and node $v \in V$, we have  $\Var\left[\epi_\ell (v, w)\right] \le \E\left[ \epi_\ell (v, w)^2 \right] \le \pi_\ell(v, w).$
\end{lemma}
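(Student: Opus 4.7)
The first inequality, $\Var[\epi_\ell(v,w)] \le \E[\epi_\ell(v,w)^2]$, is immediate from $\Var[X] = \E[X^2] - (\E[X])^2 \le \E[X^2]$. The substantive content is the second inequality $\E[\epi_\ell(v,w)^2] \le \pi_\ell(v,w)$, which I plan to establish by induction on $\ell$. The base case $\ell = 0$ is immediate: $\epi_0(w,w) = 1-\scw$ deterministically gives $\E[\epi_0(w,w)^2] = (1-\scw)^2 \le 1-\scw = \pi_0(w,w)$, and both sides vanish for $v \ne w$.

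For the inductive step, fix $y$ and decompose $\epi_{\ell+1}(y,w) = \sum_{x \in \inN(y)} \Delta_{x,y}$, where $\Delta_{x,y}$ is the contribution produced when Algorithm~\ref{alg:vbbw} processes $x$ at level $\ell$. Let $\mathcal{F}_\ell$ denote the sigma-algebra generated by all randomness used through level $\ell$, so that $\epi_\ell(x,w)$ is $\mathcal{F}_\ell$-measurable. The key structural observation is that the fresh coins $r_0, r$ drawn at lines~5 and~8 for distinct $x \in \inN(y)$ are mutually independent, so the random variables $\{\Delta_{x,y}\}_{x \in \inN(y)}$ are conditionally independent given $\mathcal{F}_\ell$. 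A direct case-analysis of lines 5--10 yields $\mu_x := \E[\Delta_{x,y} \mid \mathcal{F}_\ell] = \scw\,\epi_\ell(x,w)/\din(y)$ (which recovers the unbiasedness of Lemma~\ref{lem:vbbw_ubias}) together with
\[
  \E[\Delta_{x,y}^2 \mid \mathcal{F}_\ell] \;=\; \mu_x \cdot \max\bigl\{\,1-\scw,\ \epi_\ell(x,w)/\din(y)\,\bigr\},
\]
where the $\max$ is attained by its second argument in Case~1 ($\din(y) \le \epi_\ell(x,w)/(1-\scw)$) and by $1-\scw$ in Case~2. Conditional independence then gives
\[
  \E\bigl[\epi_{\ell+1}(y,w)^2 \bigm| \mathcal{F}_\ell\bigr] \;=\; \sum_{x} \Var[\Delta_{x,y} \mid \mathcal{F}_\ell] \;+\; \Bigl(\sum_{x} \mu_x\Bigr)^{\!2},
\]
so taking outer expectation and applying the inductive hypothesis $\E[\epi_\ell(x,w)^2] \le \pi_\ell(x,w)$ together with the recursion $\pi_{\ell+1}(y,w) = (\scw/\din(y)) \sum_{x \in \inN(y)} \pi_\ell(x,w)$ is what I will use to drive the bound home.

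The main obstacle will be controlling the cross term $(\sum_x \mu_x)^2 = \scw^2 S^2/\din(y)^2$, where $S = \sum_{x \in \inN(y)} \epi_\ell(x,w)$: the variables $\epi_\ell(x,w)$ and $\epi_\ell(x',w)$ are positively correlated through the shared algorithmic history at earlier levels, and a blanket Cauchy--Schwarz bound $\E[S^2] \le \din(y) \sum_x \E[\epi_\ell(x,w)^2]$ loses a factor $\din(y)$ that cannot be afforded. I plan to avoid this by working with the identity $\E[\epi_{\ell+1}(y,w)^2 \mid \mathcal{F}_\ell] = \sum_x \E[\Delta_{x,y}^2 \mid \mathcal{F}_\ell] + \sum_{x \ne x'}\mu_x \mu_{x'}$ and exploiting the fact that the truncation threshold $\din(y) = \epi_\ell(x,w)/(1-\scw)$ in lines~6 and~9 is calibrated precisely so that the Case-1 excess of $\E[\Delta_{x,y}^2 \mid \mathcal{F}_\ell]$ over $(1-\scw)\mu_x$ matches the $\mu_x^2$ that is absorbed when rewriting $(\sum_x \mu_x)^2 - \sum_x \mu_x^2 = \sum_{x \ne x'}\mu_x\mu_{x'}$. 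After this cancellation only Case-2-type terms, each pointwise bounded by $(1-\scw)\mu_x$, remain; summing these against $\sum_x \pi_\ell(x,w)$ via the inductive hypothesis and the recursion for $\pi_{\ell+1}(y,w)$ yields $\E[\epi_{\ell+1}(y,w)^2] \le \pi_{\ell+1}(y,w)$ and closes the induction.
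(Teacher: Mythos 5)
Your setup (induction on $\ell$, the decomposition $\epi_{\ell+1}(y,w)=\sum_{x\in\inN(y)}\Delta_{x,y}$ into contributions that are conditionally independent given $\mathcal{F}_\ell$, and the conditional moments $\mu_x=\scw\,\epi_\ell(x,w)/\din(y)$ and $\E[\Delta_{x,y}^2\mid\mathcal{F}_\ell]=\mu_x\max\{1-\scw,\,\epi_\ell(x,w)/\din(y)\}$) matches the paper's proof, and you correctly isolate the correlated cross term as the crux. But your proposed resolution does not work. First, the claimed calibration fails: for $x$ in Case~1 the excess of $\E[\Delta_{x,y}^2\mid\mathcal{F}_\ell]$ over $(1-\scw)\mu_x$ is $\mu_x\bigl(\epi_\ell(x,w)/\din(y)-(1-\scw)\bigr)$, while $\mu_x^2=\scw\,\mu_x\,\epi_\ell(x,w)/\din(y)$; these coincide only when $\epi_\ell(x,w)=\din(y)$, and the inequality ``excess $\le\mu_x^2$'' requires $\epi_\ell(x,w)\le\din(y)$, which can fail badly because $\epi_\ell(x,w)$ is unbounded (the paper's own star example drives it up to $(1-\scw)n$). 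Second, even granting that cancellation, the term $(\sum_x\mu_x)^2$ does not disappear, so you are not left with ``only Case-2-type terms.'' Third, your endgame would establish $\E[\epi_{\ell+1}(y,w)^2]\le(1-\scw)\,\pi_{\ell+1}(y,w)$, which is false: for a node $y$ with two in-neighbors each carrying $\epi_\ell=1-\scw$ with probability $\scw$, a direct computation gives $\E[\epi_{\ell+1}(y,w)^2]=(1-\scw)(1+\scw/2)\,\pi_{\ell+1}(y,w)>(1-\scw)\,\pi_{\ell+1}(y,w)$.

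The fix is precisely the Cauchy--Schwarz-type step you dismissed, applied to the off-diagonal sum rather than to $S^2$ wholesale. The paper bounds $\sum_{x\ne x'}\mu_x\mu_{x'}\le\sum_{x\ne x'}\tfrac12(\mu_x^2+\mu_{x'}^2)=(\din(y)-1)\sum_x\mu_x^2$, and each $\E[\mu_x^2]=c\,\E[\epi_\ell(x,w)^2]/\din(y)^2\le c\,\pi_\ell(x,w)/\din(y)^2$ by the induction hypothesis, so the cross term contributes at most $\scw\bigl(1-1/\din(y)\bigr)\pi_{\ell+1}(y,w)$: the factor $\din(y)$ you feared is cancelled by one of the two $1/\din(y)$ factors inside $\mu_x^2$. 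Adding the diagonal contributions $S_A/\din(y)$ and $(1-\scw)S_B$, where $S_A+S_B=\pi_{\ell+1}(y,w)$, yields coefficients $\scw+(1-\scw)/\din(y)\le1$ on $S_A$ and $1-\scw/\din(y)\le1$ on $S_B$, which closes the induction. The AM--GM step is essential here because it converts products of \emph{correlated} estimators into sums of squares, each controlled by the induction hypothesis; no exact cancellation is available or needed.
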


\vspace{-2mm}\subsection{ Putting Things Together}
{ Based on the definition of hub nodes,} we divide the SimRank
value $s(u,v)$  of nodes $u$ and $v$ into two terms  $s(u,v)=s_I(u, v) + s_B(u, v),$ where
\vspace{-1mm}
\begin{equation}
  \label{eqn:sI}
s_I(u, v) = {1\over (1-\sqrt{c})^2} \sum_{\ell=0}^{\infty}\sum_{j
  =1}^{j_0}  \pi_\ell(u,w_j)\pi_\ell(v, w_j) \eta(w_j),
\end{equation}
\vspace{-1mm}
and
\vspace{-1mm}
\begin{equation}
  \label{eqn:sB}
  s_B(u, v)=  {1\over (1-\sqrt{c})^2} \sum_{\ell=0}^{\infty}\sum_{j
    =j_0+1}^{n} \pi_\ell(u,w_j)\pi_\ell(v, w) \eta(w_j).
\end{equation}
\vspace{-1mm}
\prsim algorithm uses pre-computed index to generate an  estimator
$\s_I(u, v)$ for  $s_I(u, v)$, and uses backward walks to generate an estimator $\s_B(u, v)$ for $s_B(u, v)$.

Algorithm~\ref{alg:main} shows the pseudo-code of the query
algorithm for \prsim. Given a source node $u$ on a directed graph $G=(V,E)$, a
decay factor $c$ and an error parameter $\e$, the algorithm returns an
estimator $\s(u, v)$ for each $v \in V$. We set the constant $c_1 = {12\over
  (1-\scw)^2 }$, the number of samples in a round to $d_r = {c_1\over \e^2}$, the
number of rounds to $f_r = 3\log {n\over \delta} $, and the total
sample number to $n_r = d_rf_r =
\Theta\left({\log {n\over \delta} \over \e^2}\right)$ (Line 1). {Note
  that for the constant $c_1$, we choose
$(1-\scw)^2$  to compensate the denominator $(1-\scw)^2$ in
equation~\eqref{eqn:formula}, and $12$ so that we can sum
various errors up to at most  $\e$.} 
{ We choose the value of $d_r$ according to Chernoff
bound~\ref{lmm:chernoff}, and the value of $f_r$ according to the
Median Trick~\ref{lmm:median}}.
Then we initialize
estimators $\s(u,v)$ $\s_I(u, v)$, $\s_B(u, v)$ and $s_B^i(u, v)$ to be $0$ for $v\in V$ and
$i=1,\ldots, f_r$  (Line 2). We also set $\heta_\ell(u, w)$, the
estimator for $\eta(w)\cdot \pi_\ell(u, w)$, to be $0$ for $w\in V$ and
$\ell=0,\ldots, \infty$ (Line 3). Note that in order to achieve sublinear query time,
we can use  hash maps to store only the non-zero entries in $\s$, $\s_B$
$\s_I$, $\s_B^i$ and
$\heta$.

For each $i$ from $1$ to $f_r$ and $j$ from $1$ to $d_r$, we sample an
$\scw$-walk $\W(u)$ from $u$ (Lines 4-6). If $\W(u)$
terminates at node $w$ in $\ell$ steps, we further sample a pair of
$\sqrt{c}$-walks $\W_1(w)$ and $\W_2(w)$ from $w$ (Line 8). Recall
that the probability that the two $\scw$-walks do not meet is exactly
$\eta(w)$. If this event happens, we increase the estimator
$\heta_\ell(u, w)$ by ${1\over n_r}$ (Lines 9-10).  If $w$ is not
stored in the index, we estimate $\pi_\ell(v,  w)$ for each $v \in V$
with Algorithm~\ref{alg:vbbw}, and update the $i$-th estimator
$\s_B^i(u, v)$ by ${\epi_\ell(v, w) \over(1-\sqrt{c})^2d_r}$ for each
$v \in V$ (Lines 11-13). After $n_r = d_r\cdot f_r$ samples are
processed, we return $\s_B(u,v) = \textrm{Median}_{ 1\le i\le
  f_r}\s_B^i(u, v)$ as an estimator for $s_B(u, v)$ (Lines
14-15). Again, to ensure sublinear query time, we only compute median
for a node $v$ if there is at least one non-zero $\s_B^i(u,v)$ for
some $1\le i \le f_r$. Finally, for each $(w, \ell)$-tuple with
{ $\heta_\ell(u, w) > {\e \over c_1}$}  and $w$
in the index, we retrieve $\epi_\ell(v, w) $ for each $v \in V$ from
the index, and update $\s_I(u, v)$ by ${\heta_\ell(v, w)\over
  (1-\sqrt{c})^2}$ (Lines 16-18). We return all non-zero $\s(u, v) =
\s_I(u, v) + \s_B(u, v)$ as the estimator for $s(u, v)$, for $v\in V$
(Line 19).

\vspace{-1mm}
\begin{algorithm}[h]
\caption{Query Algorithm\label{alg:main}}
\KwIn{Directed graph $G=(V,E)$; node $u$; decay factor $c$; error
  parameter $\e$; {   Failure probability $\delta$}\\}
\KwOut{$\s(u, v)$ for each $v\in V$\\}
{  $c_1 \gets {12 \over (1-\scw)^2}$, $d_r \gets {c_1 \over \e^2}$, $f_r
\gets 3 \log {n\over \delta}$, $n_r
\gets d_r\cdot f_r$}\;
$\s(u, v), \s_I(u, v), \s_B(u, v), \s_B^i(u, v) \gets 0$ for each $v\in V$, $i = 1, \ldots, f_r$\;
$\heta_\ell(u, w) \gets 0$ for $w\in V$, $\ell = 0, \ldots,
\infty$\;
\For{$i=1$ to $f_r$}{
  \For{$j=1$ to $d_r$}{
  Sample an $\sqrt{c}$-walk $\W(u)$ from $u$ \;
  \If{$\W(u)$  terminates at node $w$ with $\ell$ steps}{
    Sample two independent $\scw$-walks $\W_1(w)$ and $\W_2(w)$ from $w$\;
    \If{$\W_1(w)$ and $\W_2(w)$ do not meet}{
      { $\heta_\ell(u, w)\gets \heta_\ell(u, w)+ {1\over n_r}$}\;
      \If{$w \notin Index$}{
        Estimate $\epi_\ell(v, w)$ for $v \in V$ with
        Algorithm~\ref{alg:vbbw}\;
                $\s_B^i(u,v) \gets  \s_B^i(u, v)+ {\epi_\ell(v, w) \over (1-\sqrt{c})^2d_r}$\;
        }
  }
}
}
}
  \For{each $v$ with nonzero $\s_B^i(u, v)$ for some $1 \le i \le f_r$}{
    $\s_B(u,v) \gets \textrm{Median}_{ 1\le i \le f_r}\s_B^i(u, v)$\;
  }
 \For{each $(w, \ell)$ with  $\heta_\ell(u, w)> {\e \over c_1} $ and $w \in Index$}{
    \For{each $(v, \pib_\ell(v, w)) $ tuple in $L_\ell (w)$ in Index} {
      $\s_I(u,v) \gets  \s_I(u, v)+ {\heta_\ell(u, w)\pib_\ell(v, w) \over (1-\sqrt{c})^2}$\;
    }
  }
\Return all non-zero $\s(u, v)\gets \s_B(u,v) + \s_I(u, v)$\;
\end{algorithm}

\vspace{-2mm}
\header{\bf   Error Analysis.} We now analyze the overall error bounds of the
\prsim algorithm. Recall that  given a source node $u$ and a target
node $v$, $s(u, v) = s_I(u, v) + s_B(u, v)$ where $s_I(u, v)$ and
$s_B(u,v)$ are defined by equations~\eqref{eqn:sI} and~\eqref{eqn:sB}, respectively.
Algorithm~\ref{alg:main} uses index to generate an
estimator $\s_I(u,v)$ for each  $s_I(u,
 v), v\in V$, and uses backward walks to generate an estimator
 $\s_B(u,v )$ for each  $s_B(u, v), v\in V$.
We have the following two lemmas that bound the errors of the two
approximations.

    \vspace{-1mm}\begin{lemma}
      \label{lem:error_I}
      Given a source node $u$, for any $v\in V$,
      Algorithm~\ref{alg:main} provides an estimator $\s_I(u, v)$ for
      $s_I(u, v)$ such that:
            \begin{equation}
        \label{eqn:I}
      \Pr\left[|\s_I(u, v) - s_I(u, v)| >{\e \over 2}  \right] \le {{ \delta}\over
        2n}.
      \end{equation}
    \end{lemma}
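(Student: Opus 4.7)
The plan is to decompose $\s_I(u,v)-s_I(u,v)$ into three error sources and bound each using concentration together with Lemma~\ref{lem:backward_search}. Writing both $s_I(u,v)$ and the sum computed in Lines 16--18 as indexed by hub nodes $w_j$, $j\le j_0$, and levels $\ell\ge 0$, we have
\begin{equation*}
    \s_I(u,v) - s_I(u,v) = \frac{1}{(1-\scw)^2}\sum_{j=1}^{j_0}\sum_{\ell\ge 0}\Bigl[\heta_\ell(u,w_j)\,\pib_\ell(v,w_j)\,\mathbf{1}_{\heta_\ell(u,w_j)>\e/c_1} \;-\; \eta(w_j)\,\pi_\ell(u,w_j)\,\pi_\ell(v,w_j)\Bigr].
\end{equation*}
I will insert the telescoping intermediates $\eta(w_j)\pi_\ell(u,w_j)\pib_\ell(v,w_j)$ and $\heta_\ell(u,w_j)\pi_\ell(v,w_j)$ to split the right-hand side into three contributions: (a) Monte-Carlo error in $\heta$; (b) backward-search error in $\pib$; and (c) a thresholding error from pairs the algorithm drops because $\heta_\ell(u,w_j)\le\e/c_1$.

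For (a), the key observation is that $\heta_\ell(u,w_j)$ is the average of $n_r=\Theta(\log(n/\delta)/\e^2)$ i.i.d. Bernoulli indicators, each equal to $1$ iff the $\scw$-walk from $u$ terminates at $w_j$ in exactly $\ell$ steps and the two independent $\scw$-walks subsequently started from $w_j$ never meet; the common mean is exactly $\eta(w_j)\pi_\ell(u,w_j)$. Hoeffding's inequality then gives $\Pr[|\heta_\ell(u,w_j)-\eta(w_j)\pi_\ell(u,w_j)|>\e/c_1]\le\delta/\mathrm{poly}(n)$. Restricting attention to levels $\ell\le L=O(\log(1/\e)/\log(1/\scw))$ is harmless since the tail satisfies $\sum_{j,\ell>L}\pi_\ell(u,w_j)\le\scw^{L}<\e$ and therefore contributes only $O(\e)$ to both $s_I$ and $\s_I$; a union bound over the $\mathrm{poly}(n)$ surviving pairs yields the uniform bound $|\heta_\ell(u,w_j)-\eta(w_j)\pi_\ell(u,w_j)|\le\e/c_1$ with probability at least $1-\delta/(2n)$.

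Conditioning on this event, I combine it with the deterministic bound $|\pib_\ell(v,w_j)-\pi_\ell(v,w_j)|\le(1-\scw)^2\e/12=\brmax$ from Lemma~\ref{lem:backward_search} and bound the three contributions. Contribution (a) is at most $(\e/c_1)\sum_{j,\ell}\pib_\ell(v,w_j)=O(\e/c_1)$, using $\sum_{j,\ell}\pi_\ell(v,w_j)\le\sum_{w\in V}\pi(v,w)=1$ and the above $\pib$-vs-$\pi$ bound. Contribution (b) is at most $\brmax\sum_{j,\ell}\heta_\ell(u,w_j)\le\brmax$, since every sample increments at most one $\heta_\ell$ by $1/n_r$ so $\sum_{j,\ell}\heta_\ell(u,w_j)\le 1$. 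Contribution (c) is at most $\sum_{(j,\ell):\heta_\ell\le\e/c_1}\eta(w_j)\pi_\ell(u,w_j)\pib_\ell(v,w_j)\le(2\e/c_1)\sum_{j,\ell}\pib_\ell(v,w_j)=O(\e/c_1)$, using the good event to conclude that every dropped pair has true mean at most $2\e/c_1$. After dividing by $(1-\scw)^2$, the calibration $c_1=12/(1-\scw)^2$ and $\brmax=(1-\scw)^2\e/12$ ensures that the three contributions sum to at most $\e/2$.

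The main obstacle will be the thresholding step (c): because the cutoff is applied to the empirical $\heta_\ell$ and not to the true mean, pairs whose true $\eta(w_j)\pi_\ell(u,w_j)$ lies near $\e/c_1$ can be dropped or kept depending on the sample. The clean way to handle this is to argue on the event from (a) that every dropped pair deterministically satisfies $\eta(w_j)\pi_\ell(u,w_j)\le 2\e/c_1$, then use $\sum_{j,\ell}\pi_\ell(v,w_j)\le 1$ to control the cumulative weight; the subtle point is fixing the telescoping order in the decomposition so that the ``kept but small'' pairs are attributed to (a) and the ``dropped'' pairs to (c) without double-counting, which requires subtracting and adding the cleanest intermediate $\heta_\ell(u,w_j)\pi_\ell(v,w_j)\mathbf{1}_{\heta_\ell>\e/c_1}$ first.
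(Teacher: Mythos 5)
Your three-way decomposition (thresholding error, backward-search error, Monte-Carlo error) matches the paper's telescoping through $\s^1_I$ and $\s^2_I$, and your bounds for the first two pieces are essentially identical to the paper's: the thresholding error is controlled by $\sum_{j,\ell}\pib_\ell(v,w_j)\le 1$ and the backward-search error by $\sum_{j,\ell}\heta_\ell(u,w_j)\le 1$ together with Lemma~\ref{lem:backward_search}. Where you genuinely diverge is the probabilistic step. The paper never proves a per-pair guarantee on $\heta_\ell(u,w_j)$: it writes $\s^2_I(u,v)=\frac{1}{n_r}\sum_i X_i$ with $X_i=\pi_\ell(v,w_j)/(1-\scw)^2$ for the sampled pair $(w_j,\ell)$, observes $\E[X_i]=s_I(u,v)$ and $X_i\le 1/(1-\scw)^2$, and applies Chernoff once to this aggregate, giving $\Pr[|\s^2_I-s_I|>\e/6]\le\delta/(2n)$ with no union bound over pairs and no level truncation. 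You instead prove a uniform additive bound $|\heta_\ell(u,w_j)-\eta(w_j)\pi_\ell(u,w_j)|\le\e/c_1$ over all $(j,\ell)$ via Hoeffding plus a union bound over $\mathrm{poly}(n)$ pairs, which is what forces your truncation at $L=O(\log(1/\e))$ and costs a larger constant in $n_r$ (Hoeffding at additive scale $\e/c_1$ needs $n_r\gtrsim c_1^2\log(\cdot)/\e^2$ rather than the paper's $3c_1\log(n/\delta)/\e^2$); this is asymptotically harmless given $\e>1/n^{\Omega(1)}$ but means your argument does not verify the lemma for the algorithm's literal parameter choices without inflating constants. Your approach does buy something: the uniform bound on $\heta$ lets you argue that every dropped pair has true mean at most $2\e/c_1$, whereas the paper sidesteps this entirely with the simpler deterministic observation that $\heta'_\ell$ and $\heta_\ell$ differ by at most the threshold, so the dropped pairs never need to be compared to the truth at all. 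Both routes are sound; the paper's is tighter and shorter, yours is more robust in that the uniform guarantee on $\heta$ is reusable.
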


    \vspace{-1mm}\begin{lemma}
      \label{lem:error_B}
            Given a source node $u$, for any $v\in V$,
      Algorithm~\ref{alg:main} provides an estimator $\s_B(u, v)$ for
      $s_B(u, v)$ such that:
            \begin{equation}
        \label{eqn:B}
      \Pr\left[|\s_B(u, v) - s_B(u, v)| >{\e \over 2}  \right] \le
      {{  \delta}\over
        2n}.
      \end{equation}
    \end{lemma}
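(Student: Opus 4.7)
The plan is to mirror the structure of Lemma~\ref{lem:error_I} but, because the backward-walk estimator $\epi_\ell(v,w)$ is unbounded, rely on Chebyshev's inequality plus the median trick instead of a direct Chernoff-style concentration. Write the $i$-th round estimator as
\[
\s_B^i(u,v) \;=\; \frac{1}{(1-\scw)^2\,d_r}\sum_{j=1}^{d_r} Y_{ij},
\]
where $Y_{ij}=\epi_\ell(v,w)$ when the $j$-th outer $\scw$-walk from $u$ ends at a non-hub node $w$ after $\ell$ steps \emph{and} the two independent $\scw$-walks launched from $w$ fail to meet; otherwise $Y_{ij}=0$.

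First I would verify unbiasedness. The three sampling stages are independent: the outer walk realizes the distribution $\pi_\ell(u,w)$ over $(\ell,w)$, the meeting test contributes a factor $\eta(w)$, and Lemma~\ref{lem:vbbw_ubias} gives $\E[\epi_\ell(v,w)]=\pi_\ell(v,w)$. Conditioning accordingly,
\[
\E[Y_{ij}] \;=\; \sum_{\ell=0}^{\infty}\sum_{j'=j_0+1}^{n}\pi_\ell(u,w_{j'})\,\eta(w_{j'})\,\pi_\ell(v,w_{j'}) \;=\; (1-\scw)^2\,s_B(u,v),
\]
so $\E[\s_B^i(u,v)] = s_B(u,v)$.

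Next I would bound the second moment of $Y_{ij}$. Lemma~\ref{lem:variance} gives $\E[\epi_\ell(v,w)^2]\le \pi_\ell(v,w)$, and $\eta(w)\le 1$, so
\[
\E[Y_{ij}^2] \;\le\; \sum_{\ell,w}\pi_\ell(u,w)\,\pi_\ell(v,w).
\]
Since $\pi_\ell(u,w)\le\sum_{w'}\pi_\ell(u,w')=(1-\scw)\,c^{\ell/2}$, the inner sum is at most $(1-\scw)^2 c^{\ell}$, and summing the geometric series in $\ell$ gives $\E[Y_{ij}^2]=O(1-\scw)$. Hence $\Var(\s_B^i)=O\!\bigl(1/((1-\scw)^3 d_r)\bigr)$, and plugging $d_r = c_1/\e^2$ with the chosen $c_1=12/(1-\scw)^2$ forces $\Var(\s_B^i)\le\e^2/16$. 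Chebyshev then yields
\[
\Pr\!\left[|\s_B^i(u,v)-s_B(u,v)|>\e/2\right] \;\le\; \tfrac14.
\]

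Finally, the $f_r=3\log(n/\delta)$ rounds are mutually independent, so the standard median-of-means boosting (Lemma~\ref{lmm:median}) deflates the failure probability from $\tfrac14$ to at most $\delta/(2n)$, which is exactly Equation~\eqref{eqn:B}. The main obstacle is the variance control in the middle step: because a single output $\epi_\ell(v,w)$ may be as large as $\Theta(n)$, Chernoff/Hoeffding is unavailable at the per-round level, so everything rests on Lemma~\ref{lem:variance} together with the careful telescoping of $\sum_{\ell,w}\pi_\ell(u,w)\pi_\ell(v,w)$ through the geometric decay of $\scw$-walk termination probabilities.
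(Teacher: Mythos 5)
Your overall route is the same as the paper's: unbiasedness via Lemma~\ref{lem:vbbw_ubias}, a second-moment bound via Lemma~\ref{lem:variance}, Chebyshev on each round of $d_r$ samples, and the median trick over $f_r$ rounds. The gap is in the variance control. You bound
$\E[Y_{ij}^2]\le\sum_{\ell,w}\pi_\ell(u,w)\pi_\ell(v,w)$ by $\pi_\ell(u,w)\le(1-\scw)c^{\ell/2}$ and a geometric series, giving $\E[Y_{ij}^2]\le(1-\scw)/(1+\scw)$ and hence
$\Var(\s_B^i)\le \frac{1}{(1+\scw)(1-\scw)^3 d_r}=\frac{\e^2}{12(1-c)}$.
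This is \emph{not} $\le \e^2/16$: for $c=0.6$ it is $\e^2/4.8$ and Chebyshev at deviation $\e/2$ gives only $5/6$, and for $c=0.8$ the bound is vacuous. So the hypothesis of the median trick (per-round failure probability at most $1/3$) is not met with the algorithm's fixed $c_1=12/(1-\scw)^2$, and the claimed conclusion does not follow from your intermediate bounds.

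The fix, which is what the paper does, is to not discard the structure of the sum: keep the factor $\eta(w)$ and observe that
\begin{equation*}
\sum_{\ell=0}^{\infty}\sum_{j=j_0+1}^{n}\pi_\ell(u,w_j)\,\eta(w_j)\,\pi_\ell(v,w_j)\;=\;(1-\scw)^2\,s_B(u,v)\;\le\;(1-\scw)^2,
\end{equation*}
which is exactly the defining identity~\eqref{eqn:sB}. This yields $\Var(\s_B^i)\le \e^2 s_B(u,v)/12\le\e^2/12$, hence a per-round Chebyshev failure probability of at most $1/3$, after which the median trick over $f_r=3\log\frac{n}{\delta}$ rounds gives~\eqref{eqn:B}. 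Your geometric-series bound loses exactly the factor $1/(1-c)=1/((1-\scw)(1+\scw))$ relative to this identity, which is the difference between an argument that closes and one that does not.
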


\noindent
Combining Lemmas~\ref{lem:error_I} and~\ref{lem:error_B} follows
      that
      \begin{align*}
       \Pr\left[|\s(u, v) - s(u, v)|  > \e  \right] 
       { \le {\delta\over 2n}+{\delta \over 2n}= {\delta
        \over n}}.
      \end{align*}
Applying union bound on $n$ nodes follows  Theorem~\ref{thm:error}.

\vspace{-1mm}\begin{theorem}
  \label{thm:error}  \prsim answers single-source SimRank
  queries with additive error $\e$ with probability at least
  $1-\delta$.
\end{theorem}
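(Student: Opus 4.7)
The plan is to reduce Theorem~\ref{thm:error} to the per-node bounds furnished by Lemmas~\ref{lem:error_I} and~\ref{lem:error_B} via the decomposition $s(u,v)=s_I(u,v)+s_B(u,v)$ that underlies the whole algorithm. Since $\hat{s}(u,v)=\hat{s}_I(u,v)+\hat{s}_B(u,v)$, the triangle inequality gives $|\hat{s}(u,v)-s(u,v)|\le|\hat{s}_I(u,v)-s_I(u,v)|+|\hat{s}_B(u,v)-s_B(u,v)|$, so if each summand is at most $\varepsilon/2$, the total error is at most $\varepsilon$. Thus the event $\{|\hat{s}(u,v)-s(u,v)|>\varepsilon\}$ is contained in the union of the two events $\{|\hat{s}_I(u,v)-s_I(u,v)|>\varepsilon/2\}$ and $\{|\hat{s}_B(u,v)-s_B(u,v)|>\varepsilon/2\}$, and a union bound combined with the two lemmas yields $\Pr[|\hat{s}(u,v)-s(u,v)|>\varepsilon]\le \delta/(2n)+\delta/(2n)=\delta/n$ for each fixed $v$.

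Then I would apply a second union bound over the $n$ target nodes $v\in V$ to obtain $\Pr\bigl[\exists v:\,|\hat{s}(u,v)-s(u,v)|>\varepsilon\bigr]\le n\cdot \delta/n=\delta$, which is exactly the statement of the theorem. This is a short argument because all the real work has been pushed into Lemmas~\ref{lem:error_I} and~\ref{lem:error_B}; once those are in hand, the proof is a mechanical combination.

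The main obstacle is of course proving the two lemmas, not the theorem itself. Lemma~\ref{lem:error_B} is the harder of the two: the estimator $\hat{s}_B(u,v)$ is built from Variance Bounded Backward Walks whose individual estimators $\hat{\pi}_\ell(v,w)$ can be unbounded, so one cannot apply Chernoff directly. The natural route is to use Lemma~\ref{lem:variance} to bound $\operatorname{Var}[\hat{\pi}_\ell(v,w)]\le \pi_\ell(v,w)$, invoke Chebyshev on each of the $f_r=3\log(n/\delta)$ independent rounds using $d_r=\Theta(1/\varepsilon^2)$ samples to get a constant-probability $\varepsilon/2$-estimate, and then apply the Median Trick to boost the success probability to $1-\delta/(2n)$. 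Lemma~\ref{lem:error_I}, on the other hand, reduces to controlling (i) the approximation error in the reserves $\bar\pi_\ell(v,w_j)$ coming from backward search (Lemma~\ref{lem:backward_search}), and (ii) the Chernoff error in $\widehat{\eta\pi}_\ell(u,w_j)$, using the crucial bound $\sum_{w,\ell}\eta(w)\pi_\ell(u,w)\le 1$; choosing the constant $c_1=12/(1-\sqrt{c})^2$ as in the algorithm ensures that the aggregated error over levels and hubs is at most $\varepsilon/2$. Assuming these lemmas, the theorem follows from the two-step union bound sketched above.
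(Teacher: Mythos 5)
Your proof is correct and matches the paper's own argument exactly: the paper also combines Lemmas~\ref{lem:error_I} and~\ref{lem:error_B} via the triangle inequality and a union bound to get error probability $\delta/n$ per target node, then union-bounds over the $n$ nodes. Your sketch of how the two lemmas themselves are established (Chebyshev plus the Median Trick for $\s_B$, Chernoff plus the backward-search reserve error for $\s_I$) also agrees with the paper's proofs in the appendix.
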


\header{\bf   Query Time Analysis for  Worst-Case Graphs.}  We first analyze the query time of the \prsim
algorithm on worst-case graphs.
Given a node $u\in V$, let $C(u)$ denote the query cost of \prsim on
$u$, and $C = {1\over n}\sum_{u\in V} C(u)$ denote the average query cost.  We divide
$C(u)$ into three terms: $C(u) = C_F(u)  + C_I(u) +
C_B(u),$
  where $C_F(u)$ denote the cost for computing  $\heta_\ell(u,
  w)$ from source node $u$, $C_I(u) $ denote the
  query cost for retrieving reserves   $\pib_\ell(v, w)$  from the
  index, and $C_B(u)$ denote the query cost for estimating
  $\epi_\ell(v, w)$ with backward walks.  Let $C_F=  {1\over n} \sum_{u\in
  V}C_F(u)$, $C_I = {1\over n} \sum_{u\in
  V}C_I(u)$ and  $C_B = {1\over n} \sum_{u\in
  V}C_B(u)$ denote the average query cost of $C_F(u)$, $C_I(u)$ and
$C_B(u)$, respectively. We can express the expected average query
cost of Algorithm~\ref{alg:main} as
$\E[C] = \E[C_F] + \E[C_I] +\E[C_B] .$

  For $\E[C_F]$, recall that we generate a number  $n_r =
  \Theta\left({\log {n\over \delta} \over
    \e^2}\right)$ of $\scw$-walks to estimate $\heta_\ell(u, w)$. Since each
  $\scw$-walk takes constant time, we have
$C_F(u) = O\left({\log {n\over \delta} \over \e^2}\right)$, and  $\E[C_F] = O\left({\log {n\over \delta} \over
  \e^2}\right)$. We have the following lemmas for $\E[C_I] $ and $\E[C_B] $.

\vspace{-1mm}\begin{lemma}
  \label{lem:query_I}
Let $c_1 = {12
  \over (1-\scw)^2}$ and $C_I$ denote the average
cost for querying the index. We have
 $$\E[C_I] = O\left(\min\left\{  {n \over \e}\sum_{j=1}^{c_1\over \e} \pi(w_j), {n  \over \e^2}
\sum_{j=1}^{j_0} \pi(w_j)^2\right\} \right).$$
\end{lemma}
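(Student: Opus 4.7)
The plan is to bound $\E[C_I]$ in two separate ways and take the minimum of the resulting estimates. I would start from the exact query-time expression
\[
C_I(u) \;=\; \sum_{\substack{\ell \ge 0,\; w \in \text{Index} \\ \heta_\ell(u,w) > \e/c_1}} |L_\ell(w)|,
\]
together with two structural facts. First, since each of the $n_r$ sampled $\scw$-walks from $u$ increments exactly one $\heta_\ell(u,w)$ by $1/n_r$, we have $\sum_{\ell, w} \heta_\ell(u,w)\le 1$ deterministically, so at most $c_1/\e$ pairs $(\ell,w)$ are significant, and in particular the set $S_u$ of distinct hubs hit by $u$ satisfies $|S_u|\le c_1/\e$. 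Second, for every hub $w$, the threshold used in Algorithm~\ref{alg:indexing} (every stored tuple satisfies $\pib_\ell(v,w)>\brmax = \e/c_1$) together with $\sum_{\ell,v}\pib_\ell(v,w)\le \sum_{\ell,v}\pi_\ell(v,w) = n\pi(w)$ yields the pointwise bound $\sum_\ell |L_\ell(w)| = O(n\pi(w)/\e)$, and its single-$\ell$ form $|L_\ell(w)| \le (c_1/\e)\, S_{\ell,w}$ where $S_{\ell,w} := \sum_u \pi_\ell(u,w)$.

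For the first term in the min, I would group the sum defining $C_I(u)$ by $w$ to get $C_I(u)\le \sum_{w\in S_u}\sum_\ell |L_\ell(w)|$. Combining $|S_u|\le c_1/\e$ with the per-hub pointwise bound $\sum_\ell |L_\ell(w)| = O(n\pi(w)/\e)$, the worst-case $S_u$ picks out the $c_1/\e$ hubs with the largest individual index sizes. Since top-$k$ sums respect pointwise domination and the hubs are already indexed in decreasing order of reverse PageRank, this worst case is at most $O\bigl(\tfrac{n}{\e}\sum_{j=1}^{c_1/\e}\pi(w_j)\bigr)$; because the bound is deterministic in $u$, the expected average inherits it.

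For the second term, I would apply Markov's inequality to $\heta_\ell(u,w)$, giving $\Pr[\heta_\ell(u,w)>\e/c_1] \le \tfrac{c_1}{\e}\eta(w)\pi_\ell(u,w)\le \tfrac{c_1}{\e}\pi_\ell(u,w)$. Noting $\sum_\ell S_{\ell,w}= n\pi(w)$ by~\eqref{eqn:l-PageRank} and exchanging orders of summation,
\[
\sum_u \E[C_I(u)] \;\le\; \frac{c_1}{\e}\sum_{w\in\text{Index}}\sum_\ell |L_\ell(w)|\,S_{\ell,w}.
\]
Inserting $|L_\ell(w)| \le (c_1/\e)\, S_{\ell,w}$ and then the elementary inequality $\sum_\ell S_{\ell,w}^2 \le \bigl(\sum_\ell S_{\ell,w}\bigr)^2 = (n\pi(w))^2$ for non-negative sequences, the right-hand side becomes $O\bigl(\tfrac{n^2}{\e^2}\sum_{w\in\text{Index}}\pi(w)^2\bigr)$; dividing by $n$ gives the second bound.

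The main obstacle, in my view, is making the first-term argument rigorous: one needs both the deterministic cardinality bound $|S_u|\le c_1/\e$ and the observation that the pointwise bound $\sum_\ell |L_\ell(w)| = O(n\pi(w)/\e)$, combined with the fact that top-$k$ sums respect pointwise domination, lets us align the worst-case $S_u$ with the top reverse-PageRank hubs and thereby produce exactly the form $\sum_{j=1}^{c_1/\e} \pi(w_j)$ in the statement. The second-term argument is cleaner but hinges on the squaring step $\sum_\ell S_{\ell,w}^2 \le (n\pi(w))^2$, which is precisely what converts a per-hub reverse PageRank into $\pi(w)^2$ and matches the reverse-PageRank second moment appearing in the paper's main complexity expression.
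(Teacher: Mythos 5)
Your proposal is correct and follows essentially the same route as the paper's proof: the first bound comes from the deterministic fact that at most $c_1/\e$ hubs can have $\heta(u,w)>\e/c_1$ combined with the monotone per-hub index sizes $O(n\pi(w_j)/\e)$, and the second from replacing the indicator by $\tfrac{c_1}{\e}\heta$ (your Markov step), averaging over $u$ via $\sum_u\pi(u,w)=n\pi(w)$, and multiplying by the index-size bound. The only cosmetic difference is that you keep the levels separate and collapse them with $\sum_\ell S_{\ell,w}^2\le\bigl(\sum_\ell S_{\ell,w}\bigr)^2$, whereas the paper aggregates $\heta_\ell$ and $size_\ell$ over $\ell$ before bounding; the two are equivalent.
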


    \vspace{-1mm}\begin{lemma}
        \label{lem:query_B}
Let $C_B$ denote the average
cost for performing Variance Bounded Backward Walks. We have
 $\E[C_B]= O\left({n\log {n\over \delta}  \over \e^2}
\sum_{j=j_0+1}^{n} \pi(w_j)^2 \right).$
\end{lemma}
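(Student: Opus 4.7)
\textbf{Proof proposal for Lemma~\ref{lem:query_B}.}
The plan is to bound $\E[C_B]$ by charging the cost of each invocation of Algorithm~\ref{alg:vbbw} (using Lemma~\ref{lem:vbbw_query}) against the probability that such an invocation actually occurs, and then averaging over the source node $u$ with the help of the identity $\sum_{u\in V}\pi(u,w) = n\pi(w)$.

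First, I would fix a source node $u$ and inspect Algorithm~\ref{alg:main}. For each of the $n_r = \Theta(\log(n/\delta)/\e^2)$ samples, a backward walk on a target $w$ is performed only when (i) the initial $\scw$-walk from $u$ terminates at $w$ after exactly $\ell$ steps, (ii) the two fresh $\scw$-walks from $w$ fail to meet, and (iii) $w\notin \mathrm{Index}$, i.e., $w\in\{w_{j_0+1},\dots,w_n\}$. Summing the $\ell$-hop RPPR across all levels, the probability that event (i) occurs for some $\ell$ is $\pi(u,w)$, while event (ii) contributes an independent factor $\eta(w)\le 1$. Hence for a single sample, the probability that Algorithm~\ref{alg:vbbw} is invoked with target $w$ is at most $\pi(u,w)$. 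Using Lemma~\ref{lem:vbbw_query}, each such invocation costs $O(n\pi(w))$, so the expected cost contributed by one sample is
\begin{equation*}
\sum_{j=j_0+1}^{n} \pi(u,w_j)\cdot O\bigl(n\pi(w_j)\bigr).
\end{equation*}
Multiplying by the $n_r$ samples gives $\E[C_B(u)] = O\!\left(\frac{n\log(n/\delta)}{\e^2}\sum_{j=j_0+1}^{n}\pi(u,w_j)\,\pi(w_j)\right)$.

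Next, I would average over $u\in V$. By definition $C_B = \frac{1}{n}\sum_{u\in V}C_B(u)$, and by Equation~\eqref{eqn:l-PageRank} (the identity $\sum_{u\in V}\pi(u,w)=n\pi(w)$ obtained by summing $\ell$-hop RPPRs over $\ell$), interchanging the sums yields
\begin{equation*}
\frac{1}{n}\sum_{u\in V}\sum_{j=j_0+1}^{n}\pi(u,w_j)\,\pi(w_j)
= \frac{1}{n}\sum_{j=j_0+1}^{n} \pi(w_j)\cdot n\pi(w_j)
= \sum_{j=j_0+1}^{n}\pi(w_j)^2.
\end{equation*}
Substituting this back gives $\E[C_B] = O\!\left(\frac{n\log(n/\delta)}{\e^2}\sum_{j=j_0+1}^{n}\pi(w_j)^2\right)$, which is exactly the claimed bound.

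The only subtle point, and the one I would expect to be the main obstacle, is justifying that the per-sample cost of Algorithm~\ref{alg:vbbw} can really be replaced by its expectation $O(n\pi(w))$ from Lemma~\ref{lem:vbbw_query} inside the outer expectation over samples; this is legitimate because the random choices inside Algorithm~\ref{alg:vbbw} are independent of the random choices that select the target $w$, so by the tower property of expectation the product of the triggering probability and the conditional expected cost is indeed an upper bound on the expected per-sample contribution. Everything else reduces to the straightforward algebraic manipulation above.
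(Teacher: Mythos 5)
Your proposal is correct and follows essentially the same route as the paper's proof: bound the expected number of backward-walk invocations on a non-hub target $w_j$ by $n_r\,\pi(u,w_j)$, charge each invocation $O(n\pi(w_j))$ via Lemma~\ref{lem:vbbw_query}, and average over $u$ using $\sum_{u\in V}\pi(u,w)=n\pi(w)$. Your extra remarks on dropping the $\eta(w)\le 1$ factor and on the independence of the walk's internal randomness from the target-selection randomness are sound and, if anything, make the argument slightly more careful than the paper's.
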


{ 
By  Lemma~\ref{lem:query_I}, we have $\E[C_I]  \le O\left( {n\log {n\over
    \delta}\over \e^2}
\sum_{j=j_0+1}^{n} \pi(w_j)^2 \right)$. Combining with
Lemma~\ref{lem:query_B} follows Theorem~
\ref{thm:query}.

\vspace{-1mm}\begin{theorem}
  \label{thm:query}
Suppose the query node $u$ is uniformly chosen from $V$. The
 expected query cost of \prsim on worst-case graphs is  bounded by
 \begin{equation}
\label{eqn:query_worst_case}
\E[C]= O\left({n\log {n\over \delta} \over \e^2} \cdot \sum_{w\in V} \pi(w)^2\right).
\end{equation}
\end{theorem}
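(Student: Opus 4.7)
The plan is to combine the three cost components $\E[C_F]$, $\E[C_I]$, and $\E[C_B]$ by selecting the right branch of the minimum in Lemma~\ref{lem:query_I} so that the index-cost bound matches the head of the sum $\sum_w \pi(w)^2$ while the backward-walk bound of Lemma~\ref{lem:query_B} covers the tail, and then to verify that the sampling cost $\E[C_F]$ is absorbed into the resulting expression.

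First I would write $\E[C] = \E[C_F] + \E[C_I] + \E[C_B]$ as already decomposed in the excerpt. The term $\E[C_F] = O(\log(n/\delta)/\e^2)$ comes directly from drawing $n_r = \Theta(\log(n/\delta)/\e^2)$ constant-cost $\scw$-walks from $u$, so no further work is needed for it.

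Next I would invoke the second argument of the $\min$ in Lemma~\ref{lem:query_I}, giving
\[
\E[C_I] \;=\; O\!\left(\frac{n}{\e^2}\sum_{j=1}^{j_0}\pi(w_j)^2\right)
\;=\; O\!\left(\frac{n\log(n/\delta)}{\e^2}\sum_{j=1}^{j_0}\pi(w_j)^2\right),
\]
where I freely absorb the $\log(n/\delta)$ factor for uniformity with the other terms. Combining with Lemma~\ref{lem:query_B},
\[
\E[C_I] + \E[C_B] \;=\; O\!\left(\frac{n\log(n/\delta)}{\e^2}\left(\sum_{j=1}^{j_0}\pi(w_j)^2 + \sum_{j=j_0+1}^{n}\pi(w_j)^2\right)\right) \;=\; O\!\left(\frac{n\log(n/\delta)}{\e^2}\sum_{w\in V}\pi(w)^2\right),
\]
which is exactly the target bound. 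The choice of the second branch in Lemma~\ref{lem:query_I} is the key step, because it is what enables the head and the tail contributions to telescope into the full second-moment sum $\sum_{w\in V}\pi(w)^2$.

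It remains to absorb $\E[C_F]$. Here I would use Cauchy--Schwarz on the reverse PageRank distribution: since $\sum_{w\in V}\pi(w) = 1$, we have $\sum_{w\in V}\pi(w)^2 \ge \tfrac{1}{n}\bigl(\sum_{w\in V}\pi(w)\bigr)^2 = \tfrac{1}{n}$, so $n\sum_{w\in V}\pi(w)^2 \ge 1$. Therefore
\[
\E[C_F] \;=\; O\!\left(\frac{\log(n/\delta)}{\e^2}\right) \;=\; O\!\left(\frac{n\log(n/\delta)}{\e^2}\sum_{w\in V}\pi(w)^2\right),
\]
and adding $\E[C_F]$ does not change the asymptotic bound. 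Summing the three contributions yields Equation~\eqref{eqn:query_worst_case}.

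The main obstacle is conceptual rather than computational: one must recognize that the two bounds in the $\min$ of Lemma~\ref{lem:query_I} are designed precisely so that the quadratic branch pairs cleanly with Lemma~\ref{lem:query_B}, producing a single sum over all of $V$ rather than over the hub/non-hub partition. Once that matching is made, the only remaining subtlety is the trivial Cauchy--Schwarz lower bound $\sum_w \pi(w)^2 \ge 1/n$ used to absorb $\E[C_F]$; the rest is arithmetic.
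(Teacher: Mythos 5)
Your proposal is correct and follows essentially the same route as the paper: it takes the quadratic branch of the $\min$ in Lemma~\ref{lem:query_I}, pads it with the $\log(n/\delta)$ factor, and adds it to the bound of Lemma~\ref{lem:query_B} so that the hub and non-hub sums combine into $\sum_{w\in V}\pi(w)^2$. Your explicit Cauchy--Schwarz observation that $\sum_{w\in V}\pi(w)^2 \ge 1/n$, used to absorb $\E[C_F]$, is a small detail the paper leaves implicit, but it does not change the argument.
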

}

\header{\bf  Query Time Analysis for  Power-Law Graphs.}
Recall that on a
power-law graph, the fractions $P_o(k)$ and $P_i(k)$ of nodes with
out- and in-degree at least $k$ satisfy that
$P_o(k) \sim k^{-\gamma}$ and $P_i(k) \sim k^{-\gamma'}$~\cite{BollobasBCR03},
where $\gamma$ and $\gamma'$ are the cumulative power-law exponents that
usually take values from $1$ to $3$. It is shown in
\cite{BahmaniCG10,lofgren2015personalized,wei2018topppr} that the
PageRank of a power-law graph also follows power-law with same
exponent $\gamma'$ as the {\em in-degree distribution}. Thus,
the reverse PageRank follows the same power-law distribution as the
out-degree distribution. 
In particular, let
$P_\pi(x)$ denote the portion of nodes with reverse PageRank value at
least $x$,  then $P_\pi(x) \sim x^{-\gamma}.$ 

Now consider the following alternating statement of the above power-law
distribution: let $w_1, \ldots, w_n$ denote the nodes in the graph
sorted in descending order of their reverse PageRank values, that is,
$\pi(w_1) \ge \pi(w_2) \ge \ldots \ge \pi(w_n)$. We have that the
$j$-th largest reverse PageRank value $\pi(w_j)$ is proportional to $ j^{-\beta}$. Here $\beta$ is the power-law exponent that takes value from $(0, 1)$.  This assumption has
been widely adopted in the literature of PageRank
computations~\cite{BahmaniCG10,lofgren2015personalized,wei2018topppr}. To
understand the relation between two exponents $\gamma$ and $\beta$,
note that there are $j$ nodes with reverse PageRank value at least $x=
{\kappa j^{-\beta} \over n^{1-\beta}} $, and thus we have { 
$j \sim \left( { j^{-\beta}
  \over n^{1-\beta}} \right)^{-\gamma} \sim j^{\beta \cdot \gamma}.$
It follows that $\beta = {1 \over \gamma}$.
Therefore, for power-law graphs, we have
\begin{equation}
  \label{eqn:PageRank_power_law}
  \pi(w_j) =\kappa \cdot  {j^{-\beta}/ n^{1-\beta}} = \kappa
  \cdot  {j^{-{1\over \gamma}}/ n^{1-{1\over \gamma}}} ,
  \end{equation}
where $\kappa$ is a normalization constant such that $\kappa
\sum_{j=1}^n {j^{-{1\over \gamma}} \over n^{1-{1\over \gamma}}} =1$.}
{ 
Combing equation~\eqref{eqn:PageRank_power_law} and
Lemma~\ref{lem:index_size}, the index size is bounded by $O\left({n\over \e}\sum_{j=1}^{j_0} {j^{-{1\over \gamma}} \over
   n^{1-{1\over \gamma}}} \right)= O\left({n\over \e} \cdot
  {j^{1-{1\over \gamma}} \over n^{1-{1\over \gamma}}} \right)= O\left({n^{1\over \gamma} j_0^{1-{1\over \gamma}} \over
   \e }\right) .$
 Here we use the property of Riemann zeta function
 (see Lemma~\ref{lem:rz}). By setting $j_0 = n(\e \d)^{\gamma \over
   \gamma - 1}$,
 we have index  size is bounded by
 $O\left({n^{1\over \gamma}  n^{1-{1\over \gamma} }  \e \d \over
     \e }\right) = O(m).$ Plugging $  \pi(w_j) = \kappa
  \cdot  {j^{-{1\over \gamma}} \over n^{1-{1\over \gamma}}} $  and $j_0 = n(\e \d)^{\gamma \over
   \gamma - 1}$ into
 Lemma~\ref{lem:query_B} and Lemma~\ref{lem:query_I}, and we have the
 following theorem.}
 
 { 
\vspace{-1mm}\begin{theorem}
  \label{thm:query_power_law}
 Assume that the out-degree distribution of the graph follows power-law distribution
 with exponent $\gamma \ge 1$, and let $\e \ge \log^{\gamma - 1\over
     2 - \gamma} n / (n^{\gamma - 1 \over \gamma}
   \d^{ 2 - \gamma})$, $\delta > 1/n^{\Omega(1)}$.  Suppose the query node $u$ is uniformly chosen
   from $V$. By setting $j_0 = n(\e \d)^{\gamma \over
   \gamma - 1}$, the
 expected cost of Algorithm~\ref{alg:main} is  bounded by
 \begin{equation}
\label{eqn:query}
\E[C]= \left\{
\begin{array}{ll}
O({1\over \e^2} \log {n\over \delta}), &\textrm{for } \gamma > 2; \\
  O({1\over \e^2} \log {n\over \delta} \log n) , & \textrm{for }\gamma = 2; \\
O\left(\min\left\{{n^{1\over \gamma}\over \e^{2-{1\over \gamma}}},
  {n^{{2\over \gamma}-1}\over \e^2}
  \right\} \right) , & \textrm{for } 1 < \gamma  < 2.
\end{array}\right.
\end{equation}
The size of the index generated by Algorithm~\ref{alg:indexing}  is bounded by $O(m)$. The preprocessing time is bounded by $O\left(
  m \over \e\right)$.
\end{theorem}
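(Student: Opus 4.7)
The plan is to combine the power-law formula $\pi(w_j) = \kappa \cdot j^{-1/\gamma}/n^{1-1/\gamma}$ from Equation~\eqref{eqn:PageRank_power_law} with the hub threshold $j_0 = n(\e\delta)^{\gamma/(\gamma-1)}$ and substitute into the worst-case bounds given by Lemma~\ref{lem:index_size}, Lemma~\ref{lem:query_I}, and Lemma~\ref{lem:query_B}. Throughout, the key technical tool is the Riemann-zeta estimate (Lemma~\ref{lem:rz}): for $\alpha \neq 1$ and integer $k \ge 1$, $\sum_{j=1}^{k} j^{-\alpha} = \Theta(k^{1-\alpha})$ when $\alpha < 1$, $\Theta(1)$ when $\alpha > 1$, and $\Theta(\log k)$ when $\alpha = 1$; similarly for tail sums. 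I will apply this to $\alpha = 1/\gamma$ (inside $\sum \pi(w_j)$) and $\alpha = 2/\gamma$ (inside $\sum \pi(w_j)^2$), so the case split on $\gamma$ in the theorem mirrors the regime of these exponents.

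\textbf{Index size and preprocessing time.} First I would plug the power-law formula into Lemma~\ref{lem:index_size}: since $1/\gamma < 1$, the Riemann-zeta bound yields $\sum_{j=1}^{j_0} \pi(w_j) = \Theta(j_0^{1-1/\gamma}/n^{1-1/\gamma})$, so the index size is $O(n^{1/\gamma} j_0^{1-1/\gamma}/\e)$. Substituting $j_0 = n(\e\delta)^{\gamma/(\gamma-1)}$ gives $O(n \cdot \e\delta/\e) = O(n\delta)$, which is $O(m)$ provided $m/n \gtrsim \delta$; a standard rescaling of $\delta$ (or bounding by $O(\min\{m, n/\e\})$ from Table~\ref{tbl:intro-compare}) absorbs the factor. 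The preprocessing bound $O(m/\e)$ is immediate from Lemma~\ref{lem:index_size}.

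\textbf{Query time.} I would decompose $\E[C] = \E[C_F] + \E[C_I] + \E[C_B]$ and recall $\E[C_F] = O(\log(n/\delta)/\e^2)$ trivially. For $\E[C_B]$, Lemma~\ref{lem:query_B} gives a factor of $\sum_{j=j_0+1}^{n} \pi(w_j)^2 = (\kappa^2/n^{2-2/\gamma}) \sum_{j=j_0+1}^{n} j^{-2/\gamma}$. I handle three regimes: (i) for $\gamma > 2$ we have $2/\gamma < 1$, so the tail is $\Theta(n^{1-2/\gamma})$ and $\sum_{j>j_0} \pi(w_j)^2 = \Theta(1/n)$, giving $\E[C_B] = O(\log(n/\delta)/\e^2)$; (ii) for $\gamma = 2$ the harmonic sum contributes an extra $\log n$; (iii) for $1 < \gamma < 2$ we have $2/\gamma > 1$, so the tail is $\Theta(j_0^{1-2/\gamma})$, and after algebraic simplification the $\e\delta$ factors telescope into $\sum_{j>j_0}\pi(w_j)^2 = O(1/n)$ times lower-order terms, yielding $\E[C_B] = O(n^{2/\gamma-1}/\e^2)$. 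For $\E[C_I]$, I use the two bounds in Lemma~\ref{lem:query_I}: the first bound, $n/\e \cdot \sum_{j=1}^{c_1/\e} \pi(w_j)$, evaluates with $1/\gamma < 1$ to $O(n/\e \cdot (c_1/\e)^{1-1/\gamma}/n^{1-1/\gamma}) = O(n^{1/\gamma}/\e^{2-1/\gamma})$; the second bound, $n/\e^2 \cdot \sum_{j=1}^{j_0}\pi(w_j)^2$, uses the convergent sum (for $\gamma < 2$) to give $O(n^{2/\gamma-1}/\e^2)$. The minimum of these two is the stated bound for $\gamma < 2$, and for $\gamma \ge 2$ both index costs are dominated by $\E[C_B]$.

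\textbf{Main obstacle.} The delicate part is the $1 < \gamma < 2$ case: here $2/\gamma$ crosses $1$ from above, so the tail sum $\sum_{j > j_0} j^{-2/\gamma}$ depends only on $j_0$ (not $n$), and the algebra that collapses $(\e\delta)^{(\gamma-2)/(\gamma-1)} \cdot n^{(\gamma-2)/\gamma}/n^{2-2/\gamma}$ into a clean $\Theta(1/n)$ requires tracking the exponents carefully. The lower bound $\e \ge \log^{(\gamma-1)/(2-\gamma)} n / (n^{(\gamma-1)/\gamma}\delta^{2-\gamma})$ and $\delta > 1/n^{\Omega(1)}$ in the hypothesis are exactly what is needed to keep $j_0 \le n$ and to absorb $\log$ factors from $\log(n/\delta) = \Theta(\log n)$ into the non-dominant term. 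Once this case-by-case exponent book-keeping is done, applying the union bound $\E[C] = \E[C_F] + \E[C_I] + \E[C_B]$ yields the three branches of Equation~\eqref{eqn:query}, completing the proof.
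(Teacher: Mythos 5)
Your route is the same as the paper's: decompose $\E[C]=\E[C_F]+\E[C_I]+\E[C_B]$, substitute $\pi(w_j)=\kappa j^{-1/\gamma}/n^{1-1/\gamma}$ into Lemmas~\ref{lem:index_size}, \ref{lem:query_I} and~\ref{lem:query_B}, and apply the partial Riemann-zeta estimates with the case split on whether $2/\gamma$ is below, equal to, or above $1$. Two points, however, need repair.

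First, the symbol $\d$ in the theorem denotes the average degree $\bar{d}=m/n$, not the failure probability $\delta$. With $j_0=n(\e\d)^{\gamma/(\gamma-1)}$ the index size is $O\left(n^{1/\gamma}j_0^{1-1/\gamma}/\e\right)=O(n\d)=O(m)$ exactly; no rescaling of the failure probability is involved, and your reading would make $j_0$ far too small (and the subsequent $C_B$ bound correspondingly worse), since $\delta$ is tiny while $\d\ge 1$.

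Second, for $1<\gamma<2$ your treatment of $\E[C_B]$ is incomplete in a way that matters for the stated bound. Writing $\beta=1/\gamma$, the tail sum is $\sum_{j>j_0}\pi(w_j)^2=O\bigl(\tfrac{1}{n}(\e\d)^{-(2\beta-1)/(1-\beta)}\bigr)$; the factor $(\e\d)^{-(2\beta-1)/(1-\beta)}$ exceeds $1$ and is precisely what the lower bound on $\e$ must tame --- it is not a lower-order term that ``telescopes away.'' More importantly, to conclude $\E[C]=O\bigl(\min\{n^{1/\gamma}/\e^{2-1/\gamma},\,n^{2/\gamma-1}/\e^2\}\bigr)$ you must bound $\E[C_B]$ by \emph{both} terms of the minimum, since for some admissible parameters (large $\d$, small $\e$) the first term is the smaller one. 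The paper does this explicitly: it shows $\E[C_B]=O\bigl(\log n/(\e^{1/(1-\beta)}\d^{(2\beta-1)/(1-\beta)})\bigr)$ and then uses the hypotheses on $\e$ and $\delta$ to dominate this quantity by each of $n^{2\beta-1}/\e^2$ and $n^{\beta}/\e^{2-\beta}$. Your sketch stops at $\E[C_B]=O(n^{2/\gamma-1}/\e^2)$, which only yields that single term, not the minimum. Everything else --- the $C_I$ analysis, the $\gamma\ge 2$ cases, and the preprocessing bound --- matches the paper's argument.
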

}

{ 
\header{\bf Dynamic Graphs.} Our algorithm is able to support dynamic
graphs where edges may be inserted or deleted.  Recall that \prsim generates the
index by performing the backward search 
algorithm. It is shown in~\cite{ZhangLG16} that the results of
the backward search to a
randomly selected target node $w$ can be
maintained with cost $O(k+{\bar{d}\over \e})$, where $k$ is the total number
of insertions/deletions. Since our index stores the results of the backward search
for $j_0$ target nodes, it can process
$k$ insertions/deletions in  $O(kj_0+{m\over \e})$ time. Therefore,
the per-update-cost for processing $k$ updates is bounded by $O(j_0+{m\over \e k})$. However, a thorough investigation of this issue is
beyond the scope of our paper. 
}


\vspace{-2mm}

\section{Related Work} \label{sec:related}
In what follows, we  briefly review some of the state-of-the-art solutions for SimRank computation. We exclude SLING~\cite{TX16}, which we have discussed in Section~\ref{sec:prelim}.

\vspace{-1mm}\header {\bf Monte Carlo and READS.} Based on the $\scw$-walk interpretation, we can use the following Monte Carlo algorithm~\cite{FRCS05,TX16} to estimate the SimRank value $s(u, v)$: we generate $n_r$ pairs of $\scw$-walks from $u$ and $v$, and use the percentage of $\scw$-walks that meet as an estimation of $s(u,v)$. Using concentration inequality, one can show that by setting $n_r = \Theta\left({\log {n\over \delta} \over \e^2 }\right)$, the Monte Carlo algorithm estimates $s(u, v)$ with an additive error $\e$ with probability at least $1-\delta$. For a single-source query on node $u$, we can generate $n_r$ walks from each node $v\in V$ and estimate $s(u, v)$ with additive error $\e$. The query cost is  $O\left({n\log {n\over \delta} \over\e^2}\right)$, which is inefficient on large graphs.  

A recent work proposes the READS algorithm~\cite{jiang2017reads} based on the Monte Carlo approach. READS pre-computes the $\scw$-walks from each node, and compresses the $\scw$-walks by merging them into trees. Given a query node $u$, READS retrieves the $\scw$-walks starting from $u$, finds all $\scw$-walks that meet with $u$'s  $\scw$-walks, and then updates the SimRank estimator for each $v$ related to these $\scw$-walks. Several
optimization techniques were adopted to improve the query efficiency of READS. The major issue of READS is that it requires generating and storing a large number of $\scw$-walks from each node in the preprocessing phase. The query cost also remains $O(n\log {n\over \delta} /\e^2)$, which is the same as that of the classic Monte Carlo algorithm.

\vspace{-1mm}\header{\bf ProbeSim.} ProbeSim~\cite{liu2017probesim} is an index-free algorithm that computes single-source and top-k SimRank queries on large graphs. Given a query node $u$, the ProbeSim algorithm samples a $\scw$-walk  $\W(u)$ from $u$. For a node $w$ visited by $\W(u)$ at the $\ell$-th step, the algorithm performs a {\em Probe} procedure that computes the probability of an $\scw$-walk from each node $v$
visiting $w$ at the $\ell$-th step. To rule out the probability that a pair of $\ell$-walks may meet multiple times,  the Probe algorithm avoids the nodes previously visited by $\W(u)$. It is shown in \cite{liu2017probesim} that the ProbeSim algorithm gives an unbiased estimator for the SimRank values $s(u, v), v\in V$. Therefore, by repeating the sampling procedure $O(\log {n\over \delta} / \e^2)$ times, ProbeSim answers single-source SimRank queries with probability at least $1-\delta$.

There are two subtle problems with ProbeSim. First, to avoid
multiple meeting nodes, the Probe from node $w$ has to avoid the nodes
on $\W(u)$, which means it is impossible to pre-compute the Probe
results to speed up the query time. Second, as we will show later,
the probability that a node $w$ in the graph is visited by the
$\scw$-walk from $u$ is proportional to $\pi(w)$, the reverse PageRank of $w$.  On the other hand,  the complexity of the Probe algorithm on $w$ is also proportional to $\pi(w)$. This essentially means it is likely that a {\em hub node} with high reverse PageRank value is visited by the $\scw$-walk from $u$, and it will incur significant cost in the Probe phase. Finally, the algorithm also requires $O(n\log {n\over \delta} /\e^2)$ query
cost to answer a single-source query.

\vspace{-1mm}\header{\bf TSF.} TSF~\cite{SLX15} is a two-stage random-walk sampling
algorithm for single-source and top-$k$ SimRank queries on dynamic graphs. Given a
parameter $R_g$,  {TSF} starts by building $R_g$ {\em one-way
  graphs} as an index structure.  Each one-way graph is constructed by
uniformly sampling \emph{one} in-neighbor from each vertex's in-coming
edges. The one-way graphs are then used to simulate random walks
during query processing. To achieve high efficiency, { TSF} allows
two $\scw$-walks to meet multiple times, and thus overestimate the
actual SimRank values. Furthermore, { TSF} assumes that every
random walk  would not contain any cycle, which does
not hold in practice.

\header{\bf Other Related Work.}
{\em Power method} \cite{JW02} is the classic algorithm that computes all-pair SimRank similarities for a given graph. Let $S$ be the SimRank matrix such that $S_{ij} = s(i, j)$, and $A$ be the transition matrix of $G$. Power method recursively computes the SimRank Matrix $S$ using the following formula \cite{KMK14}
\vspace{-1mm}\begin{equation} \label{eqn:related-simrank}
S = (c A^\top S A) \vee I,
\end{equation}
\vspace{-1mm}
where $\vee$ is the element-wise maximum operator. Several follow-up works \cite{LVGT10,YZL12,YuJulie15gauging} improve the efficiency or effectiveness of the power method in terms of either efficiency or accuracy. However, these methods still incur $O(n^2)$ space overheads, as there are $O(n^2)$ pairs of nodes in the graph. { A recent work~\cite{wang2018efficientsimrank} reduces the cost to $O(NNZ)$, where $NNZ$ is the number of node pairs with large SimRank similarities. However, as shown in~\cite{wang2018efficientsimrank}, there are still a constant fraction of $O(n^2)$  node pairs with large SimRank similarities, so the worst case complexity remains $O(n^2)$}. 

Motivated by difficulty in dealing with the element-wise maximum operator $\vee$ in Equation~\ref{eqn:related-simrank}, some existing work \cite{FNSO13,He10,Yu13,Li10,Yu14,YuM15b,KMK14} consider the following alternative formula for SimRank:
\vspace{-1mm}\begin{equation} \label{eqn:related-simrank-wrong}
S = c A^\top S A +  (1-c)\cdot I.
\end{equation}
\vspace{-1mm}
However, it is shown that the similarities calculated by this formula are different from SimRank \cite{KMK14}.  

For single-source queries, Fogaras and R{\'{a}}cz \cite{FRCS05} propose a Monte Carlo algorithm that uses random walks to approximate SimRank values.
Maehara et al.\ \cite{MKK14} propose an index structure for top-$k$ SimRank queries, but it relies on heuristic assumptions about $G$, and hence, does not provide any worst-case error guarantee.
Li et al.\ \cite{LiFL15} propose a distributed version of the Monte Carlo approach in \cite{FRCS05}, but it achieves scalability at the cost of significant computation resources. Finally, there is existing work on variants of SimRank \cite{AMC08,FR05,YuM15a,ZhaoHS09} and on various graph applications~\cite{bhuiyan2018representing,ye2018using,lee2018evaluations}, but the proposed solutions are inapplicable for top-$k$ and single-source SimRank queries.

\vspace{-2mm}
\section{Experiments} \label{sec:exp}

This section experimentally evaluates the proposed solutions against
the state of the art. All experiments are conducted on a machine with
a Xeon(R) CPU E7-4809@2.10GHz CPU and { 196GB} memory.

\vspace{-2mm}\subsection{Experimental Settings} \label{sec:exp-setting}

\noindent
{\bf Methods.} We compare \prsim against five SimRank algorithms: {READS}~\cite{jiang2017reads}, {
SLING}~\cite{TX16}, { TSF}~\cite{SLX15},
{ProbeSim}~\cite{liu2017probesim}  and  {TopSim}~\cite{LeeLY12}. As
mentioned in Section~\ref{sec:related}, READS, SLING and TSF are the
state-of-the-art index-based methods, and {ProbeSim} and  {TopSim} are the
state-of-the-art index-free methods.

\vspace{-1mm}\header
\noindent{\bf Ground Truth for single-pair queries.} Given a pair of nodes $u$
and $v$, we use the Monte Carlo algorithm to estimate $s(u, v)$ with high
precisions, and then use the result as the ground truth for $s(u, v)$. In
particular,  we set the parameters of the Monte
Carlo algorithm such that it incurs an error less than $0.00001$ with
confidence over $99.999\%$.

\vspace{-1mm}\header{\bf Pooling.} We extend the {\em pooling}
idea~\cite{liu2017probesim} to evaluate the effectiveness of the single-source algorithms
on large graphs. Given a source node $u$, we run each single-source
algorithm, order the nodes according to their estimated SimRank
values, and retrieve the top-$k$
nodes. We merge the top-$k$ nodes returned by
each algorithm, remove
the duplicates, and put them into a {\em
  pool}. As such, if we were to evaluate $\ell$ algorithms, then the pool size is
between $k$ and $\ell k$. For each
node $v$ in the pool, we obtain the ground truth of $s(u, v)$ using the
Monte Carlo algorithm, and retrieve $V_k = \{v_1, \ldots, v_k\}$, namely, the
$k$ nodes with the highest SimRank values from the pool.

\vspace{-1mm}\header{\bf Metrics. } To evaluate the absolute error of
single-source SimRank
algorithms, we calculate the average absolute errors for
approximating $s(u, v_i)$ for each $v_i$ in the pool. More precisely, for each $v_i \in
V_k$ returned by the pool, let $\s(u, v_i)$ be
the estimator for $s(u,v_i)$ returned by the algorithm
to be evaluated. We set
\vspace{-2mm}
$$ AvgError@k = {1\over k}\sum_{1\le i \le k}|\s(u, v_i) - s(u,
v_i)|.$$

\vspace{-2mm}
To evaluate the algorithms' abilities to return the top-$k$ results,  we use $V_k = \{v_1, \ldots, v_k\} $ as the ground truth for
the top-$k$ nodes. Note that these nodes are the best possible results
that can be returned by any of the algorithms to be evaluated.
Let $V_k'=\{v_1', \ldots,
v_k'\}$  denote the top-$k$ node set returned by the algorithm to be
evaluated. 
Note that {\em Precision@k} evaluates how many correct (or best possible) nodes are
included in $V_k'$.

\begin{figure*}[!t]
\begin{small}
 \centering
   \vspace{0mm}
    \begin{tabular}{ccccc}
     \hspace{-2mm} \includegraphics[height=28mm]{./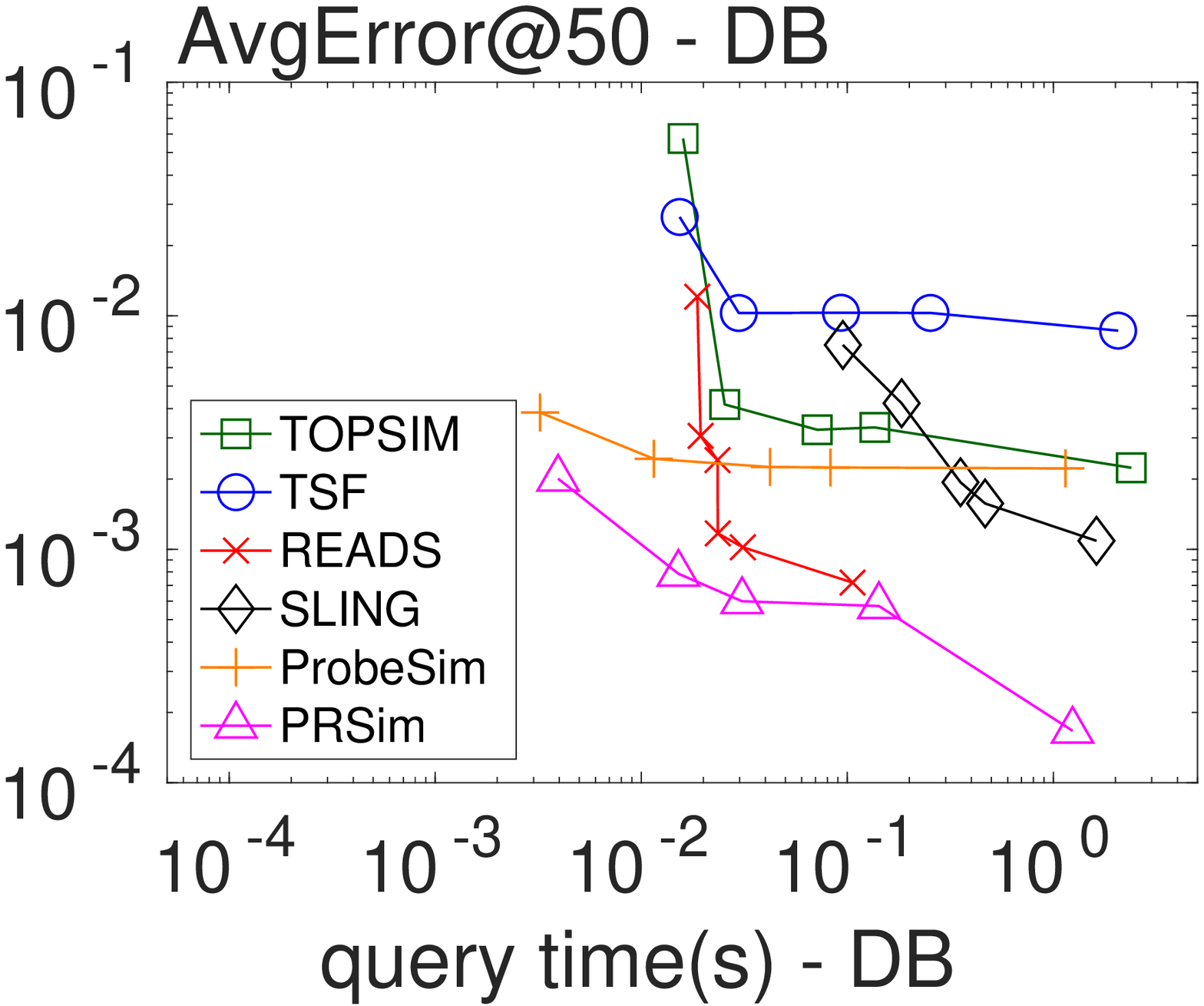}
   &
    \hspace{-2mm}
     \includegraphics[height=28mm]{./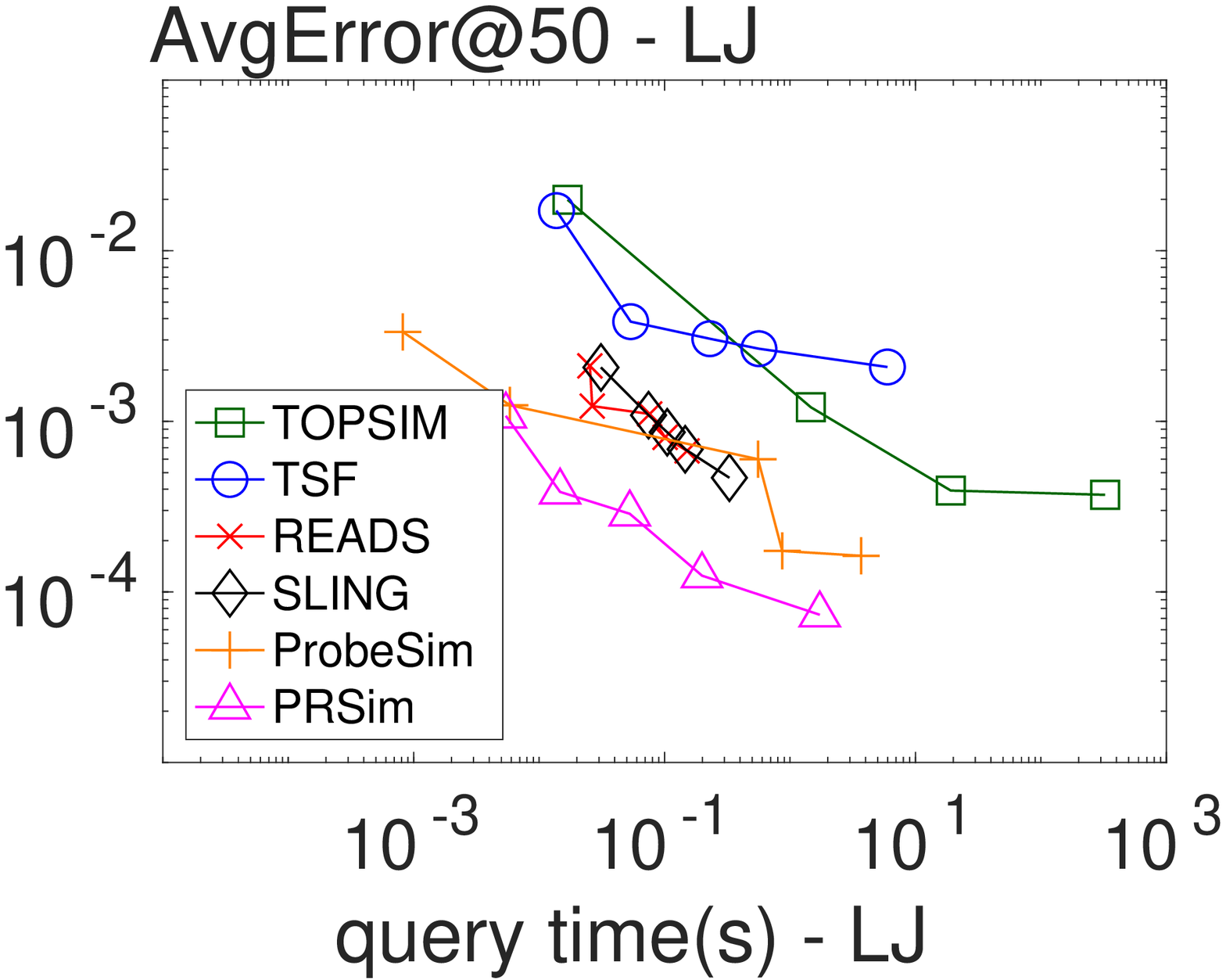}
      &
    \hspace{-2mm}
        \includegraphics[height=28mm]{./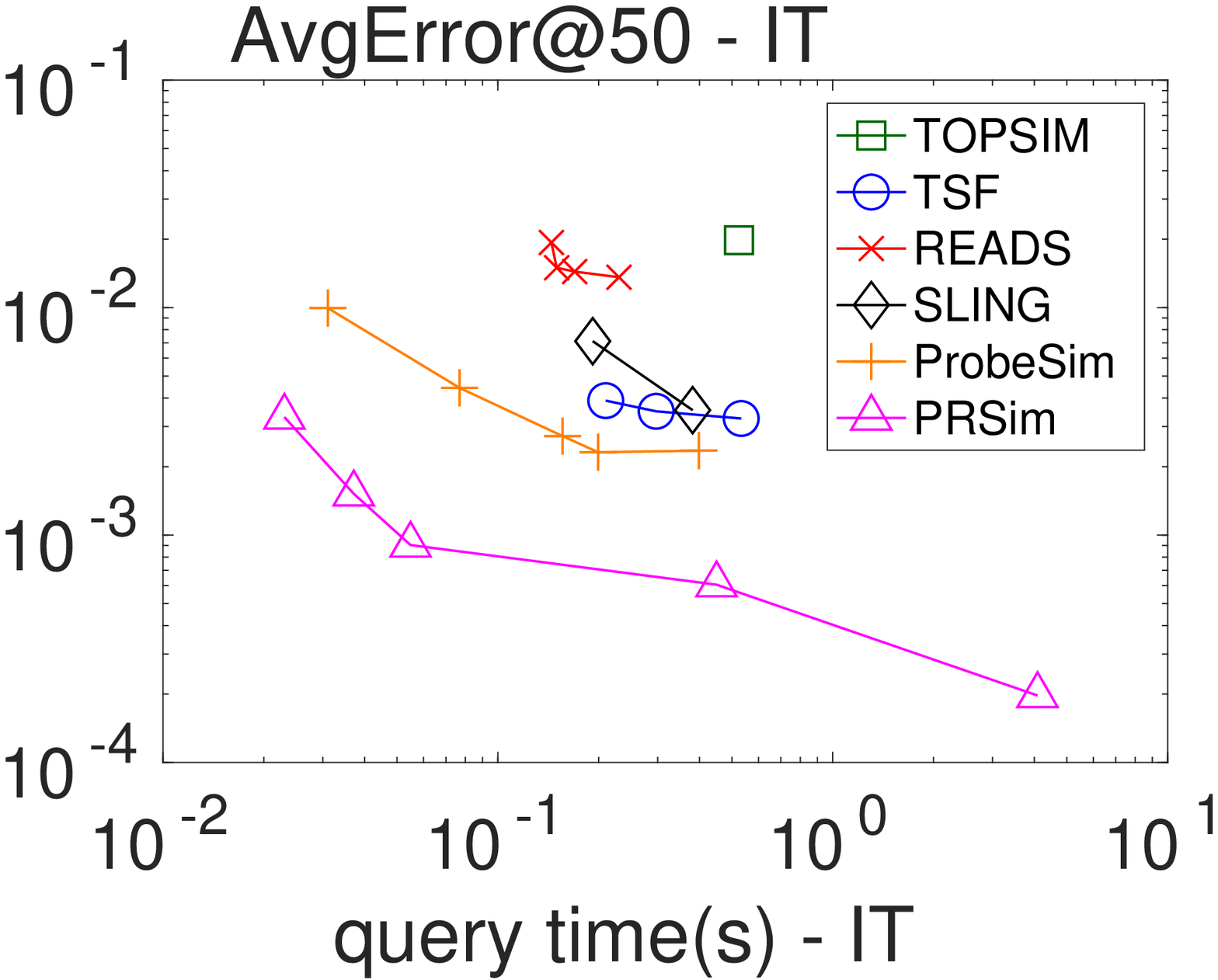}
         &
    \hspace{-2mm}
           \includegraphics[height=28mm]{./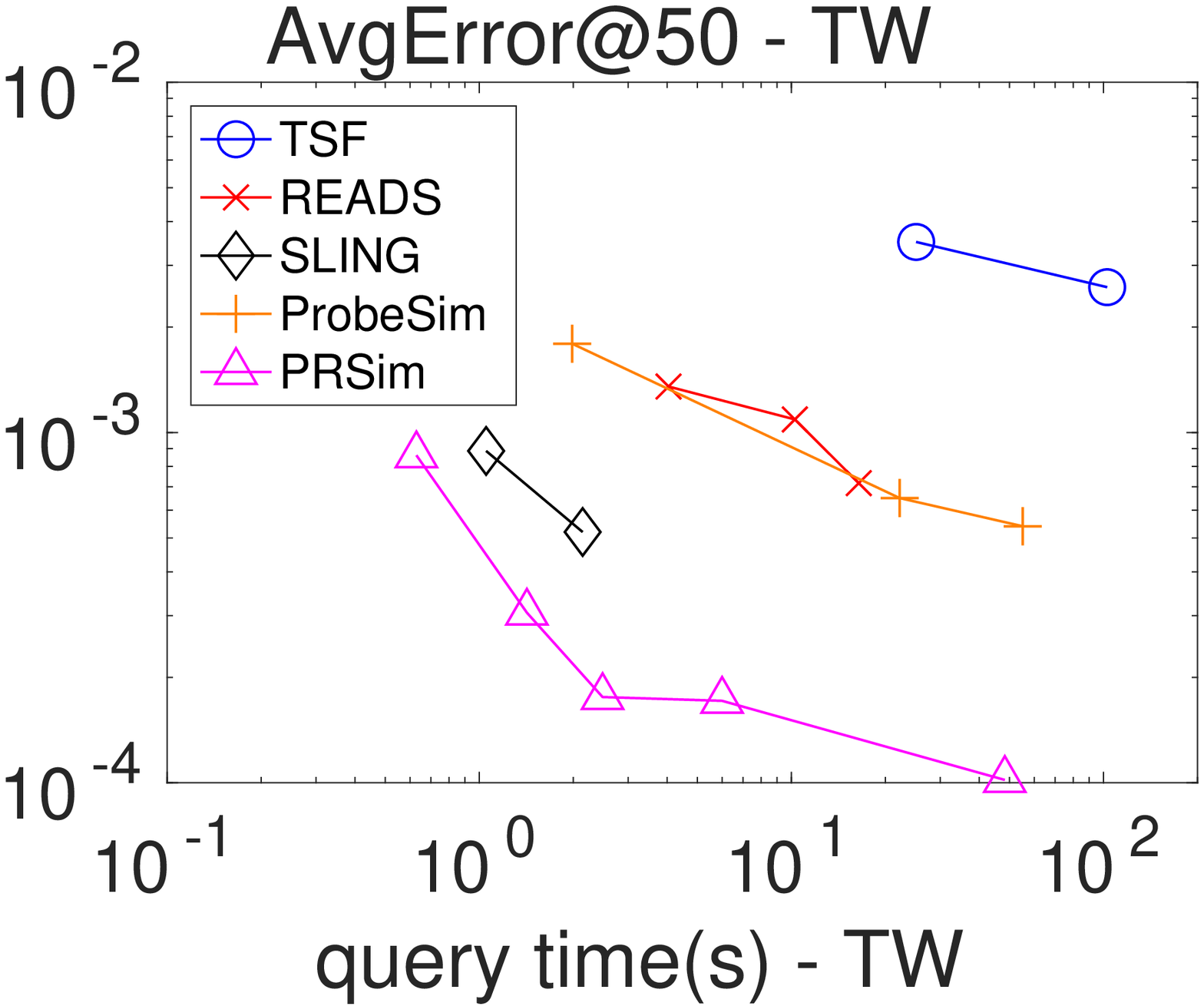}
           &
    \hspace{-2mm}
     \includegraphics[height=28mm]{./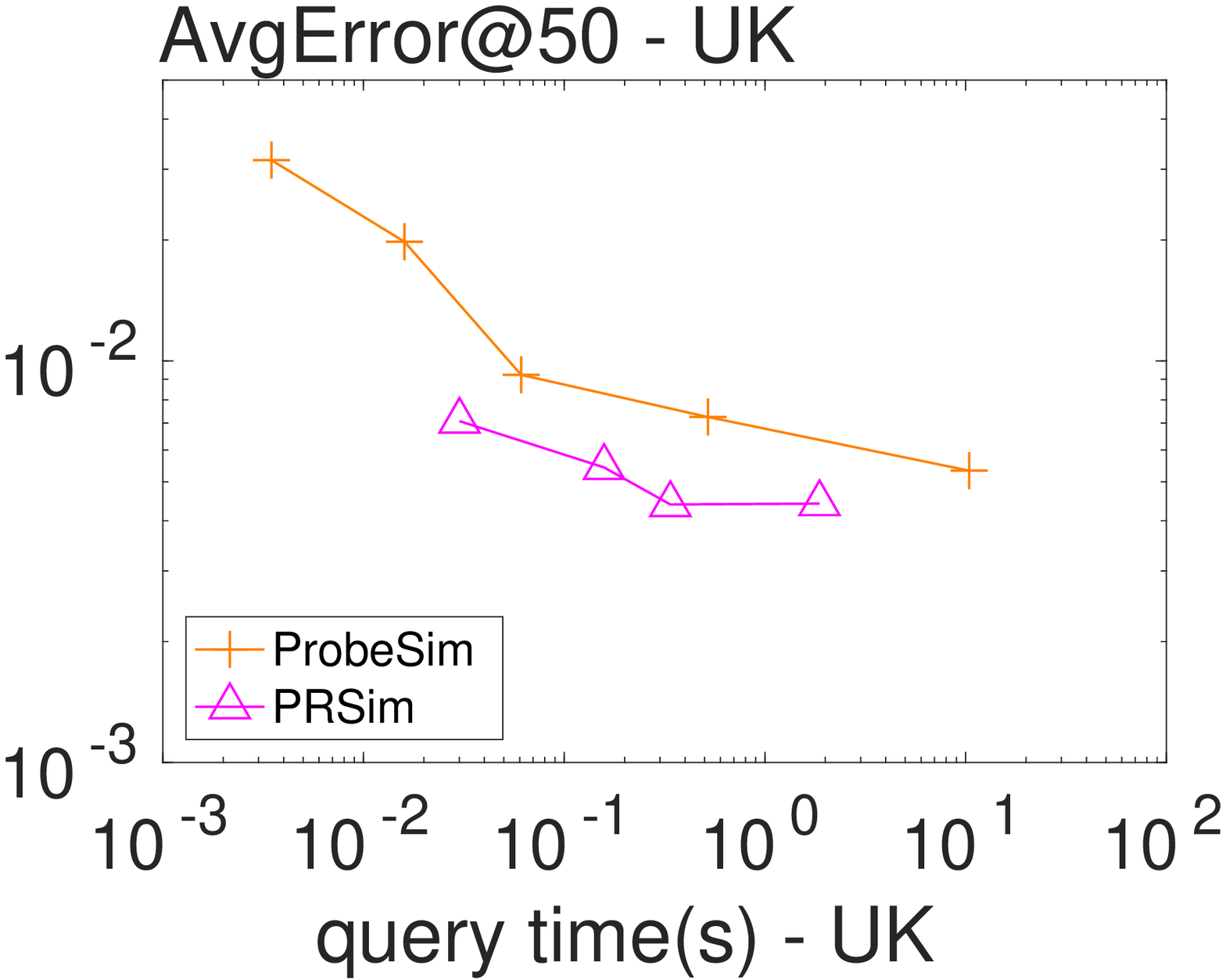}
   \vspace{0mm} \\
 \end{tabular}
\vspace{-3mm}
\caption{ $\mathbf{AvgError@50}$ v.s. {\em Query time } } \label{fig:error_query_time}
\vspace{0mm}
\end{small}
\end{figure*}

\begin{figure*}[!t]
\begin{small}
 \centering
   \vspace{0mm}
    \begin{tabular}{ccccc}
     \hspace{-0.8mm} \includegraphics[height=28mm]{./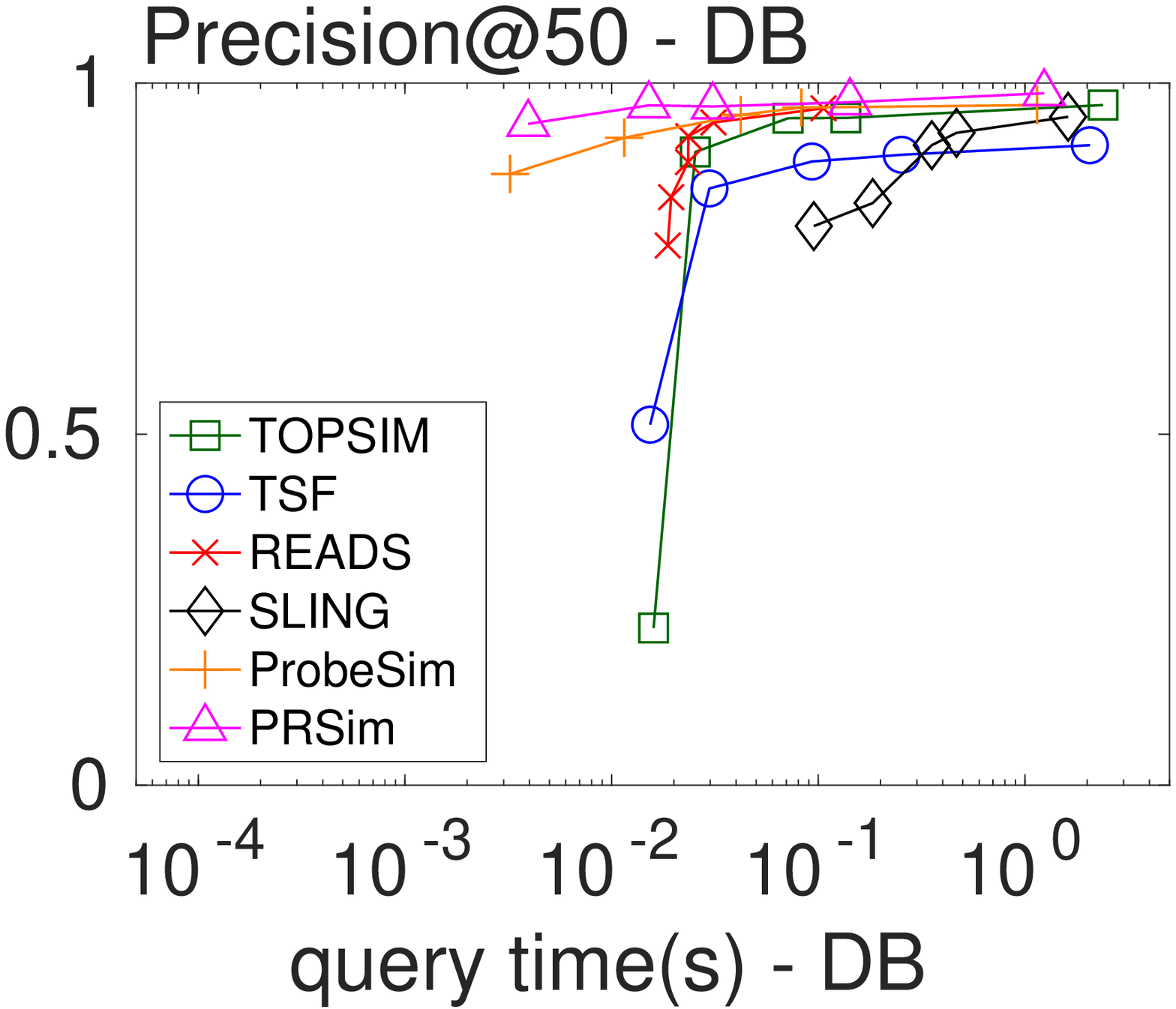}
   &
    \hspace{-0.8mm}
     \includegraphics[height=28mm]{./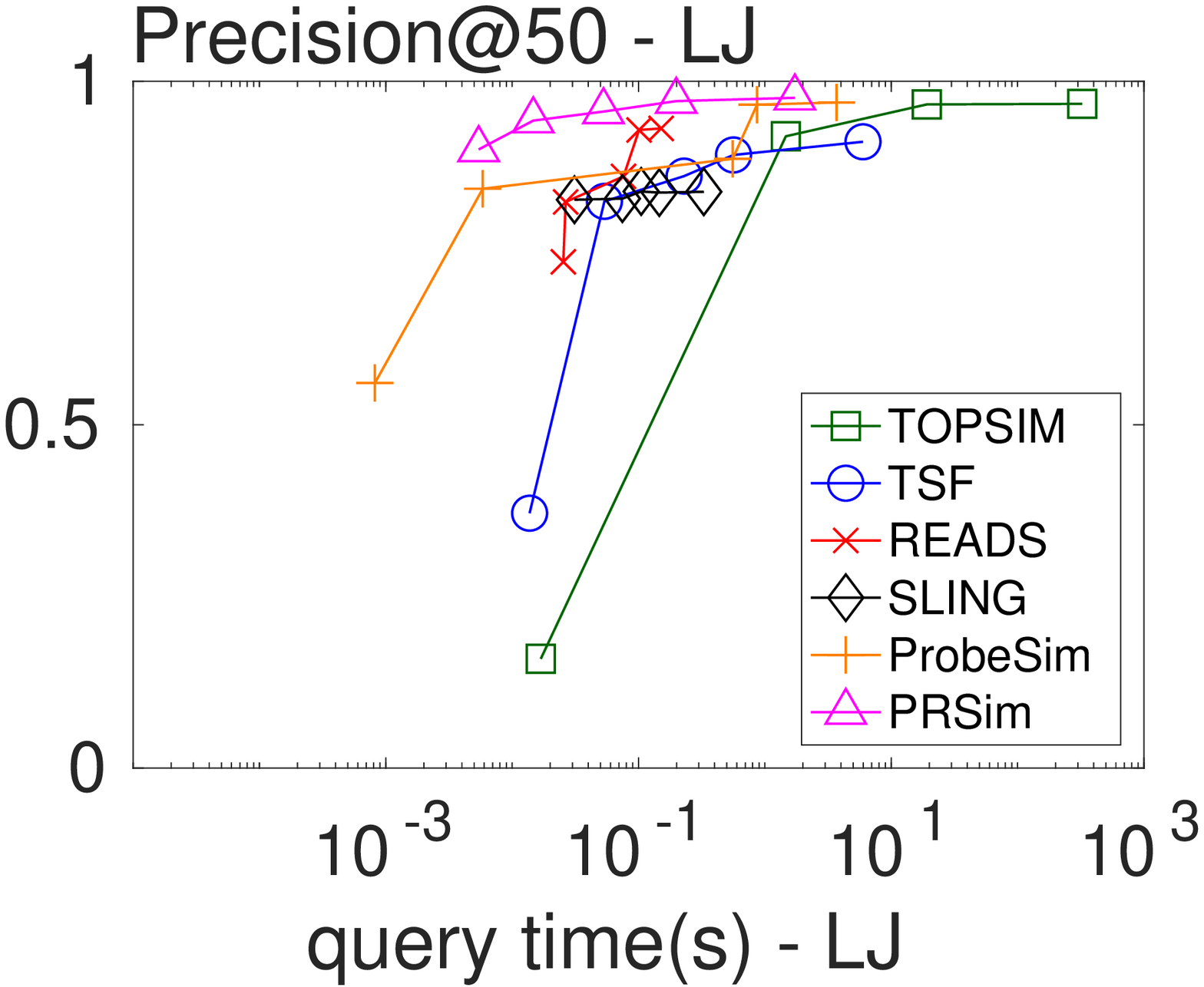}
      &
    \hspace{-0.8mm}
        \includegraphics[height=28mm]{./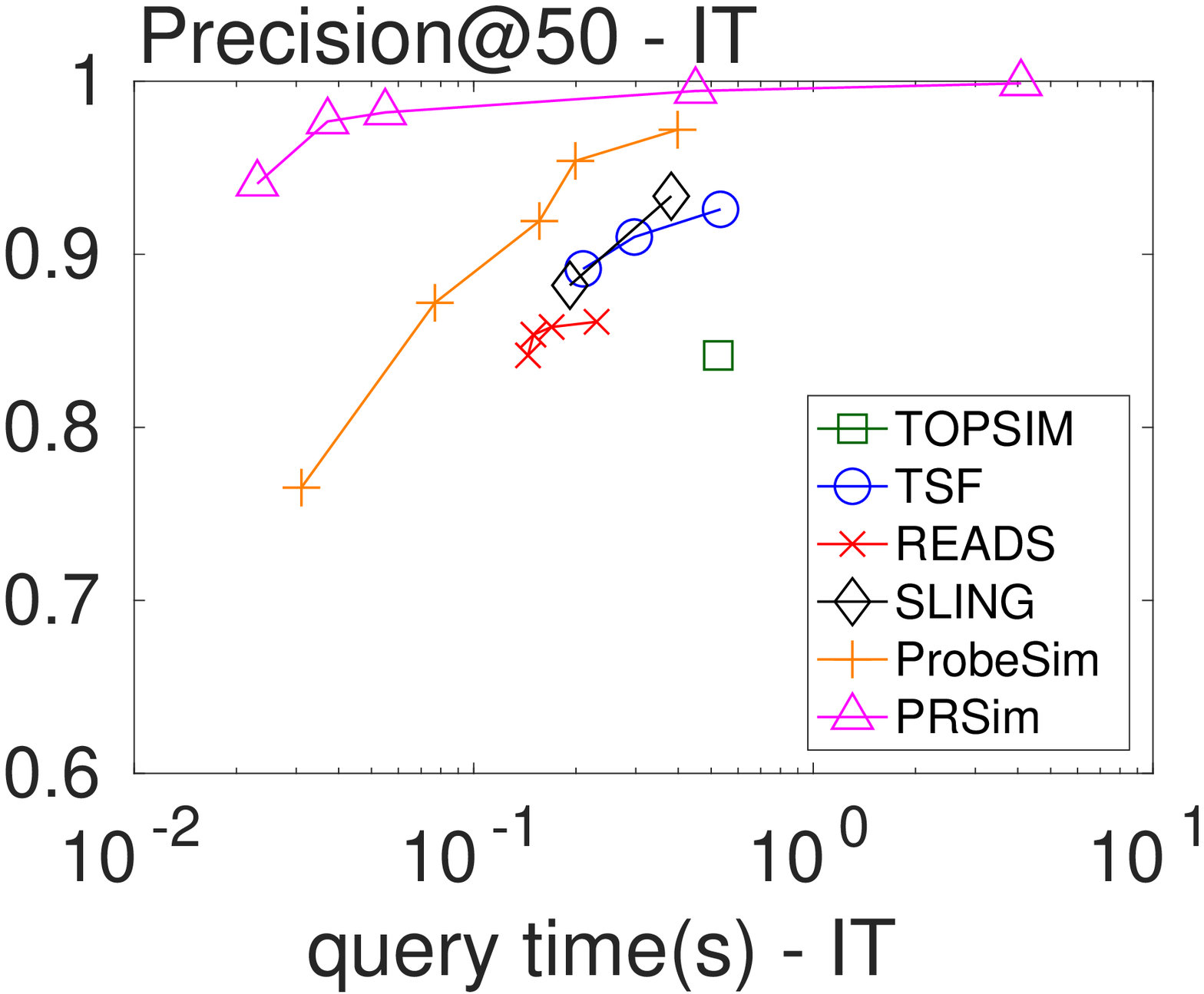}
         &
    \hspace{-0.8mm}
           \includegraphics[height=28mm]{./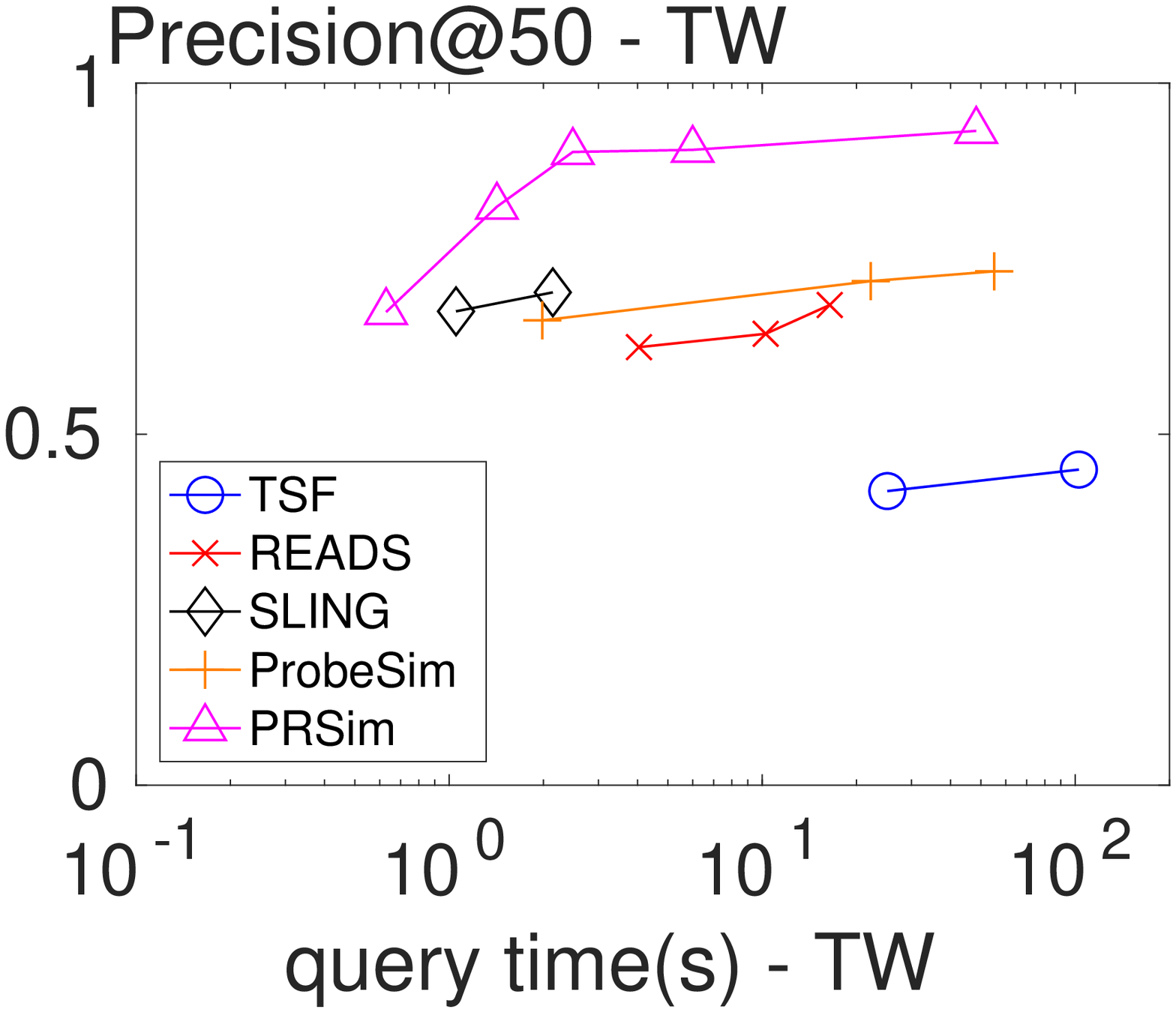}
           &
    \hspace{-0.8mm}
     \includegraphics[height=28mm]{./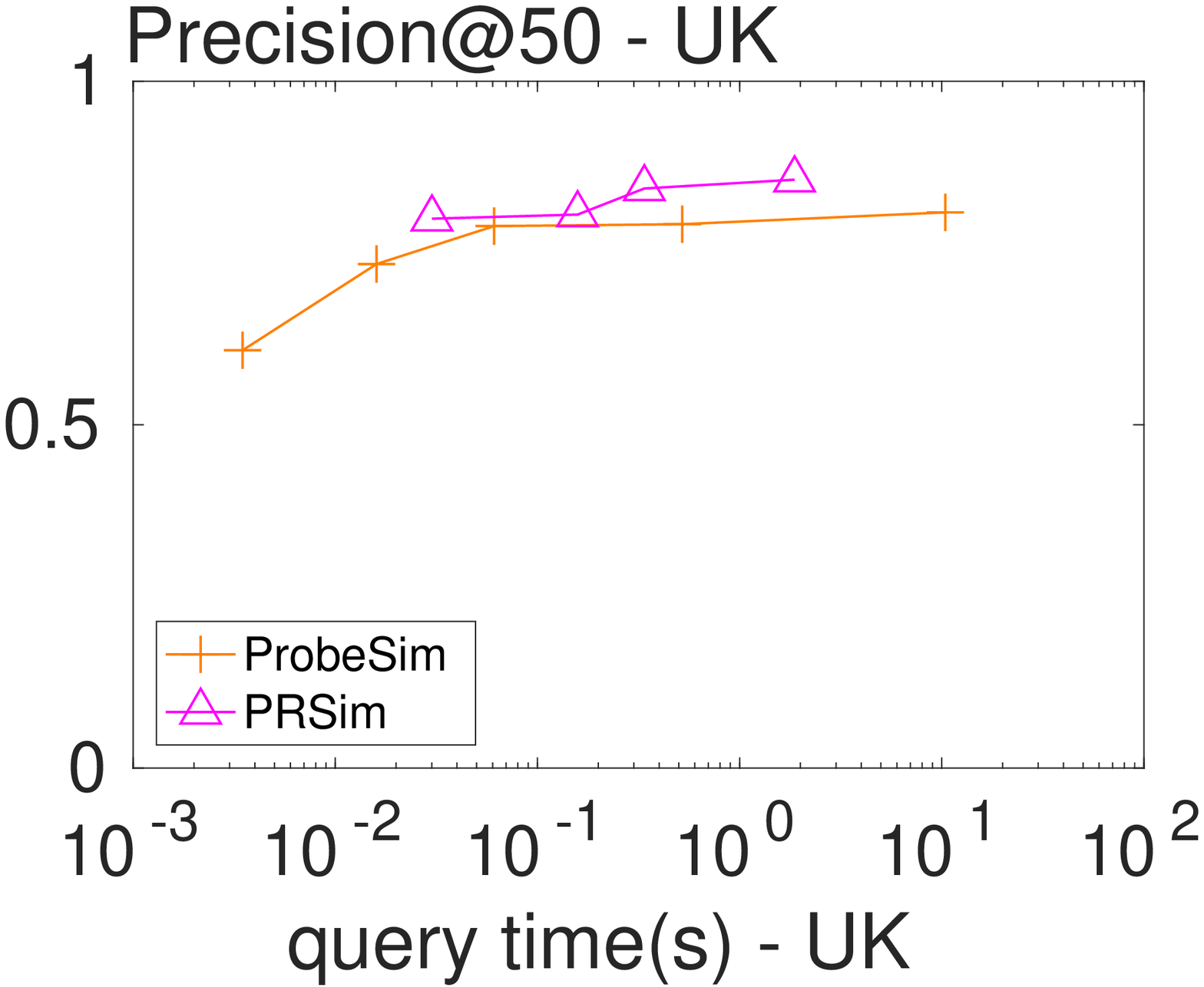}
   \vspace{-2mm} \\
 \end{tabular}
\vspace{-3mm}
\caption{ $\mathbf{Precision@50}$ v.s. {\em Query time } } \label{fig:precision_query_time}
\vspace{0mm}
\end{small}
\end{figure*}

\begin{figure*}[!t]
\begin{small}
 \centering
   \vspace{0mm}
    \begin{tabular}{ccccc}
     \hspace{-2mm} \includegraphics[height=28mm]{./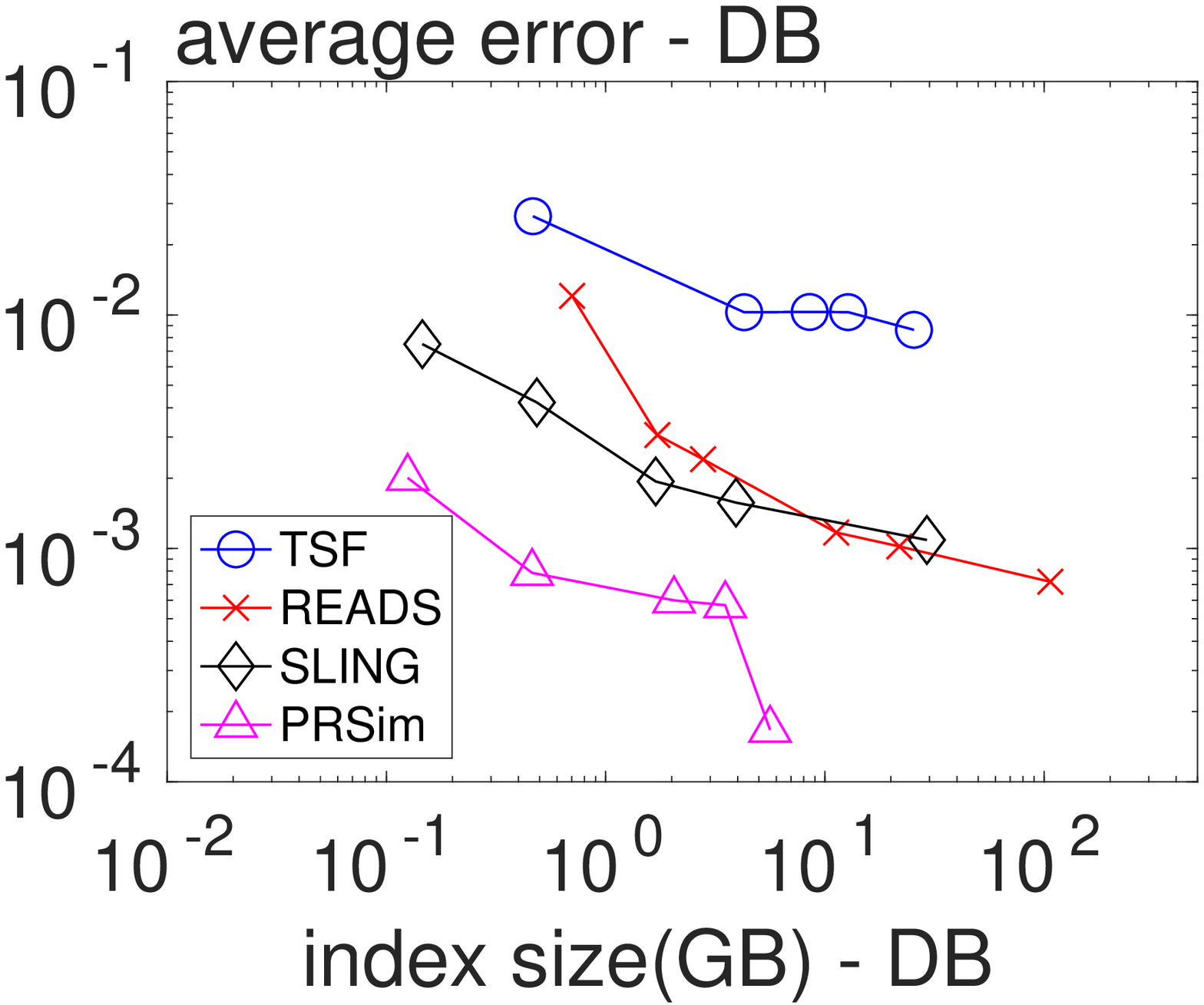}
   &
    \hspace{-2mm}
     \includegraphics[height=28mm]{./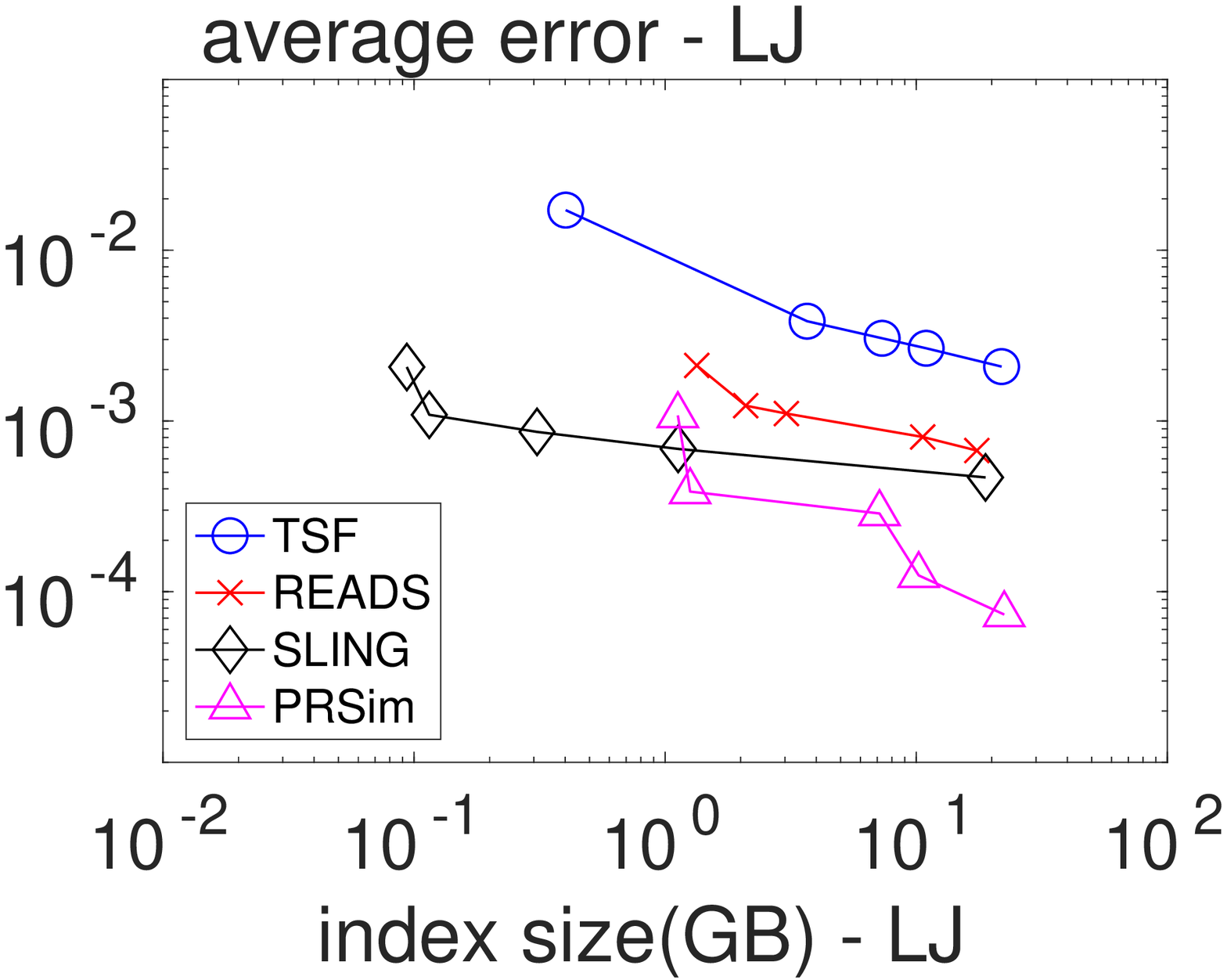}
      &
    \hspace{-2mm}
        \includegraphics[height=28mm]{./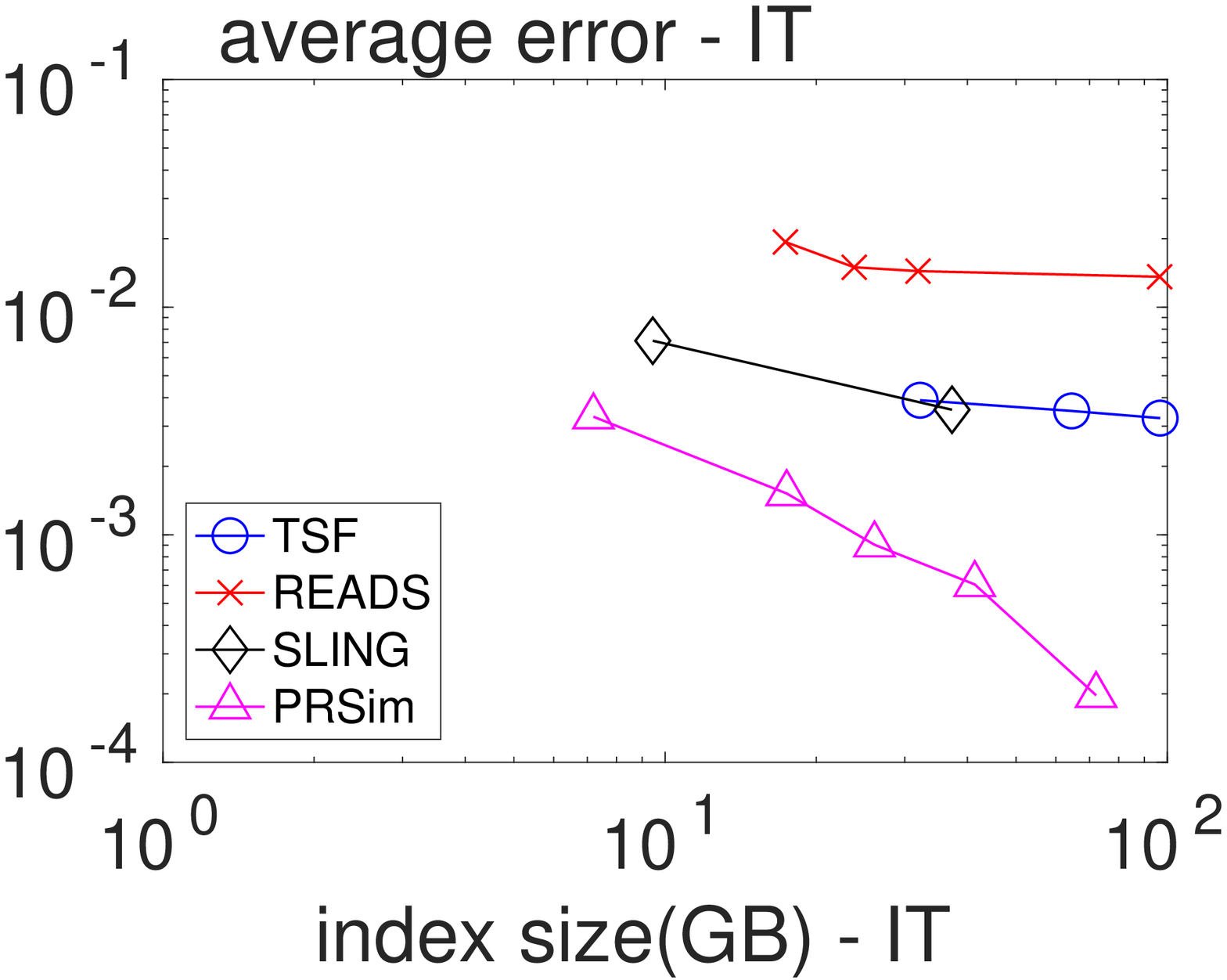}
         &
    \hspace{-2mm}
           \includegraphics[height=28mm]{./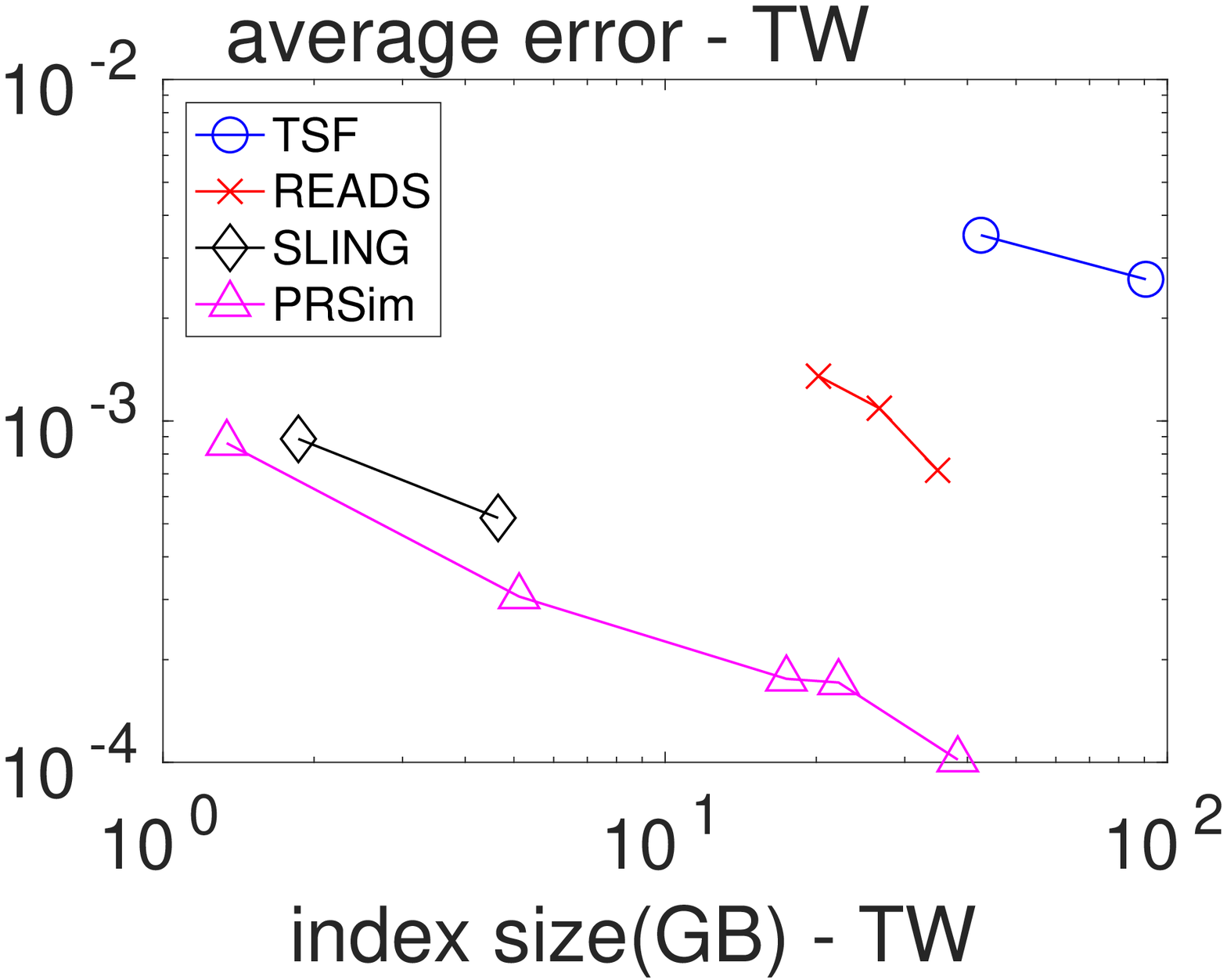}
           &
    \hspace{-2mm}
     \includegraphics[height=28mm]{./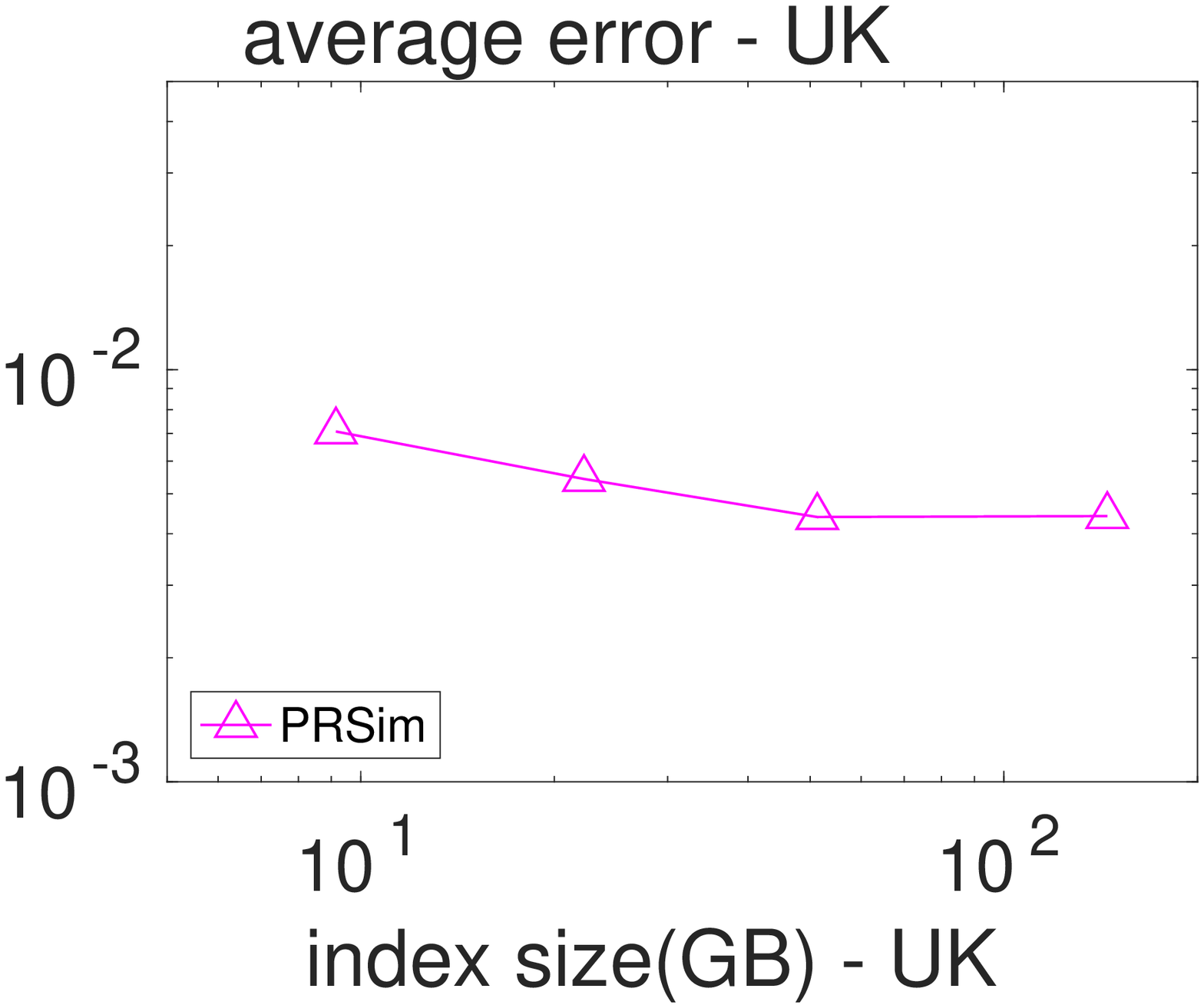}
   \vspace{0mm} \\
 \end{tabular}
\vspace{-3mm}
\caption{ $\mathbf{AvgError@50}$ v.s. {\em Index size} } \label{fig:error_index_size}
\vspace{0mm}
\end{small}
\end{figure*}

\begin{figure*}[!t]
\begin{small}
 \centering
   \vspace{0mm}
    \begin{tabular}{ccccc}
     \hspace{-2mm} \includegraphics[height=28mm]{./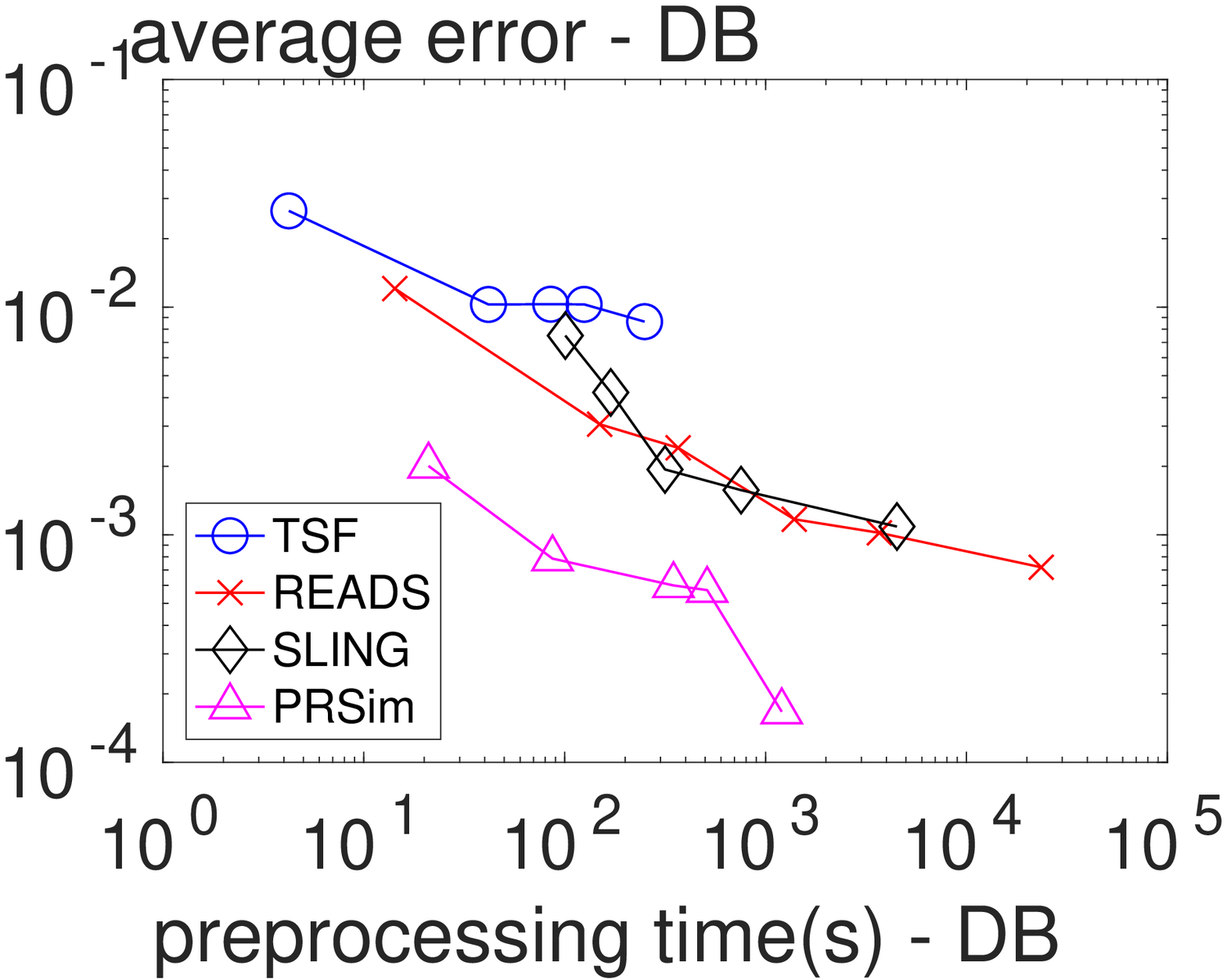}
   &
    \hspace{-2mm}
     \includegraphics[height=28mm]{./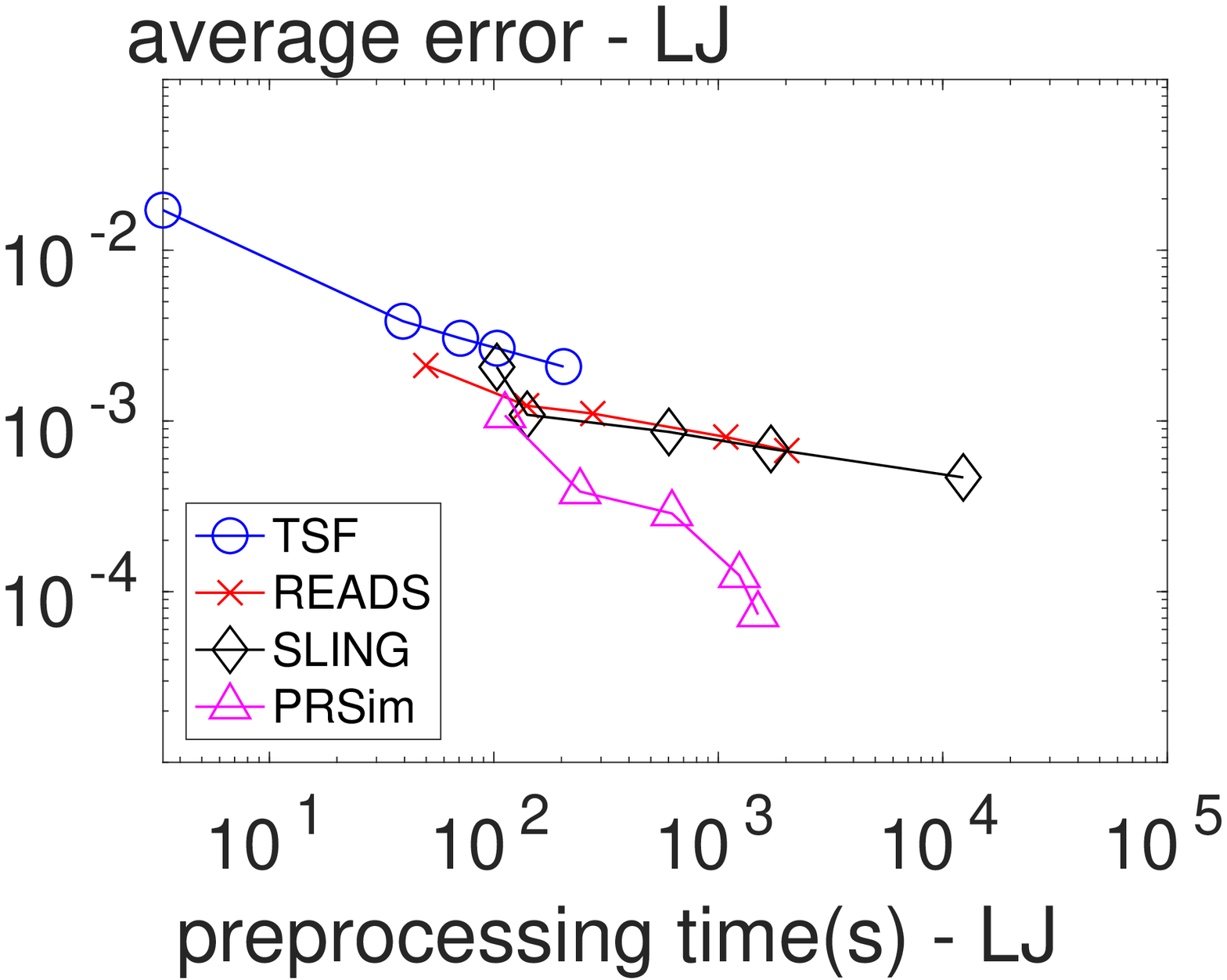}
      &
    \hspace{-2mm}
        \includegraphics[height=28mm]{./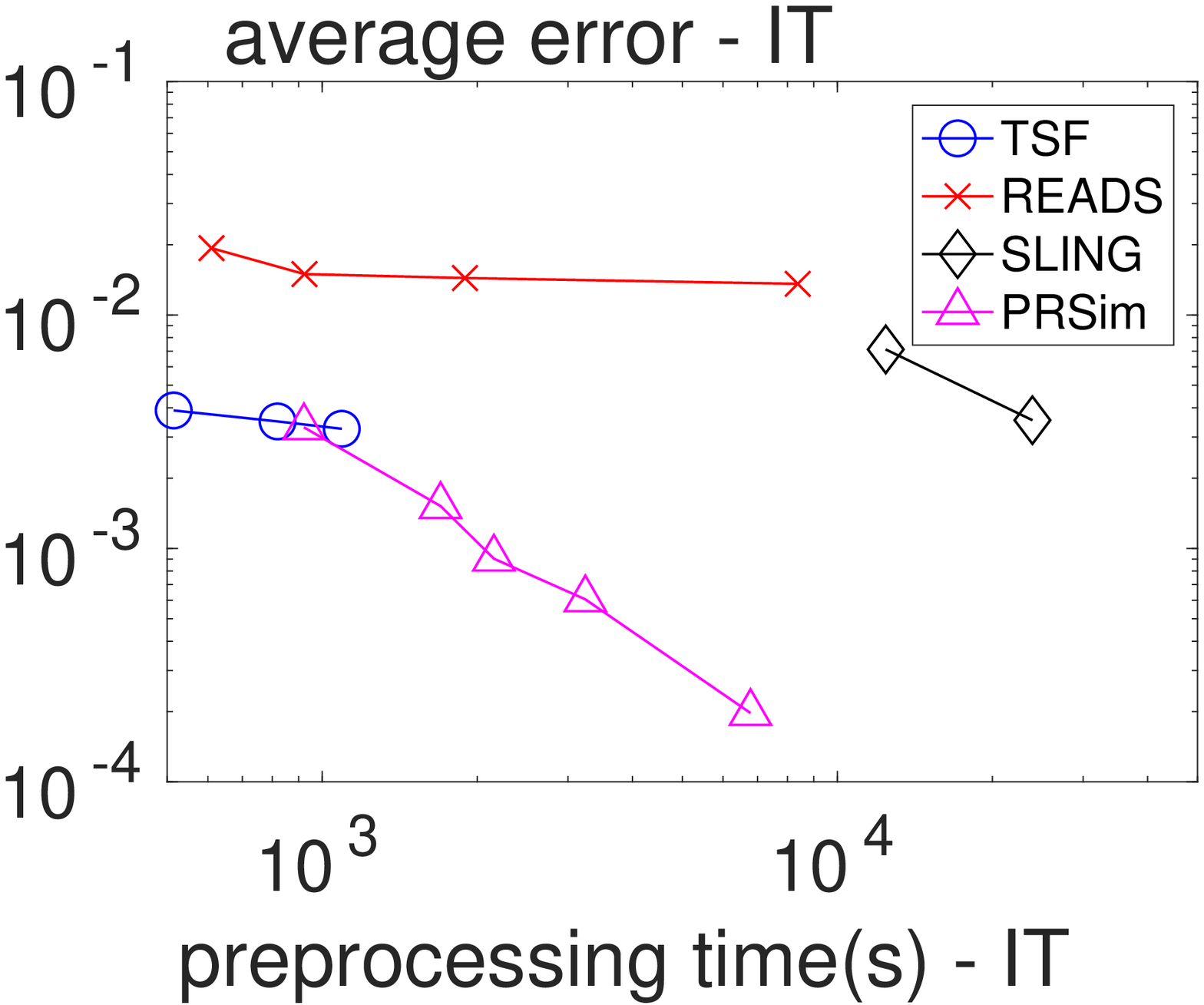}
         &
    \hspace{-2mm}
           \includegraphics[height=28mm]{./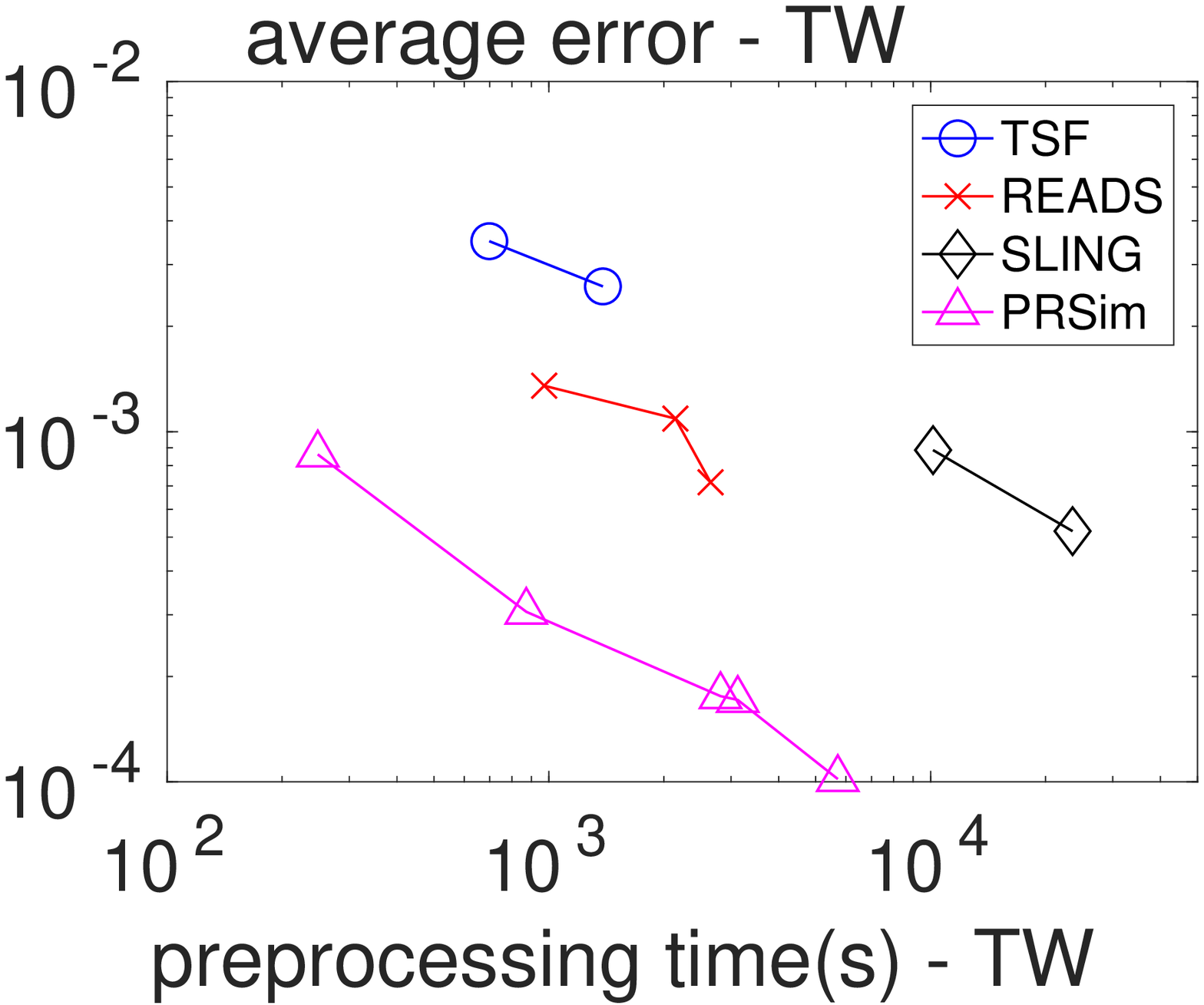}
           &
    \hspace{-2mm}
     \includegraphics[height=28mm]{./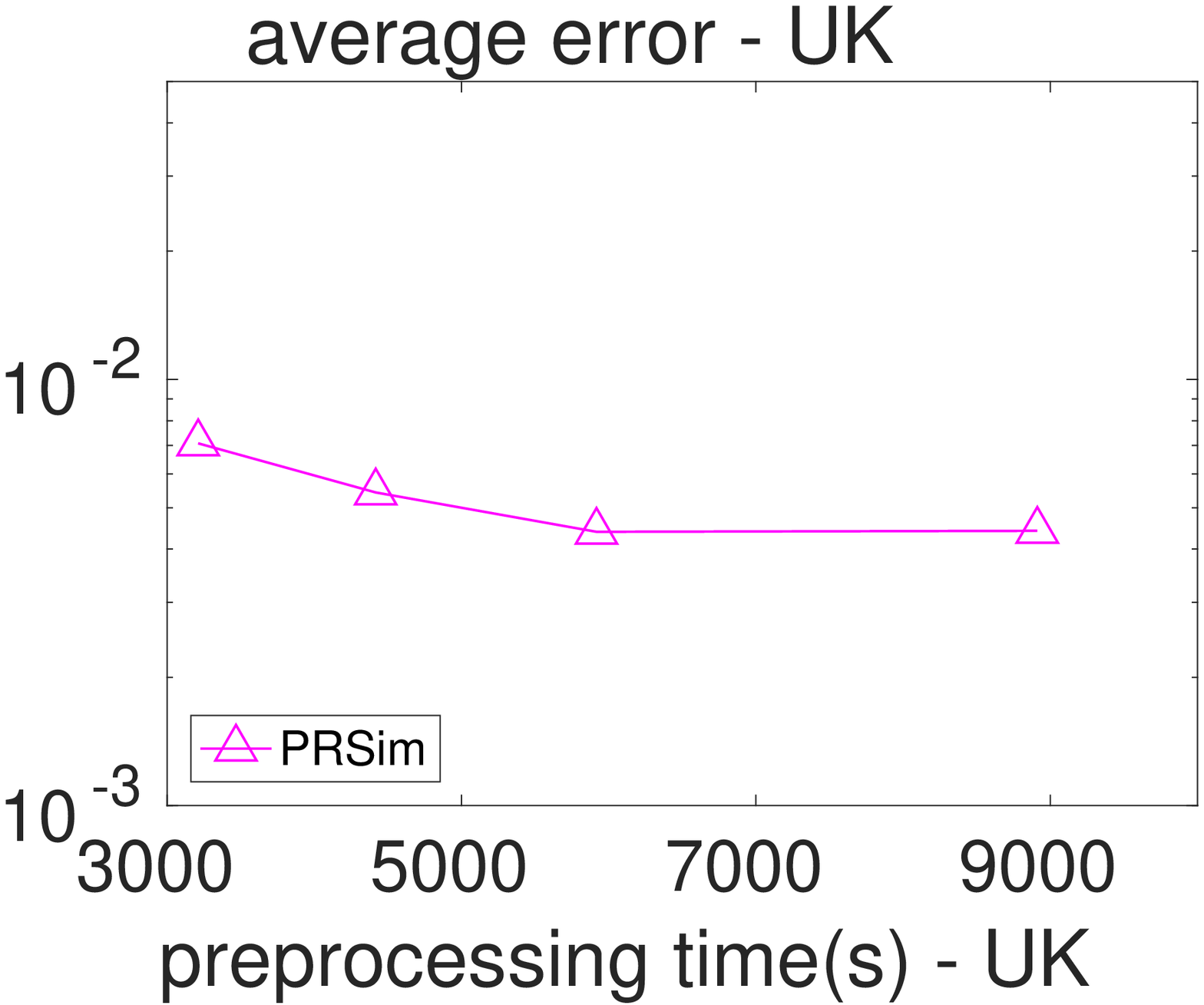}
   \vspace{0mm} \\
 \end{tabular}
\vspace{-3mm}
\caption{  $\mathbf{AvgError@50}$ v.s. {\em Preprocessing time} } \label{fig:error_index_time}
\vspace{0mm}
\end{small}
\end{figure*}

\subsection{Experiments on Real-World Graphs}
{ We evaluate the tradeoffs between accuracy and
complexity for each algorithm on real world graphs. 
We use $5$  data sets,} as shown in
Table~\ref{tbl:datasets}. All data sets are obtained from public
sources \cite{SNAP,LWA}.

\begin{table}[t]
\centering
\tblcapup
\caption{Data Sets.}
\vspace{-3mm}
\tblcapdown
\begin{small}
 \begin{tabular}{|l|l|r|r|} 
 \hline
 {\bf Data Set} & {\bf Type} & {\bf $\boldsymbol{n}$} & {\bf $\boldsymbol{m}$}	 \\ \hline
   DBLP-Author (DB) & undirected & 5,425,963 & 17,298,033 \\
 LiveJournal (LJ) &	directed	&	4,847,571	&
                                                            68,993,773
   \\

       It-2004 (IT)	&	directed & 41,291,594 & 1,150,725,436
   \\

Twitter (TW) & directed & 41,652,230& 1,468,365,182 \\
UK-Union (UK) & directed & 133,633,040 & 5,507,679,822 \\
 \hline
\end{tabular}
\end{small}
\label{tbl:datasets}
\vspace{-0mm}
\end{table}

{
\vspace{-1mm}\header
{\bf Parameters.} 
SLING~\cite{TX16} has a
parameter $\e_a$, the upper bound on the absolute error.  We vary
$\e_a$ in $\{0.5, 0.1, 0.05, 0.01, 0.005\}$, where $\e_a =0.05$ is the
default value in~\cite{TX16}.  TSF has two parameters $R_g$ and $R_q$,
where $R_g$ is the number of one-way graphs stored in 
the index,  and $R_q $ is the number of times each one-way
graph is reused in the query stage. We vary $(R_g, R_q)$ in $\{(10, 2),
(100, 20), (200, 30), (300, 40), (600, 80)\}$, where ~\cite{SLX15}
sets $(R_g, R_q)  =
(300, 40)$ by default.
TopSim has four internal parameters $T$, $h$, $\eta$
and $H$, where $T$ is  the depth of the random walks, $1/h$
is the minimal degree threshold used to identify a high degree node,
$\eta$ is the similarity threshold for trimming a random walk, and $H$ is the number of random walks to be expanded at
each level. We fix $H$ and $\eta$ to their default values $100$ and
$0.001$, and vary  $(T, 1/h)$ in $\{(1, 10), (3, 100), (3, 1000),
(3, 10000), (4, 10000)\}$. Note that \cite{LeeLY12} sets $(T,
1/h)  = (3, 100)$ by default.  The READS paper~\cite{jiang2017reads}  proposed  three algorithms: READS,
READS-D, and READS-Rq. We only include the static version of READS
in our experiments, as it is the fastest among the three~\cite{jiang2017reads}.
READS has two parameters $r$
and $t$, where $r$ is the number of $\scw$-walks generated for
each node in the preprocessing stage and $t$ is the maximum depth of the
$\scw$-walks. We vary $(r, t)$ in $\{(10, 2), (50, 5), (100, 10), (500,
10), (1000, 20)\}$, where $(r, t) \\= (100, 10)$ is the default setting
in~\cite{jiang2017reads}. For ProbeSim~\cite{liu2017probesim}, we vary
the error parameter $\e_a$ in $\{0.5, 0.1, 0.05, 0.01, 0.005\}$, where $\e_a = 0.1$ is the default
setting in~\cite{liu2017probesim}. For \prsim, we vary $\e$ in  $\{0.5,
 0.1, 0.05, 0.01, 0.005\}$. We also set
 $j_0$ to $\sqrt{n}$ so that the index size of \prsim increases with $1/\e$. We
 fix the failure  probability $\delta = 0.0001$ unless otherwise specified. We set the decay factor $c$ of
SimRank to 0.6, following previous work 
\cite{MKK14,Yu13,LVGT10,YuM15a,YuM15b}.

\header{\bf Experimental results.}
 On each data set, we
issue 100 single-source queries and 100 top-$50$ queries for each
algorithm and each parameter set,
and record the averages of the query time, index sizes, preprocessing
time,  {\em AvgError@50} and  {\em
  Precision@50}. 
For each algorithm and each dataset, we omit a parameter set if it
runs out of 196GB memory or 
takes over 10 hours to finish queries or preprocessing on that data
set.

Figures~\ref{fig:error_query_time},~\ref{fig:precision_query_time}
show the tradeoffs between {\em AvgError@50}  and the query
time and the tradeoffs between {\em Precision@50} and the query
time. The overall observation is that \prsim outperforms all
competitors by achieving lower errors and higher precisions with less query time on all datasets. Most notably, on the TW dataset,  \prsim achieves a {\em Precision@50} of $92\%$ using a query time
 of $5$ seconds, while the closest competitor, ProbeSim, achieves a precision
 around $75\%$ using over 50 seconds.  Furthermore, on the
 5-billion-edge  UK data set,  \prsim is  the only two index-based algorithms that are able to
 finish preprocessing and queries, which demonstrates the scalability
 of our algorithms. We also note that the query time of SLING and
 READS are not sensitive to the choices of parameters. This is as
 expected, since the majority of their query cost is spent on  reading the index, which
 is a cache-friendly task. After observing the skewed trend of READS
 on DB in Figure~\ref{fig:error_query_time}, we decide to evaluate an extra parameter set $(r,
 t) = (5000, 20)$ to see if READS can outperform \prsim in
 terms of query-time-error tradeoff, given significantly more indexing
 space. The result shows that  \prsim still achieves better accuracy
 with less query time. 

Figure~\ref{fig:error_index_size},
and ~\ref{fig:error_index_time} show the tradeoffs between{\em AvgError@50}  and the
index size and the tradeoffs between  {\em AvgError@50}  and the
preprocessing time, respectively. Again, our algorithm manages to
outperform all index-based algorithms (SLING, TSF, READS) by achieving a
lower error with less index size and preprocessing time.  In
particular, on the DB dataset, our algorithm is able to achieve an
average error of
$10^{-3}$ using an index of size $200MB$, while the  closest
competitor READS needs $100GB$. 
}




\vspace{-2mm}\subsection{Experiments on Synthetic Data Sets}
\label{sec:exp_synthetic}

\begin{figure}[!t]
\begin{small}
 \centering
   \vspace{0mm}
    \begin{tabular}{cc}
     \hspace{-4mm} \includegraphics[height=30mm]{./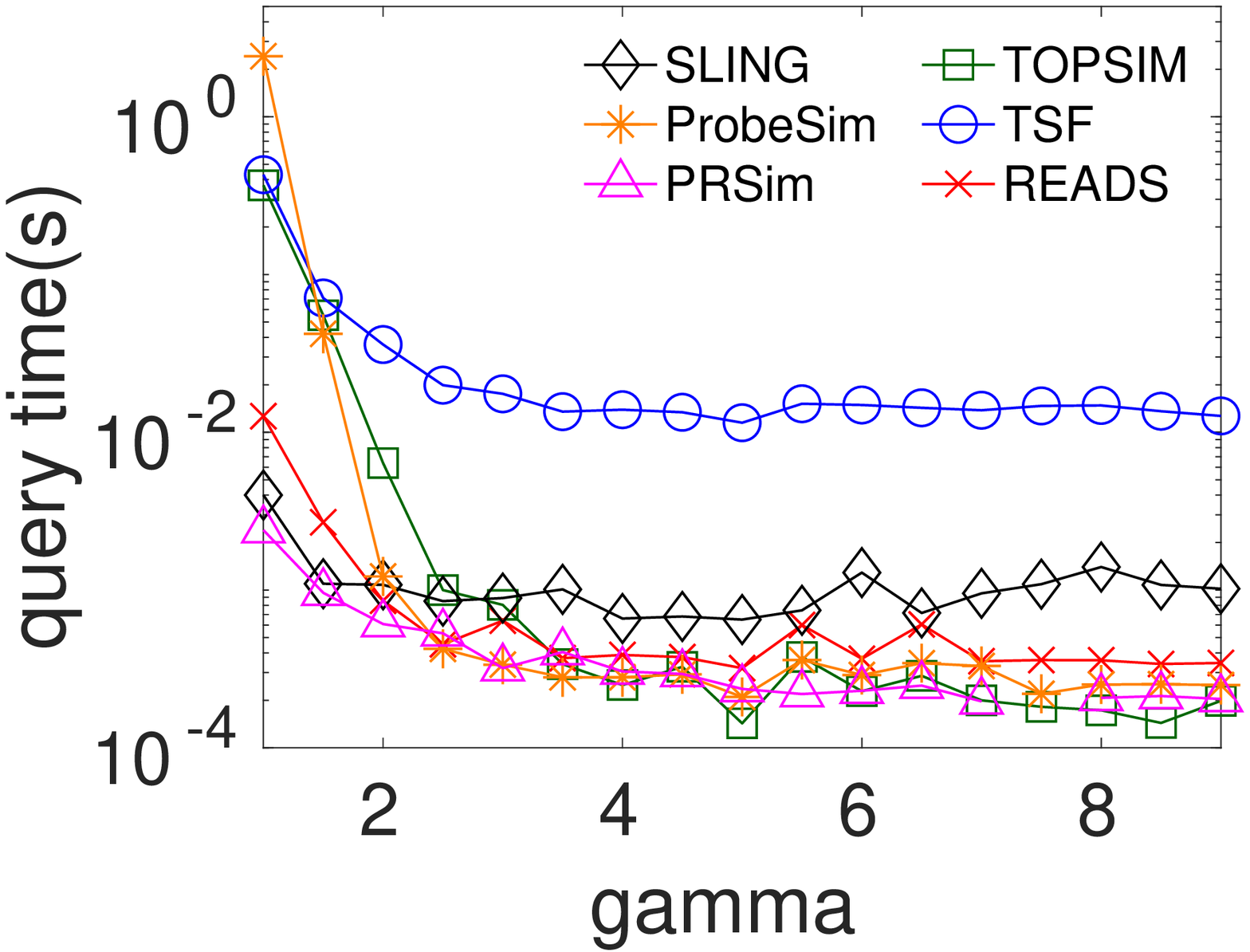}
   &
    \hspace{-4mm} \includegraphics[height=30mm]{./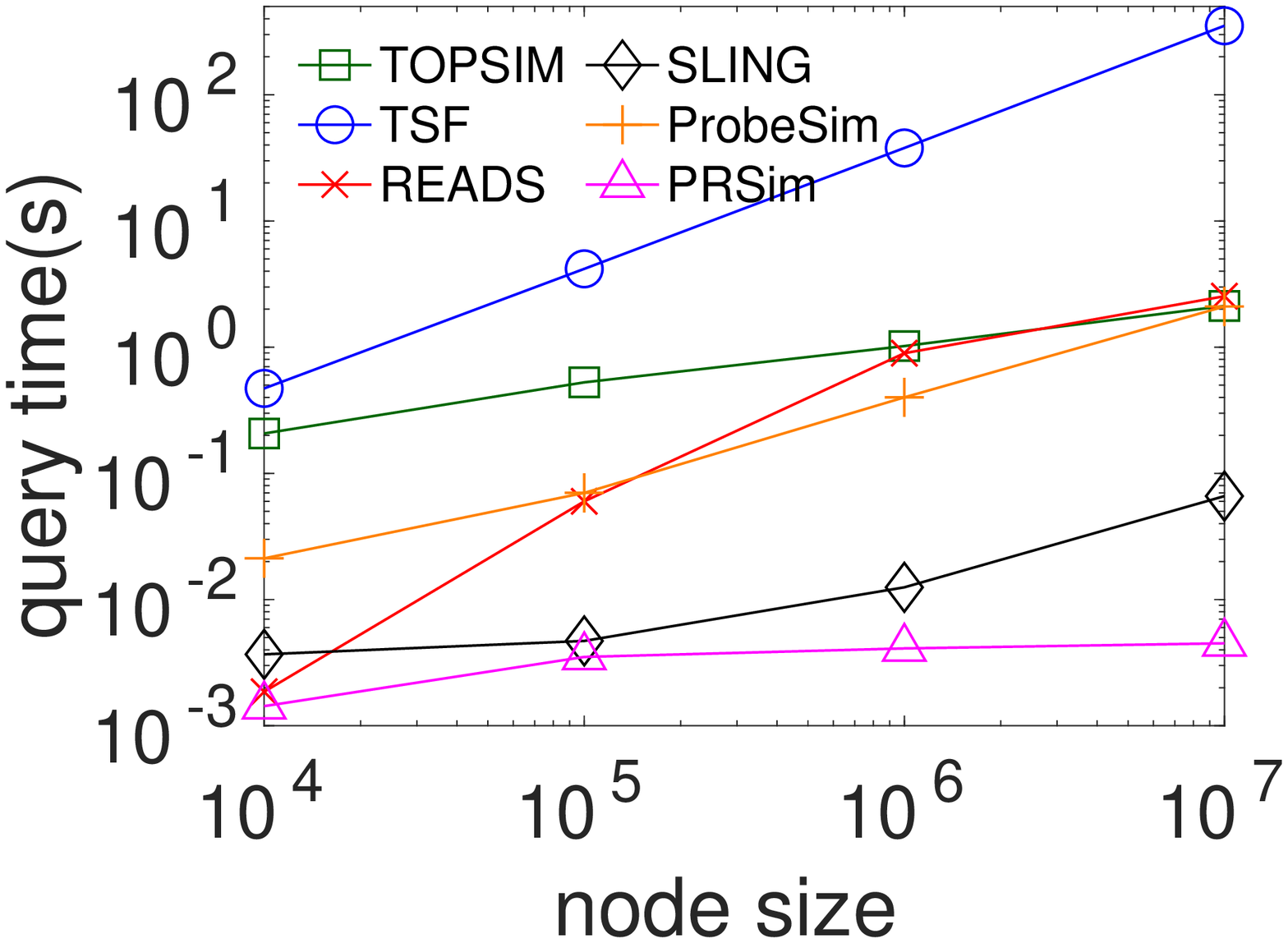}
   \vspace{0mm} \\
       \hspace{0mm} (a) { {\em Query time }  v.s. $\gamma$.}  &
       \hspace{0mm} (b) { {\em Query time } v.s. $n$.} \\
 \end{tabular}
\vspace{-3mm}
 \caption{ {Results on power-law graphs.}} \label{fig:scalability}
\vspace{-4mm}
\end{small}
\end{figure}

\begin{figure}[!t]
\begin{small}
 \centering
   \vspace{0mm}
    \begin{tabular}{cc}
     \hspace{-4mm} \includegraphics[height=30mm]{./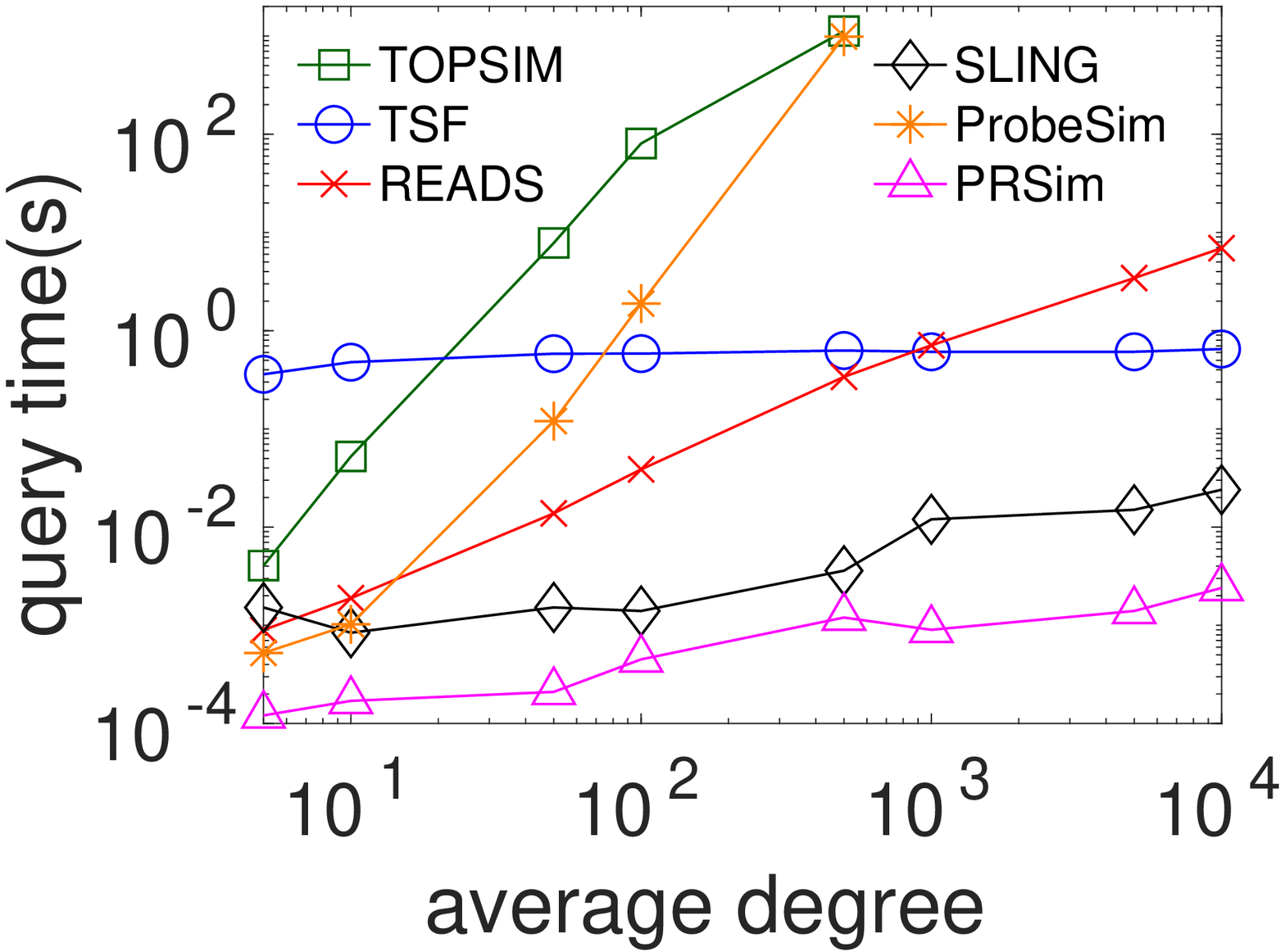}
   &
    \hspace{-4mm} \includegraphics[height=31mm]{./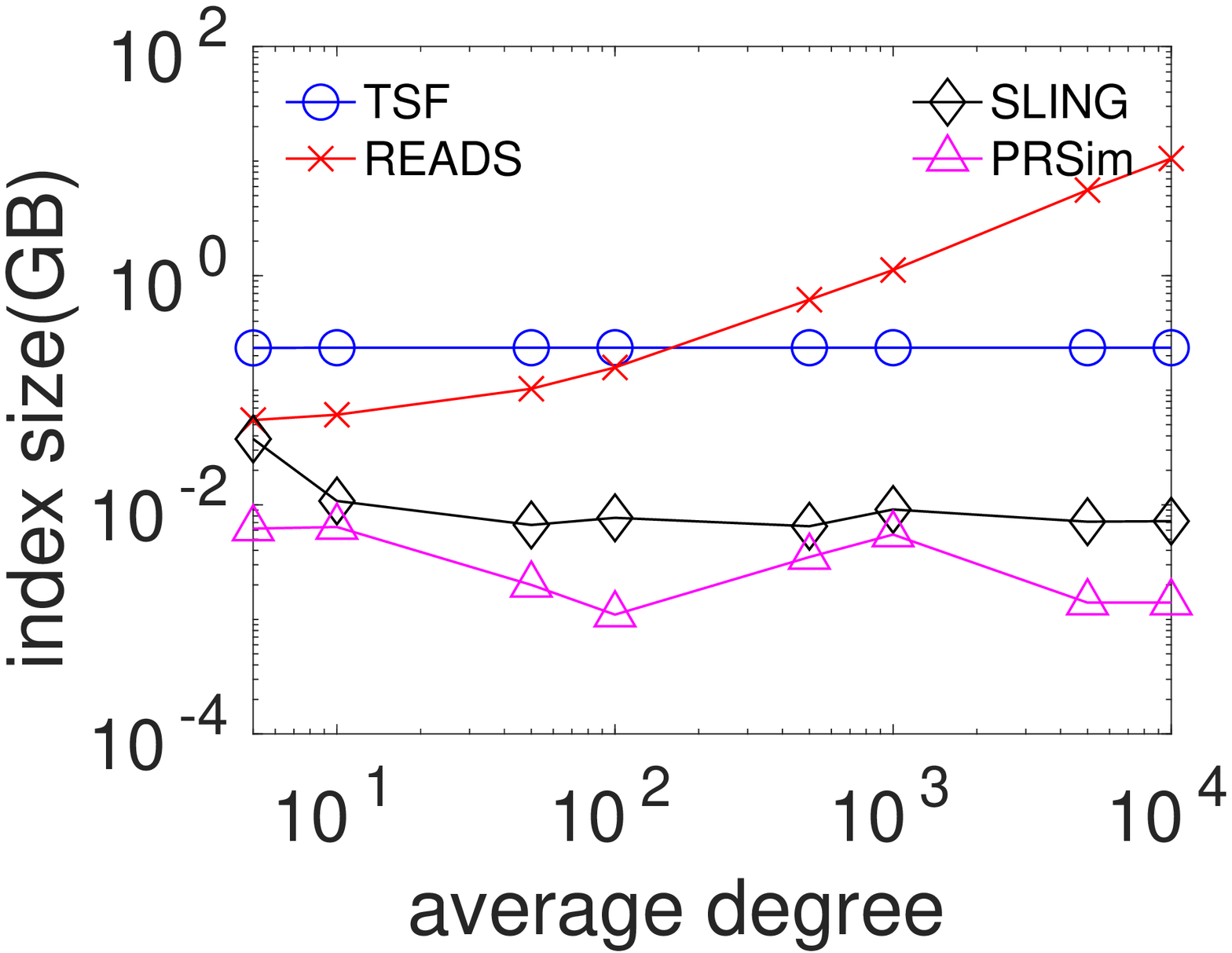}
   \vspace{-2mm} \\
      \hspace{0mm} (a) { {\em Query time }  v.s. $\bar{d}$.}  &
                                                                \hspace{0mm}
                                                                (b) {
                                                                {\em
                                                                Index Size} v.s. $\bar{d}$.} \\
 \end{tabular}
\vspace{-3mm}
 \caption{ {Results on non-power-law graphs.}} \label{fig:er}
\vspace{-4mm}
\end{small}
\end{figure}

{
We now evaluate \prsim and the competitors with fixed parameters on synthetic
datasets with varying network structure and sizes. 
We set
$\e_a = 0.25$  for SLING, $R_g = 300$ and $R_q=40$ for TSF, $T=3$,
$1/h = 100$, $\eta = 0.001$, and $H=100$ for TopSim, $\e_a =
0.25$ for ProbeSim, $r =100$ and $t=10$ for READS, and $\e = 0.25$ for \prsim. We
 fix the failure probability $\delta = 0.001$ unless otherwise specified. On each data set, we
issue 100 single-source queries with each algorithm to be evaluated,
and report the corresponding measures.
}

\header{\bf Hardness of SimRank computation and degree distributions.}
We first investigate the relation between the hardness of
SimRank computation and 
degree distributions. We generate a
set of undirected
power-law graphs with various power-law exponents using the hyperbolic
graph generator~\cite{aldecoa2015hyperbolic}. In particular,
we fix the number of nodes $n$ to be $100,000$ and the
average degree $\bar{d}$ to be
$10$, and vary the degree power-law exponent $\gamma$ from
{ $1$ to $9$}. Figure~\ref{fig:scalability}(a)
reports the average
query time of each algorithm. {
Recall that the theoretical analysis of
\prsim suggests that its query time
increases with $1/\gamma$. Figure~\ref{fig:scalability}(a) concurs with
this analysis. In fact, we observe that the query time of all
algorithms follows a similar distribution as the function $y =
1/\gamma$ on the log-log plot:
the query time decreases as 
we increase $\gamma$ from $1$ to $4$, and becomes stable after $\gamma
> 4$.} Based on this observation and on
the theoretical analysis for \prsim,  we make the following
conjuncture:
{
  \newtheorem{conjuncture}{Conjuncture}
  \vspace{-1mm}
\begin{conjuncture}
The hardness of SimRank computation is 
correlated to  the reciprocal of the power-law
exponent $\gamma$ of the out-degree distribution.
\end{conjuncture}
}

{  \vspace{-2mm}
    \header{\bf Scalability analysis.} To evaluate the scalability of our
  algorithm, we generate synthetic power-law
  graphs by fixing the exponent $\gamma = 3$ and average degree
  $\bar{d} = 10$, and vary the graph size
  $n$ from $10^4$ to $10^7$.  Figure~\ref{fig:scalability}(b) shows the
  running time of \prsim on these graphs. The results show that the
  running time of \prsim forms a concave curve in a log-log plot, which
  proves the sub-linearity of \prsim.
  
  \header{\bf Experiments on non-power-law Graphs.} We generate random
  graphs using  the Er\H{o}s and R\'{e}nyi (ER)
  model, where we assign an edge to each node pair with a user-specified probability $p$. We fix the number of nodes to $n=10,000$
  and set the value of $p$ so that the
  average degree $\bar{d}$ of each graph varies from $5$ to
  $10,000$. Figure~\ref{fig:er} shows the query time of each algorithm
  on these synthetic graphs. We observe that the query performance of
  ProbeSim degrades dramatically as we increase $\bar{d}$. On the
  other hand, \prsim is able to answer queries on very dense graphs
  efficiently. We attribute this quality to the fact that the
  Randomized Probe algorithm in ProbeSim
  always  goes through all out-neighbors of a target node, while our
  Variance Bounded Backward Walk algorithm only needs to visit a
  fraction of the out-neighbors.

}

{



}


\vspace{-1mm}
\section{Conclusions} \label{sec:conclusions}
This paper presents \prsim, an algorithm for single-source SimRank
queries. { \prsim connects the time complexity of SimRank  computation
with the distribution of the reverse PageRank,} and
achieves sublinear query time on power-law graphs with small index
size. Our experiments show that the algorithm significantly
outperforms the existing methods  in terms of query time, accuracy,
index size and scalability.



\vspace{-1mm}
\section{ACKNOWLEDGEMENTS}
This research was supported in part by National Natural Science Foundation of China
      (No. 61832017 and No. 61732014), by MOE, Singapore under grant
      MOE2015-T2-2-069, and by NUS, Singapore under an SUG. Sibo Wang was supported by CUHK Direct Grant No. 4055114. He was also supported by the  CUHK University Startup Grant No. 4930911 and No. 5501570.


\allowdisplaybreaks

\begin{small}
\bibliographystyle{plain}
\bibliography{paper}
\end{small}

\appendix
\vspace{-2mm}
\section{Inequalities}
\vspace{-2mm}
\subsection{Chernoff Bound} \label{sec:chernoff}

\vspace{-1mm}\begin{lemma}[Chernoff Bound \cite{ChungL06}] \label{lmm:chernoff}
For a set $\{x_i\}$ ($i \in [1, n_r]$) of i.i.d.\ random variables with mean $\mu$ and $x_i \in [0, 1]$,
$\Pr\left[\left|{1\over n_r}\sum_{i=1}^{n_x} x_i - \mu\right| \geq \e\right] \leq \exp\left(-\dfrac{n_r \cdot \e^2}{\frac{2}{3}\e + 2\mu}\right).\vspace{-1mm}$
\end{lemma}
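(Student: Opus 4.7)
The plan is to prove the inequality via the classical exponential-moment (Chernoff) method, combined with a Bernstein-style bound on the moment generating function that leverages both the boundedness and the variance of the $x_i$. First, I would center the variables by setting $Y_i = x_i - \mu$, so that $\E[Y_i] = 0$ and $|Y_i| \le 1$. Writing $S = \sum_{i=1}^{n_r} Y_i$, Markov's inequality applied to $e^{\lambda S}$ for $\lambda > 0$ gives
\[
\Pr\bigl[S \ge n_r \e\bigr] \;\le\; e^{-\lambda n_r \e} \cdot \E\bigl[e^{\lambda Y_1}\bigr]^{n_r},
\]
so the task reduces to bounding the one-variable MGF and then optimizing $\lambda$.

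For the MGF, I would exploit $\sigma^2 := \Var(x_i) \le \E[x_i^2] \le \E[x_i] = \mu$ (since $x_i \in [0,1]$) together with $|Y_i| \le 1$ to establish
\[
\E\bigl[e^{\lambda Y_1}\bigr] \;\le\; \exp\!\Bigl(\tfrac{\lambda^2 \sigma^2}{2(1 - \lambda/3)}\Bigr), \qquad 0 < \lambda < 3.
\]
This is obtained by Taylor expanding $e^{\lambda Y_1}$, discarding the linear term (since $\E[Y_1]=0$), and bounding $\E[Y_1^k] \le \E[|Y_1|^k] \le \sigma^2$ for $k \ge 2$; the remaining series collapses via $k! \ge 2\cdot 3^{k-2}$ and geometric summation. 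Plugging this into the Chernoff bound, choosing $\lambda = \e/(\sigma^2 + \e/3)$, and using $\sigma^2 \le \mu$ yields $\Pr[S \ge n_r \e] \le \exp\!\bigl(-n_r \e^2/(2\mu + (2/3)\e)\bigr)$. A symmetric argument applied to $-Y_i$ (equivalently, to the variables $1 - x_i$) handles the lower tail and gives the stated two-sided bound.

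The main obstacle is obtaining $\mu$ (rather than a universal constant) in the denominator, which is the Bernstein refinement over vanilla Hoeffding. A Hoeffding-style MGF bound that uses only $|Y_i| \le 1$ yields a tail that does not depend on $\mu$ at all, and this would be too weak for the small-mean regime the paper depends on (e.g., when estimating $\eta(w)\pi_\ell(u,w)$, whose total mass $\sum_w\sum_\ell \eta(w)\pi_\ell(u,w) \le 1$ forces most individual means to be tiny). Extracting $\mu$ requires simultaneously using both the variance bound $\sigma^2 \le \mu$ (from $x_i \in [0,1]$) and the one-sided boundedness $|Y_i|\le 1$ inside the moment estimate, and this coupling is the delicate step.
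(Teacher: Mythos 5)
The paper offers no proof of this lemma: it is quoted directly from \cite{ChungL06} as a black-box tool, so there is no in-paper argument to compare against. Your derivation is the standard Bernstein-type Chernoff argument and it checks out: centering to $Y_i = x_i-\mu$, the MGF bound $\E[e^{\lambda Y_1}]\le \exp\bigl(\lambda^2\sigma^2/(2(1-\lambda/3))\bigr)$ obtained from $\E[|Y_1|^k]\le \E[Y_1^2]\le\sigma^2$ for $k\ge 2$ together with $k!\ge 2\cdot 3^{k-2}$, the optimizing choice $\lambda=\e/(\sigma^2+\e/3)$ (which indeed satisfies $\lambda<3$ whenever $\sigma^2>0$, the degenerate case being trivial), and the final relaxation $\sigma^2\le\mu(1-\mu)\le\mu$ reproduce the exponent $-n_r\e^2/(2\mu+\tfrac{2}{3}\e)$. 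You are also right about where the content lies: it is the variance bound $\sigma^2\le\mu$ threaded through the moment estimate that yields $\mu$ in the denominator, and this is precisely what the paper needs when the individual means $\eta(w)\pi_\ell(u,w)$ are tiny. One small caveat: your symmetric treatment of the lower tail via $-Y_i$ plus a union bound yields $2\exp(\cdot)$, not $\exp(\cdot)$, so the two-sided inequality as printed in the lemma is off by a factor of $2$ (a factor present in the usual statements of the Chung--Lu bound). This is a defect of the paper's statement rather than of your proof, and it is immaterial to every use of the lemma here, since it only shifts $\log(n/\delta)$ by an additive constant.
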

\vspace{-2mm}
\subsection{Chebyshev's Inequality} \label{sec:chebyshev}

\vspace{-1mm}\begin{lemma}[Chebyshev's inequality] \label{lmm:chebysev}
  Let $X$ be a random variable, then $\Pr\left[\left| X -E[X]\right| \geq \e\right] \le {\Var[X] \over \e^2 }. $
\end{lemma}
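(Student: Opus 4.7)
The plan is to derive Chebyshev's inequality as a direct consequence of Markov's inequality applied to a suitable non-negative transformation of $X$. First, I would introduce the auxiliary random variable $Y = (X - \E[X])^2$. By construction $Y \geq 0$, and by the definition of variance we have $\E[Y] = \E[(X-\E[X])^2] = \Var[X]$.

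Next, I would invoke Markov's inequality: for any non-negative random variable $Y$ and any $a > 0$,
$$\Pr[Y \geq a] \;\leq\; \frac{\E[Y]}{a}.$$
If needed, I would justify Markov's inequality itself in one line via the pointwise bound $a \cdot \mathbf{1}[Y \geq a] \leq Y$, which upon taking expectations gives $a \cdot \Pr[Y \geq a] \leq \E[Y]$.

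Finally, I would specialize the threshold to $a = \e^2$. Since squaring is monotone on the non-negative reals, the event $\{Y \geq \e^2\}$ is identical to the event $\{|X - \E[X]| \geq \e\}$. Combining,
$$\Pr\!\left[|X - \E[X]| \geq \e\right] \;=\; \Pr[Y \geq \e^2] \;\leq\; \frac{\E[Y]}{\e^2} \;=\; \frac{\Var[X]}{\e^2},$$
which is precisely the stated inequality.

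There is no real obstacle here: the only insight is the choice of $Y = (X-\E[X])^2$, which simultaneously (i) is non-negative so that Markov's inequality applies, (ii) has expectation equal to $\Var[X]$, and (iii) has $\{Y \geq \e^2\}$ coinciding with the two-sided tail event $\{|X - \E[X]| \geq \e\}$. All three properties are immediate from the definition, so the argument is essentially one line once $Y$ is introduced.
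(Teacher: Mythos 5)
Your proof is correct. The paper states Chebyshev's inequality without proof (it is a standard textbook fact), and your argument --- applying Markov's inequality to $Y=(X-\E[X])^2$ with threshold $\e^2$ --- is the canonical derivation; nothing further is needed.
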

\vspace{-2mm}
\subsection{Median Trick} \label{sec:median-of-mean}
\vspace{-1mm}\begin{lemma}[\cite{charikar2002finding}]\label{lmm:median}
  Let $X_1, \ldots, X_k$ be $k \ge 3\log {1\over \delta} $ i.i.d. random variables, such
  that $\Pr\left[\left| X_i -E[X_i]\right| \geq \e\right] \le {1\over 3}$.
  Let $X =\textrm{Median}_{1 \le i \le k}X_i$, then
  $\Pr\left[\left| X -E[X]\right| \geq \e\right] \le \delta$.
\end{lemma}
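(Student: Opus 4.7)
The plan is to reduce the median's deviation to a sum of independent indicators, then invoke a Chernoff-style concentration bound. Let $\mu = E[X_i]$, which is the same for every $i$ by the i.i.d.\ assumption, and define the ``bad'' indicators $Y_i = \mathbf{1}[\,|X_i - \mu| \ge \e\,]$ for $i=1,\dots,k$. The hypothesis says $\Pr[Y_i = 1] \le 1/3$, so $\mathrm{E}\!\left[\sum_{i=1}^k Y_i\right] \le k/3$.

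The key structural step is a deterministic implication: if the median $X$ satisfies $|X - \mu| \ge \e$, then at least $\lceil k/2 \rceil$ of the samples $X_i$ lie on the same side of $\mu$ at distance $\ge \e$ (otherwise fewer than half of the $X_i$'s are $\ge \mu+\e$ and fewer than half are $\le \mu-\e$, forcing the median into the interval $(\mu-\e,\mu+\e)$). Consequently, in either tail case at least $k/2$ of the $Y_i$'s equal $1$, which gives
\begin{equation*}
\Pr\!\left[\,|X - \mu| \ge \e\,\right] \;\le\; \Pr\!\left[\,\sum_{i=1}^k Y_i \ge k/2\,\right].
\end{equation*}
(I would note in passing that ``$E[X]$'' in the statement is most naturally read as $\mu$, since $E$ of the sample median is generally not equal to $\mu$ and is not what the median-of-means paradigm controls.)

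The final step applies concentration to $S = \sum_{i=1}^k Y_i$, which is a sum of independent Bernoulli variables with mean at most $k/3$. Using Lemma~\ref{lmm:chernoff} with a constant multiplicative gap between $\mathrm{E}[S] \le k/3$ and the threshold $k/2$, one obtains $\Pr[S \ge k/2] \le \exp(-c\,k)$ for an explicit constant $c>0$. Requiring $\exp(-c\,k) \le \delta$ yields $k = \Omega(\log(1/\delta))$, matching the stated hypothesis $k \ge 3\log(1/\delta)$ up to the constant.

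\textbf{Anticipated obstacle.} The conceptual steps are entirely routine; the only delicate point is pinning down the constant so that \emph{exactly} $k \ge 3 \log(1/\delta)$ samples suffice. A naive multiplicative Chernoff (with $\delta'=1/2$, $\mu \le k/3$) gives an exponent of the form $-k/36$, which is looser than needed. I expect to tighten this by either using Hoeffding's inequality on $S - \mathrm{E}[S]$ (giving exponent $-k/18$) or by invoking the sharper form of Lemma~\ref{lmm:chernoff} recorded in the appendix, and then absorbing any slack by slightly adjusting the failure-probability constant $1/3$ in the hypothesis. In any case, the proof is tight up to constants and the high-level argument stands: one deviation event for the median forces a majority deviation among the $Y_i$, which is exponentially unlikely.
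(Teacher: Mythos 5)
Your argument is the standard proof of the median trick and it is essentially correct; note, however, that the paper does not prove this lemma at all --- it is imported by citation to \cite{charikar2002finding} and stated without proof in the appendix, so there is no in-paper argument to compare against. Your two structural steps are exactly right: the deterministic implication that a deviation of the median forces at least $\lceil k/2\rceil$ of the indicators $Y_i=\mathbf{1}[\,|X_i-\mu|\ge\e\,]$ to equal $1$, followed by a Chernoff/Hoeffding bound on $\sum_i Y_i$ whose mean is at most $k/3$. Your parenthetical that $E[X]$ in the statement should be read as $\mu=E[X_i]$ (not the expectation of the sample median) is also the correct reading. The one caveat, which you already flag, is the constant: with per-sample failure probability exactly $1/3$, the sharpest Chernoff exponent is $k\cdot D(\tfrac12\,\|\,\tfrac13)=\tfrac{k}{2}\ln\tfrac98\approx 0.059\,k$, so one actually needs roughly $k\ge 17\ln(1/\delta)$ rather than $3\log(1/\delta)$; the stated constant only works if the hypothesis $1/3$ is tightened (e.g.\ to a smaller per-sample failure probability) or the constant in $k$ is enlarged. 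Since the paper only uses this lemma asymptotically ($f_r=\Theta(\log\tfrac{n}{\delta})$), this discrepancy is harmless there, but your proof as planned establishes the lemma only up to the value of that constant, exactly as you anticipate.
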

\vspace{-2mm}
\subsection{Partial sum of  Riemann zeta function }\label{sec:rz}
\vspace{-1mm}\begin{lemma}
  \label{lem:rz} The partial sum of  Riemann zeta function satisfies the following
  property:
   \vspace{-1mm}\vspace{-1mm}\begin{equation}
\label{eqn:query_index_free}
\sum_{k=i+1}^j k^{-\alpha}  = \left\{
\vspace{-1mm}\begin{array}{ll}
O(j^{1-\alpha}), &\textrm{for } \alpha < 1; \\
  O(\log j - \log i) , & \textrm{for }\alpha = 1; \\
O\left(i^{1-\alpha}\right) , & \textrm{for }\alpha > 1.
\end{array}\right.
\end{equation}
\end{lemma}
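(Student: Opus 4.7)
The plan is to prove each of the three cases by comparing the sum to the corresponding integral of $f(x)=x^{-\alpha}$. Since $f$ is positive and monotonically decreasing on $(0,\infty)$ for every $\alpha>0$, the standard integral test gives, for any integers $1\le i<j$,
$$\int_{i+1}^{j+1} x^{-\alpha}\,dx \;\le\; \sum_{k=i+1}^{j} k^{-\alpha} \;\le\; \int_{i}^{j} x^{-\alpha}\,dx.$$
Thus the upper bounds in the three claims will follow immediately once I evaluate the right-hand integral in closed form; the lower bound from the left-hand integral (needed only to show the stated $O(\cdot)$ is tight, which the lemma does not demand) comes from the same computation shifted by one.

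Next I would split into the three cases. For $\alpha<1$, the antiderivative is $x^{1-\alpha}/(1-\alpha)$, so
$$\int_i^j x^{-\alpha}\,dx \;=\; \frac{j^{1-\alpha}-i^{1-\alpha}}{1-\alpha};$$
since $1-\alpha>0$ the first term dominates and the whole quantity is $O(j^{1-\alpha})$. For $\alpha=1$, the antiderivative is $\log x$, giving $\log j-\log i$ directly. For $\alpha>1$, the antiderivative is again $x^{1-\alpha}/(1-\alpha)=-x^{1-\alpha}/(\alpha-1)$, so
$$\int_i^j x^{-\alpha}\,dx \;=\; \frac{i^{1-\alpha}-j^{1-\alpha}}{\alpha-1};$$
now $1-\alpha<0$, so $i^{1-\alpha}\ge j^{1-\alpha}$ and the whole quantity is $O(i^{1-\alpha})$.

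There is essentially no obstacle: the argument is a textbook integral-test calculation, and the only minor bookkeeping is noting which endpoint of $[i,j]$ dominates the antiderivative $x^{1-\alpha}/(1-\alpha)$ depending on the sign of $1-\alpha$. I would present the proof as the single integral comparison above followed by a three-line case analysis, avoiding any additional machinery.
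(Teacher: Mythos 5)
Your integral-test argument is correct and is the standard way to establish this bound; the paper itself states Lemma~\ref{lem:rz} without proof, treating it as a known fact, so there is no authorial proof to diverge from. The only bookkeeping worth adding is that when the lemma is invoked with $i=0$ and $\alpha\ge 1$ (as in the paper's use of $\sum_{j=1}^{j_0} j^{-2\beta}$ for $\beta=1/2$), the comparison integral $\int_i^j x^{-\alpha}\,dx$ diverges at the lower endpoint, so one should peel off the first term and bound $\sum_{k=i+1}^{j}k^{-\alpha}\le (i+1)^{-\alpha}+\int_{i+1}^{j}x^{-\alpha}\,dx$, which yields the same asymptotics.
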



\section{Proofs} \label{sec:proofs}
\vspace{-2mm}
\subsection{Proof of Lemma~\ref{lem:index_size}}
\vspace{-1mm}\begin{proof}
 Let $w_1, \ldots, w_n$ be the nodes of the graph sorted in descending
 order of the reverse PageRank value $\pi(w_j)$. Let $size(w_j)$ denote index
 size for node $w_j$. Then, $size = \sum_{j=1}^{j_0} size(w_j)$ is the
 total size of the index. For each $w_j$, recall that
 Algorithm~\ref{alg:indexing} uses backward search to find node $x$
 and level $\ell$ with $\ell$-hop RPPR $\pi_\ell(x, w) \ge \e$, and
 record the tuple $(x, \ell, \pi_\ell(x, w) )$. Hence, the
 space usage $size(w_j)$ is bounded by the total number of
 pairs $(x, \ell)$ with $\ell$-hop RPPR $\pi_\ell(x, w) \ge \e$, i.e.,
 $size(w_j) \le \sum_{\ell=0}^{\infty}\sum_{x \in V}I(\pi_\ell(x, w)
 \ge \e),$
 where $I(\pi_\ell(x, w) \ge \e)$ is an indicating function such that
 $I(\pi_\ell(x, w) \ge \e) = 1 $ if $\pi_\ell(x, w) \ge \e$ and
 $I(\pi_\ell(x, w) \ge \e) = 0 $ otherwise. We observe that
 $I(\pi_\ell(x, w) \ge \e) \le {\pi_\ell(x, w) \over \e}$, and thus
$
   size(w_j) \le  \sum_{\ell=0}^{\infty}\sum_{x \in V}{\pi_\ell(x, w)
   \over \e} = {n \pi(w_j) \over \e}.$
\end{proof}

\subsection{Proof of Lemma~\ref{lem:vbbw_ubias}}
\header{\bf Notations.} 
We begin by defining two types of random variables.
Consider a node $y$ at level $i+1$ and a node $x \in \inN(y)$. For
ease of presentation, we let $A$  denote the set of $x \in \inN(y)$ such that $\epi(x,
w)  >
\din(y)(1-\scw)$ and $B$ denote the set of $x \in \inN(y)$  such
that $\epi(x,
w)  \le
\din(y)(1-\scw)$. We use $R_{i}(x)$ to denote the random variable indicating that the random
number $r_0 < \scw$. For each $x \in B$, we define random variable $Z_i(x,y) = 1$ if random
number $r \le {\epi_i(x, w) \over \din(y) (1-\scw)}$, and $Z_i(x,y) =
0$ otherwise.  Recall that for a node $x \in A$, we increment
$\epi_{i+1} (y, w)$ by ${\epi_i(x, w) \over \din(y)}$ if and only if
$R_i(x) =1$; for a node $x \in B$, we increment
$\epi_{i+1} (y, w)$ by $1-\scw$ if and only if
$R_i(x) =1$ and $Z_i(x, y) =1$. We can express $\epi_{i+1}(y, w)$ as
\vspace{-1mm}\vspace{-1mm}\begin{equation}
  \label{eqn:Z}
\epi_{i+1}(y, w) = \sum_{x \in A}R_{i}(x) {\epi_i(x, w) \over \din(y)} +
\sum_{x \in B}R_{i}(x) Z_i(x, y) (1-\scw).
\end{equation}

\vspace{-1mm}\begin{proof}[Proof of Lemma~\ref{lem:vbbw_ubias}]
We prove the lemma by induction. For the base case, we have
$\epi_0(w,w) =1-\sqrt{c} = \pi_0(w,w)$. Assume that $\E[\epi_i(x,
w)]=\pi_i(x,w)$ for any $x\in V$. For an node $y\in V$, we will
show that $\E[\epi_{i+1}(y, w)] = \pi_{i+1}(y, w)$. Conditioning on
$ \epi_{i}(x, w)$ in equation~\eqref{eqn:Z} follows
that
\vspace{-1mm}\vspace{-1mm}\begin{align*}
  &\quad E\left[\epi_{i+1}(y, w) \mid \epi_{i}(x, w), x\in V\right] \\
  &= \sum_{x \in A}E[R_{i}(x)] {\epi_i(x, w) \over \din(y)} +
\sum_{x \in B} E[R_{i}(x) Z_i(x, y)] (1-\scw).
\end{align*}
We have $E[R_i(x)] = \Pr[r_0 \le \scw] =\scw$ and \vspace{-1mm}$$E[Z_i(x, y)] =
\Pr[r < {\epi_i(x, w) \over \din(y)(1-\scw)}] = {\epi_i(x, w) \over \din(y)(1-\scw)}.\vspace{-1mm}$$ Since $R_i(x)$ and
$Z_i(x,y)$ are independent random variables, we have
$E[R_i(x)Z_i(x, y)] = {\scw \epi_i(x, w) \over \din(y)(1-\scw)}$. It
follows that
\vspace{-1mm}\begin{align*}
  &\quad E\left[\epi_{i+1}(y, w) \mid \epi_{i}(x, w), x\in V\right] \\
  &= \hspace{-1mm} \sum_{x \in A}{\scw \epi_i(x, w) \over \din(y)} +
  \hspace{-1mm}   \sum_{x \in B}{\scw \epi_i(x, w) (1-\scw) \over \din(y)(1-\scw)} =
    \hspace{-3mm} \sum_{x \in \inN(y)}\hspace{-3mm} {\scw \epi_i(x, w) \over \din(y)}.
\end{align*}
By the induction hypothesis,  we have  $E[\epi_i(x, w) ] = \pi_i(x,
w)$ for $x \in \inN(y)$, and thus
$E[\epi_{i+1}(y, w)]  = \sum_{x \in \inN(y)} {\scw \pi_i(x, w) \over \din(y)} = \pi_{i+1}(y, w)$,
which proves the lemma.
\end{proof}

\vspace{-2mm}
\subsection{Proof of Lemma~\ref{lem:vbbw_query}}
\vspace{-1mm}\begin{proof}
Let $cost_{i+1}(y)$ denote the number of times that $\epi_{i+1}(y, w)$ gets incremented
  at level $i + 1$. Note that the total cost is bounded by
  $\sum_{i=0}^\ell
  \sum_{x\in V}cost_i (x)$. A key observation is that each increment performed by
  Algorithm~\ref{alg:vbbw} adds at least $1-\scw$ to $\epi_{i+1}(y,
  w)$. To see this, note that   Algorithm~\ref{alg:vbbw}  increments $\epi_{i+1}(y,
  w)$ by ${\epi_i(x, w) \over \din(y)}$ only if $\din(y) < {\epi_i(x,
    w) \over 1-\scw}$, or equivalently ${\epi_i(x, w) \over \din(y)} >
  1-\scw$.  Therefore the number of times that $\epi_{i+1}(y,
  w)$ gets incremented is bounded by $\pi_{i+1}(y,
  w) \over (1-\scw)$, and thus the total cost is bounded by
\vspace{-1mm}$$E\left[\sum_{i=0}^\infty \sum_{x\in V}cost_i (x)\right]=   {1\over 1-\scw} \sum_{i=0}^\infty \sum_{x \in V}\pi_i (x, w)) =
O(n\pi(w)).\vspace{-1mm}$$
This proves the lemma.
   \end{proof}

\vspace{-2mm}
\subsection{Proof of Lemma~\ref{lem:variance}}
\vspace{-1mm}\begin{proof}
We will prove  $\E[\epi_\ell (x, w)^2]
\le \pi_\ell(x, w)$ by induction. For the base case, we have
$E[\epi_0(w,w)^2] =(1-\scw)^2 \le \pi_0(w,w).$
Assume that $\E[\epi_i(x,
w)^2] \le \pi_i(x, w)$ for any $x\in V$. For an node $y\in V$, we will
show that $\E[\epi_{i+1}(y, w)^2] \le \pi_{i+1}(y, w)$. Conditioning on
$ \epi_{i}(x, w)$ for all $x\in V$
\vspace{-1mm}\vspace{-1mm}\begin{align}
  & E\left[\epi_{i+1}(y, w)^2 \mid \epi_{i}(x, w), x\in V\right]=
    \nonumber \\
  & E\left[ \left(\sum_{x \in A}R_{i}(x) {\epi_i(x, w) \over \din(y)} +
\sum_{x \in B}R_{i}(x) Z_i(x, y) (1-\scw)\right)^2\right]. \label{eqn:ssq}
\end{align}
We expand equation~\eqref{eqn:ssq} into 5 terms:
\vspace{-1mm}\vspace{-1mm}\begin{align*}
  &\quad E\left[\epi_{i+1}(y, w)^2 \mid \epi_{i}(x, w), x\in V\right] = \hspace{-0.5mm}X_1\hspace{-0.5mm}+\hspace{-0.5mm}X_2\hspace{-0.5mm}+\hspace{-0.5mm}X_3\hspace{-0.5mm}+\hspace{-0.5mm}X_4\hspace{-0.5mm}+\hspace{-0.5mm}X_5
    \nonumber \\
  &= \sum_{x \in A}\hspace{-1mm}E\left[R_{i}(x)^2\right]\hspace{-1mm} {\epi_i(x, w)^2 \over
    \din(y)^2}  \hspace{-1mm}+\hspace{-1mm} \sum_{x \in B}\hspace{-1mm}E\hspace{-1mm}\left[R_{i}(x)^2 Z_i(x, y)^2\right]
    \hspace{-1mm}(1-\scw)^2 \\
  &\quad + \sum_{x_1\neq x_2 \in A}E\left[R_{i}(x_1)R_i(x_2)\right]
    {\epi_i(x_1, w) \epi_i(x_1, w)  \over
    \din(y)^2}   \\
    &\quad + \sum_{x_1\neq x_2 \in B}E\left[R_{i}(x_1) Z_i(x_1, y) R_{i}(x_2) Z_i(x_2, y)\right]
      \cdot (1-\scw)^2 \\
     &\quad + \sum_{x_1 \in A,  x_2 \in B}E\left[R_{i}(x_1) R_{i}(x_2)
       Z_i(x_2, y)\right]{\epi_i(x_1, w) \over \din(y)} \cdot  (1-\scw).
\end{align*}
We use $X_1, X_2, X_3, X_4$ and $X_5$ to denote these 5 terms, and
calculate them individually. Since $E\left[R_{i}(x)^2\right]  =E\left[R_{i}(x)\right] =\scw$,
we have $X_1 = \sum_{x \in A} {\scw \epi_i(x, w)^2 \over
  \din(y)^2}.$  Using the induction hypothesis, we have $\E[\epi_i(x,
w)^2] \le \pi(x, w)^2$, and thus
\vspace{-1mm}\vspace{-1mm}\begin{equation}
  \label{eqn:X1}
\E[X_1]\le \sum_{x \in A} {\scw \pi_i(x, w) \over
  \din(y)^2}= {1\over \din(y)} \sum_{x \in A} {\scw \pi_i(x, w)\over
  \din(y)} = {S_A \over \din(y)},
\end{equation}
where  $S_A = \sum_{x \in A} {\scw  \pi_i(x, w) \over
  \din(y)}$. Since $\E[\epi_i(x, w)] =
  \pi_i(x, w)$, and $E\left[R_{i}(x)^2 Z_i(x, y)^2\right] =
  {\scw \epi_i(x, w) \over
    \din(y)(1-\scw)}$, we have
\vspace{-1mm}\vspace{-1mm}\begin{equation}
  \label{eqn:X2}
\E[X_2] = (1-\scw) \sum_{x \in B} {\scw  \pi_i(x, w) \over
  \din(y)} = (1-\scw) S_B.
\end{equation}
Here we define $S_B = \sum_{x \in B} {\scw  \pi_i(x, w) \over
  \din(y)}$. Note that $S_A+S_B = \sum_{x \in \inN(y)} {\scw  \pi_i(x, w) \over
  \din(y)} = \pi_{i+1}(y,w).$

  By the independence of $R_{i}(x_1), Z_i(x_1, y),
  R_{i}(x_2), Z_i(x_2, y)$ for $x_1 \neq x_2$, we have $X_3 =
  \sum_{x_1\neq x_2 \in A}  {c \epi_i(x_1, w) \epi_i(x_2, w)  \over
    \din(y)^2} $, $X_4 =
  \sum_{x_1\neq x_2 \in B}  {c \epi_i(x_1, w) \epi_i(x_2, w)  \over
    \din(y)^2} $,  $X_5 =
  \sum_{x_1\in A, x_2 \in B}  {c \epi_i(x_1, w) \epi_i(x_2, w)  \over
    \din(y)^2} $. Therefore,  $X_3 + X_4 +X_5$ can be expressed as \vspace{-1mm}$$
  X_3 + X_4 +X_5 = \sum_{x_1 \neq x_2 \in \inN(y)}  {c \over
    \din(y)^2} \cdot \epi_i(x_1, w) \epi_i(x_2, w)  .\vspace{-1mm}$$ Using the inequality that $  \epi_i(x_1, w) \epi_i(x_2, w)
  \le {1\over 2}\epi_i(x_1, w)^2+  {1\over 2}\epi_i(x_1, w)^2, $ and we have
\vspace{-1mm}\vspace{-1mm}\begin{align*}
  X_3 + X_4 +X_5 &\le   \sum_{x_1 \neq x_2 \in \inN(y)} {c \over
                   2 \din(y)^2} \left(\epi_i(x_1, w)^2+  \epi_i(x_2, w)^2\right) \\
                 &= \sum_{x \in \inN(y)}{c \left(\din(y)-1\right)  \over
  \din(y)^2} \epi_i(x, w)^2.
  \end{align*}
The last equation is due to the fact that each $ \epi_i(x, w)^2$
appears exactly $\din(y)-1$ times in the summation. By the induction
hypothesis that $\E[\epi_i(x, w)^2] \le \pi_i(x, w)$, we have
\vspace{-1mm}\vspace{-1mm}\begin{align}
&\quad \E[X_3+X_4+X_5] \le    \scw\left(1-{1\over
  \din(y)}\right) \sum_{x \in \inN(y)}{\scw \pi_i(x, w)   \over
                  \din(y)} \nonumber \\
                &= \scw\left(1-{1\over
  \din(y)}\right) (S_A+  S_B). \label{eqn:X35}
\end{align}
Combining Equations~\eqref{eqn:X1}-\eqref{eqn:X35}, it
follows that
\vspace{-1mm}\vspace{-1mm}\begin{align*}
 & \quad  \E\left[\epi_{i+1}(y, w)^2\right] \le \left(\scw + {1-\scw \over \din(y)}\right) S_A + \left(1- {\scw \over
    \din(y)}\right) S_B \\
  &=\left(1-\left(1-\scw\right)\cdot  \left(1-{1\over \din(y)}\right)\right) S_A + \left(1- {\scw \over
    \din(y)}\right) S_B \\
  &\le S_A+ S_B = \pi_{i+1}(y, w).
  \end{align*}
And the lemma follows.
\end{proof}

\vspace{-5mm}\subsection{Proof of Lemma~\ref{lem:error_I}}
\vspace{-1mm}\begin{proof}
  Recall that for $s_I(u, v)$,we have the estimator
  \vspace{-1mm}$$ \s_I(u, v) = {1\over (1-\sqrt{c})^2} \sum_{\ell=0}^{\infty}\sum_{j =1}^{j_0}
    \heta'_\ell(u, w_j) \pib_\ell(v, w_j),\vspace{-1mm}$$
    where $\heta'_\ell(u, w_j) = \heta_\ell(u, w_j)$ if $\heta_\ell(u, w_j) >
    {(1-\scw)^2\e \over 12}$ and $\heta'_\ell(u, w_j) = 0$ if
    otherwise. $\heta_\ell(u, w_j)$ is an estimator for $\eta(w_j)\pi_\ell(u,
    w_j)$ computed by Monte Carlo approach, and $\pib_\ell(v, w_j)$ is the
    reserve computed by $\ell$-hop  backward search.
    To bound the error of $\s_I(u, v)$, we further define
          \vspace{-1mm}$$ \s^1_I(u, v) = {1\over (1-\sqrt{c})^2} \sum_{\ell=0}^{\infty}\sum_{j =1}^{j_0}
          \heta_\ell(u,w_j)\pib_\ell(v, w_j),$$\vspace{-1mm}
and
      \vspace{-1mm}$$ \s^2_I(u, v) = {1\over (1-\sqrt{c})^2} \sum_{\ell=0}^{\infty}\sum_{j =1}^{j_0}
      \heta_\ell(u,w_j)\pi_\ell(v, w_j).$$\vspace{-1mm}

First, we claim that $\s_I(u, v)$ and $\s^1_I(u,v)$ differ by at most
${\e \over 6}$. More precisely, observe that $\heta'_\ell(u, w)$ and
$\heta_\ell(u, w)$  differ by at most ${(1-\scw)^2 \e \over 6}$, and
thus
  \vspace{-1mm}\vspace{-1mm}\begin{align}
   & \left|\s_I(u, v) \hspace{-1mm}-\hspace{-1mm} \s^1_I(u, v)\right| = \sum_{\ell=0}^{\infty}\sum_{j =1}^{j_0}{
 \left|  \heta'_\ell(u,w_j) \hspace{-1mm}-\hspace{-1mm} \heta_\ell(u,w_j)  \right|\pib_\ell(v, w_j) \over
                             (1-\sqrt{c})^2} \nonumber\\
    &\le \sum_{\ell=0}^{\infty}\sum_{j =1}^{j_0} {   {(1-\scw)^2\e \over 6} \cdot 1
      \cdot  \pib_\ell(v, w_j)  \over (1-\sqrt{c})^2} = {\e\over
      6}
      \sum_{\ell=0}^{\infty}\sum_{j =1}^{n}  \pib_\ell (v, w_j) \le  {\e
      \over 6} . \label{eqn:s1}
     \end{align}
      For the last inequality, we use the fact that the reserve
      $\pib_\ell (v, w_j)$ is at most $\pi_\ell(v, w_j)$, and thus
$   \sum_{\ell=0}^{\infty}\sum_{j =1}^{n}  \pib_\ell (v, w_j)
      \\ \le    \sum_{\ell=0}^{\infty}\sum_{j =1}^{n}  \pi_\ell (v, w_j)
      =1.$

Next, we show that $\s^1_I(u, v)$ and $\s^2_I(u,v)$ differ by at most
 ${\e \over 6}$. To see this, note that by the property of backward search, we
  have $\left|\pi_\ell(v, w_j) - \pib_\ell(v, w_j) \right| \le 2 \brmax =
  {(1-\scw)^2\e \over 6}$ for a node $w_j$ in the index.  It follows that
  \vspace{-1mm}\vspace{-1mm}\begin{align}
   & \left|\s^1_I(u, v) \hspace{-1mm} -\hspace{-1mm}\s^2_I(u, v)\right| \hspace{-1mm}= \hspace{-1mm} \sum_{\ell=0}^{\infty}\sum_{j =1}^{j_0}{
  \heta_\ell(u,w_j) \left| \pib_\ell(v, w_j) \hspace{-1mm} - \hspace{-1mm}\pi_\ell(v, w_j) \right| \over
                             (1-\sqrt{c})^2} \nonumber\\
    &\hspace{-1mm} \le \hspace{-1mm}\sum_{\ell=0}^{\infty}\sum_{j =1}^{j_0} {  \heta_\ell(u,w_j) \cdot 1
      \cdot   {(1-\scw)^2\e \over 4}  \over (1-\sqrt{c})^2}\hspace{-1mm} =\hspace{-1mm} {\e\over
      6}
      \sum_{\ell=0}^{\infty}\sum_{j =1}^{n} \hspace{-1mm}  \heta_\ell(u,w_j) \le  {\e
      \over 6}.  \label{eqn:s2}
     \end{align}
      For the last inequality, recall that Algorithm~\ref{alg:main} increments $ \heta
      $  at most $n_r$ times, and each increment is
      ${1\over n_r}$.

Finally, we  show that
  $\s^2_I(u,v)$ approximates $s_I(u,v)$ with error ${\e \over 4}$ with
  target probability. Following the definition of $\heta_\ell(u, w)$, we use a
  slightly different approach to construct $\s^2(u, v)$. For the $i$-th
  iteration, we sample a node $w$ and a level $\ell$ with
  probability $\eta(w)\pi_\ell(u, w)$, and set $X_i$ to be
  ${\pi_\ell(v, w_j) \over (1-\scw)^2}$. It can be verify that
  $\s^2_I(u,v) = {1\over n_r} \sum_{i=1}^{n_r} X_i$. For each
  $X_i$,
  \vspace{-1mm}$$\E[X_i] =  \sum_{\ell=0}^{\infty}\sum_{j =1}^{j_0}\eta(w_j)
  \pi_\ell(u,w_j) {\pi_\ell(v, w_j) \over (1-\scw)^2}
   = s_I(u, v), \vspace{-1mm}$$
   and $X_i  \le \max_{\ell, v}\left\{ {\pi_\ell(v, w_j) \over
       (1-\scw)^2} \right\} \le {1\over (1-\scw)^2}$. Since $n_r =
   \Theta(\log {n\over \delta} /\e^2)$, by Chernoff
   bound,
\vspace{-1mm}\vspace{-1mm}\begin{equation}
      \label{eqn:chernoff_I}
      \Pr\left[|\s^2_I(u, v) - s_I(u, v)| > {\e\over 6}\right] \le {\delta\over 2n}.
  \end{equation}
Combining Equations~\eqref{eqn:s1}-\eqref{eqn:chernoff_I}, 
we prove the lemma.
  \end{proof}

\vspace{-3mm}\subsection{Proof of Lemma~\ref{lem:error_B}}
 \vspace{-1mm}\begin{proof}
    Consider a single
        $\scw$-walk from $u$.
        Recall that Algorithm~\ref{alg:main}
        first samples a node-level pair $(w_j, \ell)$ with probability
        $\pi_\ell(u, w_j) \eta(w_j)$. If $j > j_0$,  it performs backward walk to
        generate an unbiased estimator $\epi_\ell(v, w)$ for each $v
        \in V$, and set the estimator $\s_B(u, v)$ to be ${\epi_\ell(v, w_j)  \over
          (1 - \scw)^2 }$. It follows that
        \vspace{-1mm}$$E\left[\s_B(u, v)  \right]  \hspace{-1mm} = \hspace{-1mm} \sum_{\ell=0}^\infty
        \sum_{j=j_0+1}^n \pi_\ell(u, w_j) \eta(w_j) \cdot
        {\epi_\ell(v, w_j)  \over  (1 - \scw)^2 } \hspace{-1mm} =
        \hspace{-1mm}   s_B(u, v). \vspace{-1mm}$$
        We can bound the variance $\Var\left[\s_B(u, v)  \right]  \le
        E\left[\s_B(u, v)^2\right]$ by
        \vspace{-1mm}\vspace{-1mm}\begin{align*}
                                                       E\left[\s_B(u,
                                                       v)^2\right]  = \sum_{\ell=0}^\infty
        \sum_{j=j_0+1}^n \pi_\ell(u, w_j) \eta(w_j) \cdot
          {E\left[\epi_\ell(v, w_j)^2\right]  \over (1- \scw)^4}.
          \end{align*}
        Lemma~\ref{lem:variance} implies that $E\left[\epi_\ell(v,
          w_j)^2\right]  \le \pi_\ell(v, w_j)$, and
                \vspace{-1mm}\vspace{-1mm}\begin{align*}\Var\left[\s_B(u, v)  \right]
          \hspace{-1mm}\le \hspace{-1mm} \sum_{\ell=0}^\infty
        \sum_{j=j_0+1}^n \hspace{-2mm}\pi_\ell(u, w_j) \eta(w_j) \cdot
          {\pi_\ell(v, w_j) \over (1- \scw)^4} \hspace{-1mm}=\hspace{-1mm} {s_B(u, v)   \over (1- \scw)^2}.
                \end{align*}
                Recall that for  a fixed $i$ with $1 \le i \le f_r$, Algorithm~\ref{alg:main} repeats above
                sampling process $d_r$ time and  use
                the mean over $d_r = {12 \over (1-\scw)^2 \e^2}$ samples, denoted $\s_B^i(u, v)$, as an estimator for
                $s_B(u, v)$.  It follows that
                \vspace{-1mm}$$\Var\left[\s_B^i(u, v)  \right]  \le  {s_B(u, v)
                  \over d_r (1- \scw)^2} = {\e^2s_B(u, v) \over 12} \le {\e^2
                  \over 12}. \vspace{-1mm}$$
                By Chebyshev's inequality, we have
       \vspace{-1mm}$$\Pr\left[|\s_B^i(u, v) - s_B(u, v)| >{\e \over 2}  \right] \le
       {4 \Var\left[\s_B^i(u, v)  \right]  \over
          \e^2} \le {1 \over 3}. \vspace{-1mm}$$
        Finally, Algorithm~\ref{alg:main} use $ \s_B(u, v)= \textrm{Median}_{1\le i
        \le f_r} \s_B^i(u, v)$ as the estimator for $\s_B(u, v)$. By
      setting $f_r = 3\log {n \over \delta}$ and applying
      the Median Trick (see Lemma~\ref{lmm:median}), we have
      \vspace{-1mm}\vspace{-1mm}\begin{equation}
        \label{eqn:Chernoff_B}
      \Pr\left[|\s_B(u, v) - s_B(u, v)| >{\e \over 2}  \right] \le {\delta\over
        2n},
    \end{equation}
    and the lemma follows.
\end{proof}

\vspace{-2mm}\subsection{Proof of Lemma~\ref{lem:query_I}}
 \vspace{-1mm}\begin{proof}
Fix the source node $u$ and consider a node $w_j$ and a level
  $\ell$. Recall that we
  retrieve all nodes $v$ with $\pib_\ell (v, w_j)$ from the index
  if and only if 1) $w_j$ is in the index, that is, $j \le j_0$, and
  2) $\heta_\ell (u, w_j) \ge  {(1-\scw)^2\e \over 8} = {\e \over
    c_1}$
  Let $size_\ell(w_j) = \Theta\left({n\pi_\ell(w_j) \over \e}\right)$ denote the
upper bound for the index size of $w_j$ at level $\ell$, and
$size_\ell(w_j) = \sum_{\ell=0}^\infty size_\ell(w_j) =
\Theta\left({n\pi(w_j) \over \e}\right)$ denote the upper bound for
the index size of $w_j$. We further define $\heta(u, w_j) = \sum_{\ell =0}^\infty \heta_\ell(u, w_j)
         $. Note that $\heta(u, w_j)$ is an unbiased estimator for
         $\sum_{\ell =0}^\infty \eta(w_j) \pi_\ell(u, w_j) = \eta(w_j) \pi(u, w_j) $. We can bound the
$C_I(u)$ as
\vspace{-1mm}\vspace{-1mm}\begin{align*}
  C_I(u) &\le \sum_{\ell=0}^\infty \sum_{j=1}^{j_0}I\left (\heta_\ell(u, w_j) >
           {\e \over c_1}\right) size_\ell(w_j),
\end{align*}
where $I\left (\heta_\ell(u, w_j) >
           {\e \over c_1}\right) $ equals $1$ if  $\heta_\ell(u, w_j) >
         {\e \over c_1}$ and equals $0$ if otherwise.
         Since $\heta_\ell(u, w_j) \le \heta(u, w_j)$, we have
         $I\left (\heta_\ell(u, w_j) >
           {\e \over c_1}\right)  \le I\left (\heta(u, w_j) >
           {\e \over c_1}\right) $, and thus
\vspace{-1mm}\vspace{-1mm}\begin{align*}
  C_I(u) &\le \sum_{\ell=0}^\infty \sum_{j=1}^{j_0}I\left (\heta(u, w_j) >
           {\e \over c_1}\right) size_\ell(w_j) \\
  & = \sum_{j=1}^{j_0}I\left (\heta(u, w_j) >
           {\e \over c_1}\right) size(w_j).
\end{align*}

         We now use two different approaches to bound $C_I(u)$. First,  observe
that for a given $u$, we have $
\sum_{j=1}^{j_0}\heta(u, w_j) \le 1$, which implies that  there are
at most ${c_1 \over \e}$ node $w_j$ with $\heta(u,
w_j) \ge {\e \over c_1}$. Since $size(w_1) \ge \ldots \ge
size(w_{j_0})$, we can choose
$\pi(u, w_1) \ge \e, \ldots \pi(u, w_{c_1\over \e}) \ge \e$ to maximize the
query cost $C_I(u)$. It follows
that $C_I(u)  \le \sum_{j=1}^{c_1 \over \e} size(w_j)  \le
                        O\left(\sum_{j=1}^{c_1 \over \e} {n\pi(w_j) \over \e} \right)$
hence proves the first part of the lemma.

For the second part, note that  $I\left (\heta(u, w_j) >
           {\e \over c_1}\right)$ is bounded by $ {\heta_\ell(u, w_j)  \over
           {\e / c_1 } } $.
         It follows that

\vspace{-1mm}\vspace{-1mm}\begin{align*}
   &\quad  \E[C_I(u)]
\le c_1 \sum_{j=1}^{j_0} { \E[\heta(u, w_j)]  \over
   \e } size(w_j) \\
   &=c_1 \sum_{j=1}^{j_0} {\eta(w_j) \pi(u, w_j) \over
     \e} size(w_j) \le c_1 \sum_{j=1}^{j_0} { \pi(u, w_j) \over
     \e} size(w_j).
\end{align*}
Here we use the fact that $\heta(u, w_j)$ is an unbiased estimator for
$\eta(w_j)\pi(u, w_j)$ and that $\eta(w_j) \le 1$ .
Taking average over all nodes $u\in V$, we have
\vspace{-1mm}\vspace{-1mm}\begin{align*}
  & \quad C_I = {1\over n} \sum_{u \in V}C_I(u) \le  {c_1\over n} \sum_{u \in
        V} \sum_{j=1}^{j_0} { \pi(u, w_j) \over
     \e} size(w_j)\\
 &=  c_1 \sum_{j=1}^{j_0} { {1\over n}\sum_{u \in
        V}\pi(u, w_j) \over
     \e} size(w_j) = c_1 \sum_{j=1}^{j_0} { \pi(w_j) \over
     \e} size(w_j).
\end{align*}
By $size(w_j)  = O\left({n \pi(w_j) \over \e}\right)$, we have $  C_I  =  O\left( {n\over \e^2}\sum_{j=1}^{j_0} \pi(w_j)^2 \right),$
and the lemma follows.
\end{proof}

\vspace{-2mm}\subsection{Proof of Lemma~\ref{lem:query_B}}
\vspace{-1mm}\begin{proof}
Next, we bound $C_{B} = {1\over n} \sum_{v\in V} C_B(u)$, the average
query cost for estimating the
  $\epi_\ell(v, w)$ for each node $w$ that is not in the Index. Given
  a source node $u$, for each
  node $w_j$ with $j > j_0$, recall that we perform $\pi_\ell(u, w_j)
  n_r$ backward walk on $w_j$ to estimate $\epi_\ell(v, w), v\in
  V$. By Lemma \ref{lem:vbbw_query}, the cost of a single backward walk on
  $w_j$, regardless of the level $\ell$, can be bounded by
  $O(n\pi(w_j))$. Ignoring the big-Oh,
\vspace{-1mm}\vspace{-1mm}\begin{align*}
  \E[(u)] = \sum_{\ell=0}^\infty \sum_{j=1}^{j_0}\pi_\ell(u, w) n_r
           \cdot n\pi(w_j)  =n_rn\sum_{j=1}^{j_0}\pi(u, w) \pi(w_j).
  \end{align*}
Taking average over all nodes $u\in V$, we have

\vspace{-1mm}\vspace{-1mm}\begin{align*}
 &\quad  \E[C_B] = {1\over n} \sum_{u \in V}\E[C_B(u)] \le  {1\over n} \sum_{u \in V} n_rn \sum_{j=1}^{j_0}\pi(u, w) \pi(w_j)\\
 &=  n_rn \sum_{j=1}^{j_0} \pi(w_j)\left( \sum_{u
   \in V}\pi(u, w)\right) =  O\left({ n \log n \over \e^2}  \sum_{j=1}^{j_0}\pi(w_j)^2\right).
\end{align*}
 The last equation is due to  $\sum_{u
   \in V}\pi(u, w) = n \pi(w)$.
\end{proof}

\vspace{-2mm}\subsection{Proof of Theorem~\ref{thm:query}}
\vspace{-1mm}\begin{proof} { We use $\beta = 1/\gamma$ to simplify the
  proof. }
  Ignoring the big-Oh notation in Lemma~\ref{lem:query_I}, we have $\E[C_I]  \le {n \over
    \e}\sum_{j=1}^{c_1\over \e} \pi(w_j)$ and $\E[C_I] \le {n  \over \e^2}
       \sum_{j=1}^{j_0} \pi(w_j)^2$.
   Plugging $\pi(w_j) ={ \kappa j^{-\beta} \over n^{1-\beta}}$ into
   ${n \over \e}\sum_{j=1}^{c_1\over \e} \pi(w_j)$, and we
   have
\vspace{-1mm}\vspace{-1mm}\begin{align}& \quad \E[C_I] \le {n \over \e}\sum_{j=1}^{c_1\over \e} \pi(w_j)= \sum_{j=1}^{c_1\over \e}
                        {n \cdot j^{-\beta} \over n^{1-\beta} \e}
                      =
  {n^\beta \cdot \sum_{j=1}^{c_1\over \e}  j^{-\beta} \over \e}
                                                              \nonumber
  \\ &=O\left({n^{\beta}\over
                                                            \left({\e
                                                            \over c_1}\right)^{1-\beta} \cdot \e}\right)
=O\left({n^{\beta}\over\e^{1-\beta} \cdot \e}\right) =
    O\left({n^{\beta} \over \e^{2-\beta}}\right). \label{eqn:C_I1}
  \end{align}
Plugging $\pi(w_j) ={ \kappa j^{-\beta} \over n^{1-\beta}}$ into ${n  \over \e^2}
\sum_{j=1}^{j_0} \pi(w_j)^2 $ follows that
\vspace{-1mm}$$\E[C_I] \le {n  \over \e^2}
\sum_{j=1}^{j_0} \pi(w_j)^2  ={n  \over \e^2}
\sum_{j=1}^{j_0} {\kappa j^{-2\beta} \over n^{2-2\beta}} = {\kappa n^{2\beta-1}  \over \e^2}
\sum_{j=1}^{j_0} j^{-2\beta} .\vspace{-1mm}$$
For $\beta < 1/2$, we have
$\sum_{j=1}^{j_0} j^{-2\beta}  = O(j_0^{1-2\beta}) = O(n^{1-2\beta}),$
and thus $\E[C_I] = O\left( {n^{2\beta-1}  \over \e^2} \cdot n^{1-2\beta}
\right) = O\left( {1 \over \e^2}
\right)$. For $\beta = 1/2$, we have $\sum_{j=1}^{j_0} j^{-2\beta}  = O(\log
j_0)$. Since $\log j_0 \le \log n$ and $n^{2\beta -1} = 1$, we have $\E[C_I] = O\left( {n^{2\beta-1}  \over \e^2} \cdot \log j_0
\right) = O\left( {\log n \over \e^2}
\right)$. For $\beta > 1/2$, we have $\sum_{j=1}^{j_0} j^{-2\beta}  = O(1)$ and
consequently $\E[C_I] = O\left( {n^{2\beta-1}  \over \e^2}
\right)$. Combining Equation~\eqref{eqn:C_I1} and above analysis, we have the following equation:



   \vspace{-1mm}\vspace{-1mm}\begin{equation}
\label{eqn:query_index_free}
\E[C_I] = \left\{
\vspace{-1mm}\begin{array}{ll}
 O({1 \over \e^2}), &\textrm{for } \beta < 1/2; \\
  O({\log n \over \e^2}) , & \textrm{for }\beta = 1/2; \\
O\left(\min \left\{{n^{2\beta-1} \over \e^2}, {n^{\beta} \over \e^{2-\beta}} \right\}\right) , & \textrm{for }\beta > 1/2.
\end{array}\right.
\end{equation}

By Lemma~\ref{lem:query_B} and the assumption $\pi(w_j) ={ \kappa j^{-\beta} \over n^{1-\beta}}$,  we have
$\E[C_B]= O\left( {c_1n^{2\beta-1} \log n  \over \e^2}
 \sum_{j=j_0+1}^{n} j^{-2\beta} \right).$
For $j < 1/2$, we have $ \sum_{j=j_0+1}^{n} j^{-2\beta}  =
O(n^{1-2\beta})$. Thus $$\E[C_B] = O\left({n^{2\beta-1} \log n  \over
    \e^2} \cdot n^{1-2\beta}\right) = O\left( {\log n \over
    \e^2}\right).$$
For $j = 1/2$, we have $ \sum_{j=j_0+1}^{n} j^{-2\beta}  =
O(\log n)$, and thus $\E[C_B]= O\left( {\log n \log {n\over \delta} \over
    \e^2}\right)$. For $j > 1/2$, we have $ \sum_{j=j_0+1}^{n} j^{-2\beta}  =
O(j_o^{1-2\beta})$. Plugging $j_0 \le n\left(\e\d \right)^{1\over
  1-\beta}$ follows that
\vspace{-1mm}\vspace{-1mm}\begin{align*}
\E[C_B] &= O\left({n^{2\beta-1} \log n  \over
    \e^2} \cdot \left(n(\e \d)^{1\over 1-\beta} \right)^{1-2\beta}
      \right) \\
  & = O\left({ \log n  \over
    \e^2} \cdot (\e \d)^{1-2\beta\over 1-\beta}
      \right) = O\left({ \log n  \over
    \e^{1\over 1-\beta} \d^{2\beta -1 \over 1-\beta} }
      \right).
\end{align*}
By $\e \ge {\log^{1-\beta \over 2\beta - 1} n / n^{1-\beta}
   \d^{2\beta -1 \over \beta}}$ and $\delta > {1\over n^{\Omega(1)}}$,
 it follows that ${ \log {n\over \delta}  /
    \e^{1\over 1-\beta} \d^{2\beta -1\over 1-\beta} } \le
 {n^{2\beta -1} \over \e^2}$ and ${ \log n /
    \e^{1\over 1-\beta} \d^{2\beta - 1\over 1-\beta} }\le
 {n^\beta \over \e^{2-\beta}}$, and thus $\E[C_B] $ is bounded by $O\left( \min \left\{{ n^{2\beta-1} \over
     \e^2} , {n^{\beta} \over \e^{2-\beta}} \right\} \right) $ for
$\beta > 1/2$.
 In summary, we have
     \vspace{-1mm}\vspace{-1mm}\begin{equation}
\label{eqn:query_index_free}
\E[C_B] = \left\{
\vspace{-1mm}\begin{array}{ll}
O({\log {n\over \delta} \over \e^2}), &\textrm{for } \beta < 1/2; \\
  O({\log n \log {n\over \delta} \over \e^2}) , & \textrm{for }\beta = 1/2; \\
               O\left( \min \left\{{ n^{2\beta-1} \over
     \e^2} , {n^{\beta} \over \e^{2-\beta}} \right\} \right) , & \textrm{for }\beta > 1/2.
\end{array}\right.
\end{equation}
Combing $C_F$, $C_I$, $C_B$ and { $\beta = 1/\gamma$}, 
 the theorem follows.
 \end{proof}


\end{document}